\documentclass[11pt]{article}


\newcommand{\jmp}{J. Math. Phys.~}

\newcommand{\njp}{New. J. Phys.~}

\newcommand{\prl}{Phys. Rev. Lett.~}
\newcommand{\pra}{Phys. Rev. A~}
\newcommand{\pla}{Phys. Lett. A~}

\newcommand{\laa}{Lin. Alg. App.~}

\usepackage{young}
\usepackage[latin1]{inputenc}
\usepackage[linesnumbered,boxed,ruled,commentsnumbered]{algorithm2e}
\usepackage{appendix}
\usepackage{epstopdf}
\usepackage{xcolor}
\usepackage{subfigure}
\usepackage[T1]{fontenc}
\usepackage[sc]{mathpazo}
\usepackage{amsmath}
\usepackage{amssymb}
\usepackage{graphicx,color}
\usepackage{framed}
\usepackage{multirow}
\usepackage{enumerate}
\usepackage{amsthm}
\usepackage{amsfonts,mathrsfs}
\usepackage{geometry} 
\usepackage{eepic}
\usepackage{ifthen}
\usepackage[vcentermath]{youngtab}
\usepackage[unicode=true,pdfusetitle, bookmarks=true,bookmarksnumbered=false,bookmarksopen=false, breaklinks=false,pdfborder={0 0 0},backref=false,colorlinks=false] {hyperref}
\hypersetup{
colorlinks,linkcolor=myurlcolor,citecolor=myurlcolor,urlcolor=myurlcolor}
\definecolor{myurlcolor}{rgb}{0,0,0.7}

\geometry{
  verbose,
  dvips,
  width=422.695pt, marginparsep=0pt, marginparwidth=0pt,
  top=72.27pt, headheight=12pt, headsep=36pt, footskip=30pt, bottom=72.27pt
}


\usepackage{color}

\newcommand{\blue}{\textcolor{blue}}

\newcommand{\proj}[1]{| #1\rangle\!\langle #1 |}


\setlength{\topmargin}{0in} \setlength{\headheight}{0in}
\setlength{\headsep}{0.0in} \setlength{\textheight}{8.85in}
\setlength{\oddsidemargin}{0in} \setlength{\evensidemargin}{0in}
\setlength{\textwidth}{6.5in}


\usepackage{hyperref}
\hypersetup{pdfpagemode=UseNone}


\newcommand{\tinyspace}{\mspace{1mu}}

\newcommand{\op}[1]{\operatorname{#1}}

\newcommand{\abs}[1]{\left\lvert\tinyspace #1 \tinyspace\right\rvert}

\newcommand{\norm}[1]{\left\lVert\tinyspace #1 \tinyspace\right\rVert}

\renewcommand{\det}{\operatorname{det}}
\renewcommand{\t}{{\scriptscriptstyle\mathsf{T}}}

\newcommand{\setft}[1]{\mathrm{#1}}

\newcommand{\density}[1]{\setft{D}\left(#1\right)}

\newcommand{\pos}[1]{\setft{Pos}\left(#1\right)}

\renewcommand{\vec}{\op{vec}}
\newcommand{\im}{\op{im}}
\newcommand{\rank}{\op{rank}}
\newcommand{\spn}{\op{span}}
\newcommand{\sign}{\op{sign}}

\newcommand{\supp}{{\operatorname{supp}}}

\def\GL{\mathsf{GL}}

\def\SO{\mathsf{SO}}
\def\SU{\mathsf{SU}}

\def\Ad{\mathrm{Ad}}

\def\haar{\mathrm{Haar}}

\def \dif {\mathrm{d}}
\def \diag {\mathrm{diag}}

\def \re {\mathrm{Re}}
\def \im {\mathrm{Im}}

\def\complex{\mathbb{C}}
\def\real{\mathbb{R}}

\def\I{\mathbb{1}}

\def\zero{\mathbf{0}}

\def\bslambda{\boldsymbol{\lambda}}
\def\bssigma{\boldsymbol{\sigma}}

\newenvironment{mylist}[1]{\begin{list}{}{
    \setlength{\leftmargin}{#1}
    \setlength{\rightmargin}{0mm}
    \setlength{\labelsep}{2mm}
    \setlength{\labelwidth}{8mm}
    \setlength{\itemsep}{0mm}}}
    {\end{list}}


\def\ot{\otimes}

\newcommand{\inner}[2]{\langle #1 , #2\rangle}
\newcommand{\iinner}[2]{\langle #1 | #2\rangle}
\newcommand{\out}[2]{| #1\rangle\langle #2 |}
\newcommand{\Inner}[2]{\left\langle #1 , #2\right\rangle}
\newcommand{\Innerm}[3]{\left\langle #1 \left| #2 \right| #3 \right\rangle}

\newcommand{\Herm}{\mathrm{Herm}}


\newcommand{\pa}[1]{(#1)}
\newcommand{\Pa}[1]{\left(#1\right)}

\newcommand{\Br}[1]{\left[#1\right]}
\newcommand{\set}[1]{\{#1\}}
\newcommand{\Set}[1]{\left\{#1\right\}}



\newcommand{\ket}[1]{|#1\rangle}





\DeclareMathOperator{\trace}{Tr}
\newcommand{\ptr}[2]{\trace_{#1}\pa{#2}}
\newcommand{\Ptr}[2]{\trace_{#1}\Pa{#2}}
\newcommand{\tr}[1]{\ptr{}{#1}}
\newcommand{\Tr}[1]{\Ptr{}{#1}}


\newcommand{\Abs}[1]{\left|\tinyspace#1\tinyspace\right|}










\def\cA{\mathcal{A}}\def\cB{\mathcal{B}}\def\cC{\mathcal{C}}\def\cE{\mathcal{E}}
\def\cF{\mathcal{F}}\def\cG{\mathcal{G}}

\def\cP{\mathcal{P}}\def\cQ{\mathcal{Q}}\def\cT{\mathcal{T}}



\def\bsA{\boldsymbol{A}}\def\bsC{\boldsymbol{C}}
\def\bsF{\boldsymbol{F}}\def\bsG{\boldsymbol{G}}\def\bsH{\boldsymbol{H}}
\def\bsL{\boldsymbol{L}}\def\bsM{\boldsymbol{M}}\def\bsO{\boldsymbol{O}}
\def\bsP{\boldsymbol{P}}\def\bsQ{\boldsymbol{Q}}\def\bsT{\boldsymbol{T}}
\def\bsU{\boldsymbol{U}}\def\bsV{\boldsymbol{V}}\def\bsW{\boldsymbol{W}}\def\bsX{\boldsymbol{X}}\def\bsY{\boldsymbol{Y}}
\def\bsZ{\boldsymbol{Z}}

\def\bsa{\boldsymbol{a}}\def\bsb{\boldsymbol{b}}
\def\bsf{\boldsymbol{f}}
\def\bsn{\boldsymbol{n}}
\def\bsp{\boldsymbol{p}}\def\bsr{\boldsymbol{r}}
\def\bsu{\boldsymbol{u}}\def\bsv{\boldsymbol{v}}\def\bsw{\boldsymbol{w}}\def\bsx{\boldsymbol{x}}
\def\bsz{\boldsymbol{z}}




\def\sA{\mathscr{A}}\def\sB{\mathscr{B}}
\def\sI{\mathscr{I}}

\def\sR{\mathscr{R}}


\def\X{\textsf{X}}\def\Y{\textsf{Y}}


\def\bbC{\mathbb{C}}\def\bbE{\mathbb{E}}
                  
\def\bbN{\mathbb{N}}
\def\bbQ{\mathbb{Q}}\def\bbR{\mathbb{R}}

\def\bbZ{\mathbb{Z}}


\def\sfO{\mathsf{O}}

\def\sfU{\mathsf{U}}


\newtheorem{thrm}{Theorem}[section]
\newtheorem{lem}[thrm]{Lemma}
\newtheorem{prop}[thrm]{Proposition}
\newtheorem{cor}[thrm]{Corollary}
\theoremstyle{definition}
\newtheorem{definition}[thrm]{Definition}

\numberwithin{equation}{section}


\newcounter{questionnumber}


\begin{document}

\title{\bf\large A Survey of Bargmann Invariants: Geometric Foundations and Applications}

\author{\blue{Lin Zhang}\footnote{E-mail: godyalin@163.com}\quad and\quad \blue{Bing Xie}\\
   {\it\small School of Science, Hangzhou Dianzi University, Hangzhou 310018, PR~China}}

\date{}
\maketitle

\begin{abstract}
Bargmann invariants, a class of unitary-invariant quantities arising
from the overlaps of quantum state vectors, provide a profound and
unifying framework for understanding the relative geometry of the
projective Hilbert space. This survey offers a comprehensive
overview of their theoretical characterization and practical
applications, with particular emphasis on recent progress in
determining the full structure of their admissible set. The core of
this review demonstrates how these invariants serve as a powerful
tool for characterizing the intrinsic geometry of the space of
quantum states, leading to applications in determining local unitary
equivalence and constructing a complete set of polynomial invariants
for mixed states. On the operational side, we review the cycle-test
quantum circuit for the direct estimation of Bargmann invariants
without full state tomography, and demonstrate their utility in
witnessing quantum imaginary, discriminating local unitary
equivalence, and detecting entanglement via partial-transpose
moments---with explicit complete invariant sets provided for
two-qubit systems. By connecting fundamental geometric
classification with experimentally feasible estimation protocols,
this survey establishes Bargmann invariants as indispensable probes
of the relational, noncommutative, and geometric structure of
quantum states, and identifies key open problems for multipartite
high-dimensional systems and for quantum resource theories.
\end{abstract}

\newpage\tableofcontents\newpage

\section{Introduction}

Quantum mechanics, since its inception, has revealed a profound and
persistent geometric character underlying its probabilistic
formalism. This geometry is not merely an artifact of representation
but is fundamentally encoded in the complex Hilbert space structure,
manifesting in phenomena such as the Pancharatnam-Berry phase, which
arises from the cyclic evolution of a quantum state. At the heart of
understanding this intrinsic geometric structure lies a class of
gauge-invariant quantities known as Bargmann invariants. Defined by
the cyclic overlaps of quantum state vectors, these complex numbers
transcend the arbitrary choice of phase for individual states,
offering a direct window into the relational and shape-like
properties of the quantum state space itself.

First introduced by Valentin Bargmann in his seminal analysis of ray
spaces and symmetry operations \cite{Bargmann1964}, these invariants
have evolved from a mathematical cornerstone in the theory of
unitary representations to a versatile and powerful framework for
probing the informational geometry of quantum systems. The simplest,
non-trivial Bargmann invariant---the triple product of inner
products for three quantum states---is intimately linked to the
geometric phase, providing its foundational complex antecedent.
Higher-order invariants, constructed from larger sets of states,
encode increasingly detailed information about the arrangement of
states within the projective Hilbert space, effectively serving as
coordinates for its geometric features \cite{Zhang2025PRA1,
Pratapsi2025PRA}.

This survey aims to provide a comprehensive overview of Bargmann
invariants, with a particular focus on their pivotal role in shaping
and elucidating the informational geometry of quantum states. We
will trace their journey from a key insight in the theory of
geometric phases to a modern toolkit for quantum information
science. The discussion begins by elucidating their fundamental
definition, gauge invariance, and algebraic properties. We will then
demonstrate how these invariants serve as natural instruments for
characterizing the intrinsic geometry of both pure and mixed quantum
states, including an analysis of their admissible numerical ranges
\cite{Xu2026PLA}. This geometric perspective leads to significant
applications, including a powerful framework for determining local
unitary equivalence of states and constructing a complete set of
polynomial invariants for mixed-state spaces \cite{Zhang2025PRA2},
with connections to the characterization of finite frames under
projective unitary equivalence \cite{Chien2016}.

Furthermore, this review highlights the contemporary resurgence of
interest in Bargmann invariants driven by quantum information
theory. We explore their pivotal role in developing operational
methods for directly measuring relational information
\cite{Oszmaniec2024NJP} and geometric features, most notably in
protocols for detecting quantum entanglement without resorting to
full state tomography \cite{Zhang2025PRA2}. By circumventing the
need for a complete density matrix reconstruction, such approaches
underscore the practical power of these geometric quantities. This
operational viewpoint is deeply connected to advances in
multivariate trace estimation using quantum algorithms
\cite{Quek2024,Yosef2025SIAM,Azado2025} and the study of related
quantum channels \cite{Zhang2024epjp,Xie2026}. Recent work has also
expanded their purview to new quantum resources, including the
characterization and witnessing of quantum imaginarity
\cite{Fernandes2024PRL, Li2025PRA} and studies of coherence and
contextuality \cite{Wagner2025PhD}.

By synthesizing historical context with recent theoretical and
experimental advances, this survey seeks to elevate the perception
of Bargmann invariants from mathematical curiosities to essential
instruments. They are not merely invariant quantities but are
fundamental probes of the relational, non-commutative, and geometric
fabric of quantum mechanics, offering a unifying language that
connects foundational quantum geometry to cutting-edge quantum
information processing.

This paper is organized as follows. In Section~\ref{sect:2}, we
introduce the concept of joint equivalence, which classifies sets of
quantum states that are equivalent under local unitary
transformations. This framework is essential for the classification
problems we address. Section~\ref{sect:3} presents Bargmann
invariants, the central objects of our study. We define them, review
their properties, and provide necessary background on circulant
matrices and circulant quantum channels. Section~\ref{sect:4}
focuses on circulant Gram matrices resulted in Bargmann invariants.
We characterize the set $\cB_n|_{\mathrm{circ}}$ and study its
convexity. In Section~\ref{sect:5}, we present a main theoretical
result: a complete characterization of when
$\cB_n=\cB_n|_{\mathrm{circ}}$, identifying the conditions under
which every valid set of $n$th-order Bargmann invariants admits a
circulant Gram matrix representation. Section~\ref{sect:6} offers an
alternative characterization of $\cB^\circ_n(d)$, describing the set
of achievable invariants for a given Hilbert space dimension $d$.
Section~\ref{sect:7} shifts to practical considerations, presenting
methods for estimating Bargmann invariants in quantum circuits and
describing concrete protocols for near-term devices.
Section~\ref{sect:8} demonstrates applications of Bargmann
invariants: witnessing quantum imaginarity, discriminating locally
unitary orbits, and entanglement detection. We conclude by
summarizing our findings and discussing open questions.

\section{Joint equivalence}\label{sect:2}

Consider the set $\density{\bbC^d}$ of all quantum states acting on
$\bbC^d$, i.e. the set of all density matrices of size $d$. Unit
vector $\ket{\psi}$ in $\bbC^d$ will be called \emph{wave functions}
and its ranked-one projector $\psi\equiv\proj{\psi}$ will be called
\emph{pure state}. To further develop our framework, we need the
following very basic results.

\begin{prop}[\cite{Kadison1997}]\label{prop:Kadison}
If $\Inner{\cdot}{\cdot}$ is a definite inner product on a complex
vector space $V$ and $\bsu,\bsv\in V$, the following three
conditions are equivalent:
\begin{enumerate}[(i)]
\item $\norm{\bsu+\bsv}=\norm{\bsu}+\norm{\bsv}$
\item $\Inner{\bsu}{\bsv}=\norm{\bsu}\norm{\bsv}$;
\item one of $\bsu$ and $\bsv$ is a non-negative scalar
multiple of the other.
\end{enumerate}
\end{prop}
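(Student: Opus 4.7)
The plan is to prove the implications in the cycle (iii) $\Rightarrow$ (ii) $\Rightarrow$ (i) $\Rightarrow$ (iii); this establishes the three-way equivalence. The content of the proposition is essentially the equality case of the Cauchy--Schwarz inequality combined with the equality case of the triangle inequality, and the technical ingredient needed in the harder direction is the definiteness of the inner product, which is an explicit hypothesis.

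For the easy steps: if (iii) holds, write $\bsv = c\bsu$ with $c \geq 0$ (the symmetric case and the $\bsu = 0$ case are immediate), so that $\Inner{\bsu}{\bsv} = c\norm{\bsu}^2 = \norm{\bsu}\cdot(c\norm{\bsu}) = \norm{\bsu}\norm{\bsv}$, giving (ii). Assuming (ii), the inner product is a non-negative real number, so expanding $\norm{\bsu+\bsv}^2 = \norm{\bsu}^2 + 2\Re\Inner{\bsu}{\bsv} + \norm{\bsv}^2 = (\norm{\bsu}+\norm{\bsv})^2$ and taking the non-negative square root yields (i).

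The substantive direction is (i) $\Rightarrow$ (iii). Squaring (i) and cancelling $\norm{\bsu}^2 + \norm{\bsv}^2$ gives $\Re\Inner{\bsu}{\bsv} = \norm{\bsu}\norm{\bsv}$. Combining this with the Cauchy--Schwarz chain $\Re\Inner{\bsu}{\bsv} \leq |\Inner{\bsu}{\bsv}| \leq \norm{\bsu}\norm{\bsv}$ forces equality throughout, so $\Inner{\bsu}{\bsv} = \norm{\bsu}\norm{\bsv}$ is a non-negative real that saturates Cauchy--Schwarz. If $\bsu = 0$ then (iii) is trivial; otherwise I would set $\lambda = \norm{\bsv}/\norm{\bsu} \geq 0$ and compute $\norm{\bsv - \lambda\bsu}^2 = \norm{\bsv}^2 - 2\lambda\Re\Inner{\bsu}{\bsv} + \lambda^2\norm{\bsu}^2$, which simplifies to zero after substituting $\Re\Inner{\bsu}{\bsv} = \norm{\bsu}\norm{\bsv}$. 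The one place where the hypothesis is used essentially — what might be called the ``hard part,'' though it is more a conceptual point than a technical obstacle — is passing from $\norm{\bsv - \lambda\bsu}^2 = 0$ to $\bsv = \lambda\bsu$; this step requires the \emph{definiteness} of $\Inner{\cdot}{\cdot}$, so that the zero vector is the unique vector of zero norm. With definiteness in hand, $\bsv$ is a non-negative scalar multiple of $\bsu$, establishing (iii) and closing the cycle.
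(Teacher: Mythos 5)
Your proof is correct and uses essentially the same ingredients as the paper's: the expansion of $\norm{a\bsu+b\bsv}^2$, the Cauchy--Schwarz saturation argument to upgrade $\re\Inner{\bsu}{\bsv}=\norm{\bsu}\norm{\bsv}$ to $\Inner{\bsu}{\bsv}=\norm{\bsu}\norm{\bsv}$, and the vanishing of $\norm{\bsv-\lambda\bsu}^2$ with $\lambda$ the ratio of norms, invoking definiteness at the end. The only difference is that you traverse the cycle in the opposite direction, (iii)$\Rightarrow$(ii)$\Rightarrow$(i)$\Rightarrow$(iii) rather than the paper's (i)$\Rightarrow$(ii)$\Rightarrow$(iii)$\Rightarrow$(i), which is a cosmetic reorganization.
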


\begin{proof}
For any scalars $a,b\in\bbC$, it follows that
$$
\norm{a\bsu+b\bsv}^2=\abs{a}^2\norm{\bsu}^2 +
\abs{b}^2\norm{\bsv}^2+2\re(\bar ab\Inner{\bsu}{\bsv}),
$$
which implies that
$$
\Pa{\norm{\bsu} + \norm{\bsv}}^2 - \norm{\bsu+\bsv}^2 =
2\norm{\bsu}\norm{\bsv}-2\re(\Inner{\bsu}{\bsv}).
$$
\begin{itemize}
\item (i)$\Longrightarrow$(ii) The last equation and the
Cauchy-Schwarz inequality give
$$
\re(\Inner{\bsu}{\bsv})=\norm{\bsu}\norm{\bsv}\geqslant
\abs{\Inner{\bsu}{\bsv}}
$$
and therefore
$\Inner{\bsu}{\bsv}=\re(\Inner{\bsu}{\bsv})=\norm{\bsu}\norm{\bsv}$.\\
\item (ii)$\Longrightarrow$(iii) If $a,b\in\bbR$, we get from the
assumption $\Inner{\bsu}{\bsv}=\norm{\bsu}\norm{\bsv}$ that
$$
\norm{a\bsu-b\bsv}^2=\Pa{a\norm{\bsu} - b\norm{\bsv}}^2.
$$
With $a=\norm{\bsv}$ and $b=\norm{\bsu}$, it follows that
$a\ket{\bsu}=b\ket{\bsv}$. Hence either
$\ket{\bsu}=\zero=0\ket{\bsv}$ or $\ket{\bsv}=\frac ab\ket{\bsu}$.\\
\item(iii)$\Longrightarrow$(i) We suppose that
$\ket{\bsu}=\lambda\ket{\bsv}$ for non-negative scalar $\lambda$.
Then
$$
\norm{\bsu}+\norm{\bsv}=(\lambda+1)\norm{\bsv}=\norm{(\lambda+1)\bsv}=\norm{\bsu+\bsv}.
$$
\end{itemize}
This completes the proof.
\end{proof}

\begin{prop}
If $\ket{\psi}$ and $\ket{\phi}$ are unit vectors, then
$\proj{\psi}=\proj{\phi}$ if and only if
$\ket{\psi}=e^{\mathrm{i}\theta}\ket{\phi}$ for some
$\theta\in\bbR$.
\end{prop}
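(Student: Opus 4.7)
The plan is to handle the two directions separately, with the easy direction being an immediate computation and the harder direction relying on Proposition~\ref{prop:Kadison}. For the implication $\ket{\psi}=e^{\mathrm{i}\theta}\ket{\phi}\Rightarrow \proj{\psi}=\proj{\phi}$, I would simply expand the outer product and observe that the global phases $e^{\mathrm{i}\theta}$ and $e^{-\mathrm{i}\theta}$ coming from the ket and the dual bra cancel, leaving $\proj{\phi}$.

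For the converse, the opening move is to feed the unit vector $\ket{\phi}$ into both sides of the operator identity $\proj{\psi}=\proj{\phi}$, which yields $\Inner{\psi}{\phi}\ket{\psi}=\ket{\phi}$. Taking norms on both sides and using $\norm{\ket{\psi}}=\norm{\ket{\phi}}=1$ forces $\abs{\Inner{\psi}{\phi}}=1$, so the inner product is a pure phase and can be written as $e^{\mathrm{i}\theta}$ for some $\theta\in\bbR$.

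The final step converts unit-modulus inner product into genuine phase equality by invoking Proposition~\ref{prop:Kadison}. I would set $\bsu=\ket{\psi}$ and $\bsv=e^{-\mathrm{i}\theta}\ket{\phi}$, so that $\Inner{\bsu}{\bsv}=e^{-\mathrm{i}\theta}\Inner{\psi}{\phi}=1=\norm{\bsu}\norm{\bsv}$, which is precisely condition (ii) of the proposition. The implication (ii)$\Rightarrow$(iii) then gives that $\bsu$ and $\bsv$ are proportional by a non-negative scalar; since both have unit norm, the scalar must equal $1$, yielding $\ket{\psi}=e^{-\mathrm{i}\theta}\ket{\phi}$, equivalently $\ket{\phi}=e^{\mathrm{i}\theta}\ket{\psi}$.

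There is no deep obstacle here: the proposition is tailor-made for this kind of rigidity statement. The only subtlety worth flagging is that the bare Cauchy--Schwarz equality case yields complex-scalar proportionality, which is a strictly weaker conclusion than a pure phase; the prefactor $e^{-\mathrm{i}\theta}$ absorbed into $\bsv$ is exactly what reduces the complex situation to the non-negative proportionality that Proposition~\ref{prop:Kadison} delivers, after which unit normalization pins down the scalar uniquely.
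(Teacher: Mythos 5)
Your proof is correct and follows essentially the same route as the paper: both establish $\abs{\Inner{\psi}{\phi}}=1$ from the projector identity and then invoke the Cauchy--Schwarz saturation condition of Proposition~\ref{prop:Kadison} to upgrade the unit-modulus inner product to phase equality. The only (harmless) difference is that your intermediate identity $\Inner{\psi}{\phi}\ket{\psi}=\ket{\phi}$ already delivers the conclusion directly once $\abs{\Inner{\psi}{\phi}}=1$ is known, so the appeal to Proposition~\ref{prop:Kadison} is strictly speaking redundant in your version.
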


\begin{proof}
In fact, it is easily seen from $\proj{\psi}=\proj{\phi}$ that
$$
1=\iinner{\psi}{\psi}\iinner{\psi}{\psi} =
\iinner{\psi}{\phi}\iinner{\phi}{\psi}\Longrightarrow
\abs{\iinner{\psi}{\phi}}=1,
$$
implying that $\iinner{\psi}{\phi}=e^{-\mathrm{i}\theta}$ for some
$\theta\in\bbR$. Thus
$\Inner{\psi}{e^{\mathrm{i}\theta}\phi}=1=\norm{\psi}\norm{e^{\mathrm{i}\theta}\phi}$,
which is true if and only if
$\ket{\psi}=e^{\mathrm{i}\theta}\ket{\phi}$, by the saturation
condition of Cauchy-Schwarz inequality in
Proposition~\ref{prop:Kadison}.
\end{proof}

\begin{definition}[(Projective) unitary equivalence]
For any given wave functions $\ket{\phi}$ and $\ket{\psi}$ in
$\bbC^d$,
\begin{enumerate}[(i)]
\item the so-called \emph{unitary equivalence} between $\ket{\phi}$
and $\ket{\psi}$ is that there exists a unitary $\bsU\in\sfU(d)$
such that $\ket{\phi}=\bsU\ket{\psi}$.
\item the so-called \emph{projective unitary equivalence} between $\ket{\phi}$
and $\ket{\psi}$ is that there exists a unitary $\bsU\in\sfU(d)$
such that
$$
\proj{\phi}=\bsU\proj{\psi}\bsU^\dagger\Longleftrightarrow
\ket{\phi}=e^{\mathrm{i}\theta}\bsU\ket{\psi}
$$
for some $\theta\in\bbR$.
\end{enumerate}
\end{definition}

\begin{definition}[Joint (projective) unitary equivalence]
For two $n$-tuples of vectors
$\Psi=(\ket{\psi_1},\ldots,\ket{\psi_n})$ and
$\Phi=(\ket{\phi_1},\ldots,\ket{\phi_n})$ in $\bbC^d$,
\begin{enumerate}[(i)]
\item the so-called \emph{joint unitary equivalence} between $\Psi$
and $\Phi$ is that there exists a unitary $\bsU\in\sfU(d)$, the
group of complex $d\times d$ unitary matrices, such that
$\ket{\phi_k}=\bsU\ket{\psi_k}$ for $k\in\set{1,\ldots,n}$. Denote
this fact by $\Phi=\bsU\Psi$.
\item the so-called \emph{joint projective unitary equivalence} between $\Psi$
and $\Phi$ is that there exists a unitary $\bsU\in\sfU(d)$ such that
$$
\proj{\phi_k}=\bsU\proj{\psi_k}\bsU^\dagger
$$
for $k\in\set{1,\ldots,n}$.
\end{enumerate}
\end{definition}

The notion of Gram matrix will be used in the characterization of
joint (projective) unitary equivalence for two $n$-tuples of vectors
in $\bbC^d$. Let me explain about it.

\begin{definition}[Gram matrix]
The so-called \emph{Gram matrix} for $n$-tuple of vectors
$\Psi=(\ket{\psi_1},\ldots,\ket{\psi_n})$, where $\ket{\psi_k}$'s
are in $\bbC^d$, is defined as
\begin{eqnarray}
G(\Psi) =\Pa{\begin{array}{ccccc}
                 \Inner{\psi_1}{\psi_1} & \Inner{\psi_1}{\psi_2} & \Inner{\psi_1}{\psi_3} & \cdots & \Inner{\psi_1}{\psi_n} \\
                 \Inner{\psi_2}{\psi_1} & \Inner{\psi_2}{\psi_2} & \Inner{\psi_2}{\psi_3} & \cdots & \Inner{\psi_2}{\psi_n} \\
                 \vdots & \vdots & \ddots & \ddots &\vdots \\
                 \Inner{\psi_{n-1}}{\psi_1} & \Inner{\psi_{n-1}}{\psi_2} & \Inner{\psi_{n-1}}{\psi_3} &\cdots &
                 \Inner{\psi_{n-1}}{\psi_n}\\
                 \Inner{\psi_n}{\psi_1} & \Inner{\psi_n}{\psi_2} & \Inner{\psi_n}{\psi_3} &\cdots & \Inner{\psi_n}{\psi_n}
                 \end{array}
}.
\end{eqnarray}
\end{definition}

\begin{prop}[\cite{Chien2016}]\label{prop:jequiv}
For two $n$-tuples of vectors
$\Psi=(\ket{\psi_1},\ldots,\ket{\psi_n})$ and
$\Phi=(\ket{\phi_1},\ldots,\ket{\phi_n})$ in $\bbC^d$, we have the
following statements:
\begin{enumerate}[(i)]
\item both $\Psi$ and $\Phi$ is joint unitary equivalent if and only
if $G(\Phi)=G(\Psi)$.
\item both $\Psi$ and $\Phi$ is joint projective unitary equivalent if and only
if $G(\Phi)=\bsT^\dagger G(\Psi)\bsT$ for $\bsT\in\sfU(1)^{\times
n}$.
\end{enumerate}
\end{prop}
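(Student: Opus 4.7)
The plan is to establish part (i) first as the core technical result and then derive part (ii) by absorbing the local phases into one of the tuples and reducing to (i).

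For the forward direction of (i), I would just compute: if $\ket{\phi_k}=\bsU\ket{\psi_k}$ for all $k$, then $\Inner{\phi_j}{\phi_k}=\Inner{\bsU\psi_j}{\bsU\psi_k}=\Inner{\psi_j}{\psi_k}$ by unitarity, so $G(\Phi)=G(\Psi)$ entrywise. For the reverse direction of (i), assuming $G(\Phi)=G(\Psi)$, the idea is to define a linear map on the subspace $V_\Psi:=\spn\{\ket{\psi_1},\ldots,\ket{\psi_n}\}$ by sending $\ket{\psi_k}\mapsto\ket{\phi_k}$ and extend it to a unitary on $\bbC^d$. Concretely, if $\sum_k c_k\ket{\psi_k}=0$ then the equality of Gram matrices yields
\begin{equation*}
\Norm{\sum_k c_k\ket{\phi_k}}^2=\sum_{j,k}\bar{c_j}c_k\Inner{\phi_j}{\phi_k}=\sum_{j,k}\bar{c_j}c_k\Inner{\psi_j}{\psi_k}=\Norm{\sum_k c_k\ket{\psi_k}}^2=0,
\end{equation*}
so $\sum_k c_k\ket{\phi_k}=0$ as well. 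Thus the assignment $\sum_k c_k\ket{\psi_k}\mapsto\sum_k c_k\ket{\phi_k}$ is a well-defined linear map $V_\Psi\to V_\Phi$, and the same computation with $c_j,c_k$ replaced by arbitrary coefficients shows it preserves inner products, hence is an isometry. Since $\dim V_\Psi=\dim V_\Phi$ (both equal $\rank G(\Psi)=\rank G(\Phi)$), I can extend this partial isometry to a unitary $\bsU\in\sfU(d)$ by choosing any unitary identification of $V_\Psi^\perp$ with $V_\Phi^\perp$.

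For part (ii), the forward direction follows by combining the previous proposition (which gives $\ket{\phi_k}=e^{\mathrm{i}\theta_k}\bsU\ket{\psi_k}$ for some phases $\theta_k$) with a direct computation: setting $\bsT=\diag(e^{\mathrm{i}\theta_1},\ldots,e^{\mathrm{i}\theta_n})\in\sfU(1)^{\times n}$,
\begin{equation*}
(\bsT^\dagger G(\Psi)\bsT)_{jk}=e^{-\mathrm{i}\theta_j}e^{\mathrm{i}\theta_k}\Inner{\psi_j}{\psi_k}=\Inner{\phi_j}{\phi_k}=G(\Phi)_{jk}.
\end{equation*}
For the reverse direction, given $G(\Phi)=\bsT^\dagger G(\Psi)\bsT$ with $\bsT=\diag(e^{\mathrm{i}\theta_1},\ldots,e^{\mathrm{i}\theta_n})$, I would introduce the rephased tuple $\Psi':=(e^{\mathrm{i}\theta_1}\ket{\psi_1},\ldots,e^{\mathrm{i}\theta_n}\ket{\psi_n})$ and check that $G(\Psi')=\bsT^\dagger G(\Psi)\bsT=G(\Phi)$. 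Applying part (i) to $\Psi'$ and $\Phi$ produces a unitary $\bsU\in\sfU(d)$ with $\ket{\phi_k}=e^{\mathrm{i}\theta_k}\bsU\ket{\psi_k}$, which upon passing to projectors gives $\proj{\phi_k}=\bsU\proj{\psi_k}\bsU^\dagger$.

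The main obstacle is the well-definedness step in the reverse direction of (i): when the vectors in $\Psi$ are linearly dependent, the naive prescription $\ket{\psi_k}\mapsto\ket{\phi_k}$ must be shown to respect every linear relation among the $\ket{\psi_k}$'s. Once one sees that the Gram-matrix equality forces the same linear relations on the $\ket{\phi_k}$'s, the rest is essentially bookkeeping—extending a partial isometry to a full unitary, and in part (ii), tracking where the diagonal phases live. I would also note that the $\bsT$ in (ii) is naturally diagonal because the phases $\theta_k$ are attached index-wise to the vectors, which is exactly the definition of $\sfU(1)^{\times n}\subset\sfU(n)$.
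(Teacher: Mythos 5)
Your proposal is correct and follows essentially the same route as the paper: for (i) you build an inner-product-preserving map $\spn\Psi\to\spn\Phi$ from the Gram-matrix equality and extend it to a unitary, and for (ii) you reduce to (i) by rephasing $\Psi$ with the diagonal phases of $\bsT$. In fact you are slightly more careful than the paper, which defines the map on the spanning set and extends "by linearity" without verifying, as you do, that equality of Gram matrices forces the $\ket{\phi_k}$ to satisfy the same linear relations as the $\ket{\psi_k}$.
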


\begin{proof}
(i) Clearly if $\Psi$ and $\Phi$ is joint unitary equivalent, then
$G(\Psi)=G(\Phi)$. Reversely, we assume that $G(\Phi)=G(\Psi)$. Let
\begin{eqnarray*}
V_1=\spn\Set{\ket{\psi_k}:k=1,\ldots,n},\quad
V_2=\spn\set{\ket{\phi_k}:k=1,\ldots,n}.
\end{eqnarray*}
It is easily seen that
$\dim(V_1)=\dim(V_2)=\rank(G(\Psi))=\rank(G(\Phi))$ due to the fact
that $G(\Phi)=G(\Psi)$. Define a mapping $\bsM:V_1\to V_2$ as
$\ket{\phi_k}=\bsM\ket{\psi_k}$, where $k=1,\ldots,N$. Then by
linear extension to the whole space $V_1$. Now for any vector
$\ket{\psi}\in V_1$, $\ket{\psi}=\sum_k\lambda_k\ket{\psi_k}$, we
get that
\begin{eqnarray*}
\bsM\ket{\psi} = \sum_k\lambda_k \bsM\ket{\psi_k} =
\sum_k\lambda_k\ket{\phi_k}:=\ket{\phi}\in V_2.
\end{eqnarray*}
Furthermore,
\begin{eqnarray*}
\Innerm{\psi}{\bsM^\dagger \bsM}{\psi}
=\Inner{\phi}{\phi}=\sum_{i,j}\bar\lambda_i\lambda_j
\Inner{\phi_i}{\phi_j}= \sum_{i,j}\bar\lambda_i\lambda_j
\Inner{\psi_i}{\psi_j}=\Inner{\psi}{\psi}.
\end{eqnarray*}
Thus $\bsM^\dagger\bsM=\I_{V_1}$. Similarly we have
$\bsM\bsM^\dagger=\I_{V_2}$.
Therefore both $\Psi$ and $\Phi$ are joint unitary equivalent.\\
(ii) If both $\Phi$ and $\Psi$ are joint projective unitary
equivalent, then exist unitary matrix $\bsU\in\sfU(d)$ and
$\theta_k\in\bbR$, where $k=1,\ldots,n$, such that
\begin{eqnarray*}
\ket{\phi_k}=e^{\mathrm{i}\theta_k}\bsU\ket{\psi_k}\Longrightarrow\Inner{\phi_i}{\phi_j}=e^{-\mathrm{i}\theta_i}e^{\mathrm{i}\theta_j}\Inner{\psi_i}{\psi_j}.
\end{eqnarray*}
Set
$\bsT=\diag(e^{\mathrm{i}\theta_1},\ldots,e^{\mathrm{i}\theta_n})$,
we have $G(\Phi)=\bsT^\dagger G(\Psi)\bsT$. Reversely, assume that
$G(\Phi)=\bsT^\dagger G(\Psi)\bsT$ for $\bsT\in\sfU(1)^{\times n}$.
Define $\Psi'=(\ket{\psi'_1},\ldots,\ket{\psi'_n})$, where
$\ket{\psi'_k}:=e^{\mathrm{i}\theta_k}\ket{\psi_k}$. Now
$G(\Phi)=\bsT^\dagger G(\Psi)\bsT$ implies that
$\Inner{\phi_i}{\phi_j}=e^{-\mathrm{i}\theta_i}e^{\mathrm{i}\theta_j}\Inner{\psi_i}{\psi_j}=\Inner{e^{\mathrm{i}\theta_i}\psi_i}{e^{\mathrm{i}\theta_j}\psi_j}$,
i.e., $\Inner{\phi_i}{\phi_j}=\Inner{\psi'_i}{\psi'_j}$. That is,
$G(\Phi)=G(\Psi')$. This indicates that there exists a unitary
$\bsU\in\sfU(d)$ such that $\Phi=\bsU\Psi'$, which is equivalent to
the fact that $\Phi$ and $\Psi$ are joint projective unitary
equivalent.
\end{proof}

In the following, we turn to discuss the joint unitary similarity of
two $N$-tuples $\Psi=(\rho_1,\ldots,\rho_N)$ and
$\Psi'=(\rho'_1,\ldots,\rho'_N)$, acting on $\bbC^d$.
\begin{definition}[Joint unitary similarity]
For given two $N$-tuples $\Psi=(\rho_1,\ldots,\rho_N)$ and
$\Psi'=(\rho'_1,\ldots,\rho'_N)$, where
$\rho_k,\rho'_k\in\density{\bbC^d}$, we say that both $\Psi$ and
$\Psi'$ are \emph{joint unitary similarity} if there exists a
unitary $\bsU\in\sfU(d)$ such that
\begin{eqnarray}
\rho'_k=\bsU\rho_k\bsU^\dagger,
\end{eqnarray}
where $k\in\set{1,\ldots,N}$, which is denoted by
$\Psi'=\bsU\Psi\bsU^\dagger$.
\end{definition}

Let $K$ be a compact group and let
\begin{eqnarray}
\Pi: K\ni g\mapsto \Pi_g\in\GL(V)
\end{eqnarray}
be a representation of $K$ in a finite dimensional \emph{real}
vector space $V$. We can assume that $\Pi_g\in\sfO(V)$ by the
compactness of $K$. The space of all real polynomials on $V$ is
denoted by $\bbR[V]$. Denote the space of real homogeneous
polynomials on $V$ by $\bbR[V]_n$. Homogeneous polynomials of degree
$n$ in $V$ are mappings of the form
$p(\bsv)=\Inner{\tilde\bsp}{\bsv^{\ot n}}$, where
$\Inner{\cdot}{\cdot}$ is the $K$-invariant inner product in $V^{\ot
n}$ (induced by the inner product on $V$) and $\tilde\bsp\in V^{\ot
n}$ is a tensor encoding the polynomial $p$. $K$-invariant
homogeneous polynomials of degree $n$ must satisfy $\Pi^{\ot
n}_g\tilde\bsp=\tilde\bsp$ for every $g\in K$. Denote the set of all
$K$-invariant polynomials by $\bbR[V]^K$.

Recall a result in invariant theory \cite{Vrana2011}: for
$\bsu,\bsv\in V$, we have $\bsv=\Pi_g\bsu$ for some $g\in K$ if and
only if for every $K$-invariant homogeneous polynomial $p_n\in
\bbR[V]^K$ of degree $n$, we have $p_n(\bsv)=p_n(\bsu)$, where
$n=1,2,\ldots$.

\begin{prop}[\cite{Procesi1976,Oszmaniec2024NJP}]
For given two $N$-tuples $\Psi=(\rho_1,\ldots,\rho_N)$ and
$\Psi'=(\rho'_1,\ldots,\rho'_N)$, where
$\rho_k,\rho'_k\in\density{\bbC^d}$, both $\Psi$ and $\Psi'$ are
joint unitary similarity if and only if for every $n\in\bbN$ and for
every sequence $i_1,i_2,\ldots,i_n$ of numbers from
$\set{1,\ldots,N}$, the corresponding Bargmann invariants of degree
$n$ agree
\begin{eqnarray}
\Tr{\rho_{i_1}\rho_{i_2}\cdots\rho_{i_n}}=\Tr{\rho'_{i_1}\rho'_{i_2}\cdots\rho'_{i_n}}.
\end{eqnarray}
\end{prop}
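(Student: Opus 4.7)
The forward direction is straightforward: if $\rho'_k = \bsU\rho_k\bsU^\dagger$ for every $k$, then inserting $\bsU^\dagger\bsU=\I$ between adjacent factors and using the cyclicity of the trace yields
$$
\Tr{\rho'_{i_1}\rho'_{i_2}\cdots\rho'_{i_n}} = \Tr{\bsU\rho_{i_1}\rho_{i_2}\cdots\rho_{i_n}\bsU^\dagger} = \Tr{\rho_{i_1}\rho_{i_2}\cdots\rho_{i_n}}
$$
for every word $(i_1,\ldots,i_n)$. All the substance lies in the converse.

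For the converse, the plan is to put the problem into the invariant-theoretic framework already introduced in the excerpt. I would set $V = \her{\bbC^d}^{\oplus N}$, a finite-dimensional \emph{real} vector space carrying the direct sum of Hilbert--Schmidt inner products, and let $K=\sfU(d)$ act by simultaneous conjugation,
$$
\Pi_{\bsU}(X_1,\ldots,X_N) = (\bsU X_1\bsU^\dagger,\ldots,\bsU X_N\bsU^\dagger).
$$
Since conjugation preserves the Hilbert--Schmidt inner product, $\Pi$ is an orthogonal representation of the compact group $\sfU(d)$. The density matrices form a subset of $V$, so the tuples $\Psi,\Psi'$ live inside $V$. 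By the invariant-theoretic criterion recalled from \cite{Vrana2011} just before the proposition, $\Psi'=\Pi_{\bsU}\Psi$ for some $\bsU\in\sfU(d)$ if and only if $p(\Psi)=p(\Psi')$ for every $\sfU(d)$-invariant homogeneous polynomial $p\in\bbR[V]^{\sfU(d)}$.

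The key ingredient is then Procesi's fundamental theorem for matrix invariants \cite{Procesi1976}: the $\bbR$-algebra of $\sfU(d)$-invariant polynomials on $\her{\bbC^d}^{\oplus N}$ (equivalently, on $N$-tuples of complex matrices, restricted to the Hermitian locus) is generated by the real and imaginary parts of the trace monomials
$$
\tau_{i_1\cdots i_n}(X_1,\ldots,X_N) := \Tr{X_{i_1}X_{i_2}\cdots X_{i_n}},\qquad n\in\bbN,\ i_1,\ldots,i_n\in\set{1,\ldots,N}.
$$
Given this generation result, agreement of all Bargmann invariants $\Tr{\rho_{i_1}\cdots\rho_{i_n}}=\Tr{\rho'_{i_1}\cdots\rho'_{i_n}}$ yields agreement of $\re\tau_{i_1\cdots i_n}$ and $\im\tau_{i_1\cdots i_n}$ on $\Psi$ and $\Psi'$, and hence $p(\Psi)=p(\Psi')$ for every $p\in\bbR[V]^{\sfU(d)}$. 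The invariant-theoretic criterion then delivers a unitary $\bsU\in\sfU(d)$ with $\Psi'=\bsU\Psi\bsU^\dagger$, which is exactly joint unitary similarity.

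\textbf{Main obstacle.} The only nontrivial step is the appeal to Procesi's theorem: the combinatorial fact that trace monomials in the matrix entries (equivalently, closed cycles in the Brauer/partition algebra picture) exhaust the generators of the invariant ring for simultaneous conjugation. Once this is granted, the rest is bookkeeping: confirming that invariance under $\sfU(d)$ rather than $\GL(d,\bbC)$ produces the same generating set in the Hermitian setting (which follows from the polarization identity and the fact that $\her{\bbC^d}$ spans $\Lin{\bbC^d}$ over $\bbC$), and matching real versus complex polynomials via the decomposition $\tau=\re\tau+\mathrm{i}\,\im\tau$. I would therefore write the argument as a short reduction to \cite{Procesi1976}, exactly as indicated by the citation attached to the proposition.
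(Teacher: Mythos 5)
Your proposal is correct and follows essentially the same route as the paper: both directions reduce joint unitary similarity to the equality of all $\sfU(d)$-invariant real polynomials on $\Herm(\bbC^d)^{\oplus N}$ via the compact-group orbit-separation criterion recalled from \cite{Vrana2011}, and then use the fact that the trace monomials $\Tr{X_{i_1}\cdots X_{i_n}}$ generate the invariant ring. The only divergence is that the paper proves this generation statement in place via Schur--Weyl duality (expanding the invariant tensors in the span of permutation operators and using the cycle decomposition of each permutation to write $p_\pi$ as a product of Bargmann invariants), whereas you invoke Procesi's First Fundamental Theorem \cite{Procesi1976} as a black box --- which is acceptable here, since the proposition is attributed to that reference in any case.
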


\begin{proof}
Let $V=\overbrace{\Herm(\bbC^d)\oplus
\Herm(\bbC^d)\oplus\cdots\oplus \Herm(\bbC^d)}^N$. Every $\X\in V$
can be identified with a tuple of linear operators $\X\simeq
(\bsX_1,\ldots,\bsX_N)\in V$, which is equivalent to
$\X=\sum^N_{k=1}\bsX_k\ot\ket{k}\in \Herm(\bbC^d)\ot\bbR^N(\simeq
V)$, where $\bsX_k\in\Herm(\bbC^d)$. Consider the joint conjugation
of the unitary group $\sfU(d)$:
\begin{eqnarray*}
\Pi_{\bsU}\X=(\Ad_{\bsU}(\bsX_1),\ldots,\Ad_{\bsU}(\bsX_N)),\quad
\bsU\in\sfU(d).
\end{eqnarray*}
Under the identification, $V\simeq \Herm(\bbC^d)\ot\bbR^N$, the
joint conjugation $\Pi_{\bsU}\simeq \Ad_{\bsU}\ot\I_N$. Now for any
$\X,\Y\in\Herm(\bbC^d)\ot\bbR^N$, its inner product is defined
\begin{eqnarray*}
\Inner{\X}{\Y} = \sum^N_{k=1}\Inner{\bsX_k}{\bsY_k}.
\end{eqnarray*}
Clearly $\Inner{\Pi_{\bsU}\X}{\Pi_{\bsU}\Y}=\Inner{\X}{\Y}$. Up to
the ordering, we have the following identifications
\begin{eqnarray*}
V^{\ot{n}}\simeq \Herm(\bbC^d)^{\ot n} \ot (\bbR^N)^{\ot n},\quad
\Pi^{\ot n}_{\bsU}\simeq \Ad^{\ot n}_{\bsU}\ot\I^{\ot n}_N.
\end{eqnarray*}
Thus $\tilde\bsp\in V^{\ot n}$ can be written as
\begin{eqnarray*}
\tilde \bsp = \sum^N_{k_1,\ldots,k_n=1}\bsP_{k_1k_2\ldots
k_n}\ot\ket{k_1k_2\ldots k_n},
\end{eqnarray*}
where $\bsP_{k_1k_2\ldots k_n}\in\Herm(\bbC^d)^{\ot
n}=\Herm((\bbC^d)^{\ot n})$. Recall that $\tilde \bsp$ defines an
invariant polynomial $p\in \bbR_n[V]^K$ for $K=\sfU(d)$ if and only
if $(\Ad^{\ot n}_{\bsU}\ot\I^{\ot n}_N)\tilde\bsp=\tilde\bsp$. That
is,
\begin{eqnarray*}
\sum^N_{k_1,\ldots,k_n=1}\Ad^{\ot n}_{\bsU}\bsP_{k_1k_2\ldots
k_n}\ot\ket{k_1k_2\ldots
k_n}=\sum^N_{k_1,\ldots,k_n=1}\bsP_{k_1k_2\ldots
k_n}\ot\ket{k_1k_2\ldots k_n},
\end{eqnarray*}
which is equivalent to $\Ad^{\ot n}_{\bsU}\bsP_{k_1k_2\ldots
k_n}=\bsP_{k_1k_2\ldots k_n}$, i.e.,
\begin{eqnarray*}
\bsU^{\ot n}\bsP_{k_1k_2\ldots k_n}=\bsP_{k_1k_2\ldots k_n}\bsU^{\ot
n}\Longleftrightarrow[\bsU^{\ot n},\bsP_{k_1k_2\ldots
k_n}]=0,\quad\forall i_1,\ldots,i_n;\forall\bsU\in\sfU(d).
\end{eqnarray*}
Thus, by Schur-Weyl duality \cite{Zhang2014}, it holds that
\begin{eqnarray}\label{eq:perrep}
\bsP_{k_1k_2\ldots k_n}\in \spn_{\bbC}\Set{\bsP_{d,n}(\pi):\pi\in
S_n}
\end{eqnarray}
where $\bsP_{d,n}(\pi)\ket{i_1\ldots i_n}=\ket{i_{\pi^{-1}(1)}\cdots
i_{\pi^{-1}(n)}}$, thus $\bsP_{k_1k_2\ldots k_n}$ can be expanded
into some linear combinations of $\bsP_{d,n}(\pi)$'s:
\begin{eqnarray*}
\bsP_{k_1k_2\ldots k_n} = \sum_j c_j\bsP_{d,n}(\pi_j).
\end{eqnarray*}
Let us consider $\tilde\bsp_\pi :=\bsP_{d,n}(\pi)\ot\ket{i_1\ldots
i_n}$ corresponding to polynomial $p_\pi$. In fact,
\begin{eqnarray*}
p_\pi(\X) = \Inner{\tilde\bsp_\pi}{\X^{\ot n}} =
\Inner{\bsP_{d,n}(\pi)\ot\ket{i_1\ldots i_n}}{\X^{\ot n}},
\end{eqnarray*}
for $\X=\sum^N_{i=1}\bsX_i\ot\ket{i}$, and thus $\X^{\ot
n}\simeq\sum^N_{j_1,\ldots,
j_n=1}\bsX_{j_1}\ot\cdots\ot\bsX_{j_n}\ot\ket{j_1\ldots j_n}$. We
get that
\begin{eqnarray*}
p_\pi(\X) =
\Tr{\bsP_{d,n}(\pi)\Pa{\bsX_{i_1}\ot\cdots\ot\bsX_{i_n}}},\quad
i_1,\ldots,i_n\in\set{1,\ldots,N}.
\end{eqnarray*}
Using the decomposition of $\pi$ into disjoint cycles, we get that
$p_\pi(\X)$ can be expressed as a product of Bargmann invariants of
degree at most $n$.

Since an \emph{arbitrary} real polynomial invariant $p\in \bbR[V]^K$
for $K=\sfU(d)$ can be expressed via a suitable linear combination
of $p_\pi$'s, we can conclude that Bargmann invariants determine the
joint unitary similarity of two tuples of Hermitian operators
$\X=(\bsX_1,\ldots,\bsX_N)$ and $\Y=(\bsY_1,\ldots,\bsY_N)$ acting
on $\bbC^d$. Restricting to quantum states, we naturally get the
desired result.
\end{proof}

\section{Bargmann invariants}\label{sect:3}

For an $n$-tuple of mixed states $\Psi=(\rho_1,\ldots,\rho_n)$,
where $\rho_k\in\density{\bbC^d}$ for $k\in\set{1,\ldots,n}$, we
have the following definition of Bargmann invariant:
\begin{definition}[Bargmann invariant]\label{def:nthBarg}
The so-called $n$th-order \emph{Bargmann invariant} for such
$n$-tuple $\Psi=(\rho_1,\ldots,\rho_n)$ is defined as
\begin{eqnarray}\label{eq:nthBarg}
\Delta_n(\Psi):=\Tr{\rho_1\cdots\rho_n}.
\end{eqnarray}
In order to be convenience, the set of all $n$th Bargmann invariants
will be denoted by $\cB^\circ_n(d)/\cB^\bullet_n(d)$ if the tuple
$\Psi$ consists of all pure/mixed states.
\end{definition}
Later, we will see that $\cB^\circ_n(d)=\cB^\bullet_n(d)$, which is
denoted by $\cB_n(d)$. Denote
\begin{eqnarray}\label{eq:Bn}
\cB^\circ_n:=\bigcup^{+\infty}_{d=2}\cB^\circ_n(d),\quad
\cB^\bullet_n:=\bigcup^{+\infty}_{d=2}\cB^\bullet_n(d),\text{ and
}\cB_n:=\bigcup^{+\infty}_{d=2}\cB_n(d).
\end{eqnarray}
The properties of $\cB^\circ_n$ are as follows:
\begin{enumerate}[(a)]
    \item $\cB^\circ_n$ is a \emph{closed} and \emph{connected} set because it is the continuous image of Cartesian product of pure state spaces.
    \item $\cB^\circ_n$ is symmetric with respect to the real axis, as
    $\Tr{\psi_n \cdots \psi_2\psi_1} = \overline{\Tr{\psi_1\psi_2 \cdots \psi_n}}$.
    \item $\cB^\circ_n \subset \Set{z \in \complex : \abs{z} \leqslant 1}$ due to the fact that $\abs{\Tr{\rho_1\rho_2\cdots\rho_n}}\leqslant1$.
\end{enumerate}
We should also remark here that, when each member of the tuple
$\Psi$ is pure state, we write $\psi_k=\proj{\psi_k}$ instead of
$\rho_k$. Now
\begin{eqnarray}
\Delta_n(\Psi)=\Tr{\psi_1\cdots\psi_n} =
\Inner{\psi_1}{\psi_2}\Inner{\psi_2}{\psi_3}\cdots
\Inner{\psi_{n-1}}{\psi_n}\Inner{\psi_n}{\psi_1},
\end{eqnarray}
which can be viewed as
\begin{eqnarray}
\Tr{\psi_1\cdots\psi_n} =\prod^d_{k=1}[G(\Psi)]_{k,k\oplus 1},
\end{eqnarray}
where $\oplus$ denotes the addition modulo $n$ throughout the whole
paper and $G(\Psi)$ is the Gram matrix determined by a some tuple of
wave functions $\ket{\psi_k}$'s, i.e.,
\begin{eqnarray}
G(\Psi) =\Pa{\begin{array}{ccccc}
                 \Inner{\psi_1}{\psi_1} & \Inner{\psi_1}{\psi_2} & \Inner{\psi_1}{\psi_3} & \cdots & \Inner{\psi_1}{\psi_n} \\
                 \Inner{\psi_2}{\psi_1} & \Inner{\psi_2}{\psi_2} & \Inner{\psi_2}{\psi_3} & \cdots & \Inner{\psi_2}{\psi_n} \\
                 \vdots & \vdots & \ddots & \ddots &\vdots \\
                 \Inner{\psi_{n-1}}{\psi_1} & \Inner{\psi_{n-1}}{\psi_2} & \Inner{\psi_{n-1}}{\psi_3} &\cdots &
                 \Inner{\psi_{n-1}}{\psi_n}\\
                 \Inner{\psi_n}{\psi_1} & \Inner{\psi_n}{\psi_2} & \Inner{\psi_n}{\psi_3} &\cdots & \Inner{\psi_n}{\psi_n}
                 \end{array}
}.
\end{eqnarray}
Denote by $\cG_n(d)$ the set of all $n\times n$ Gram matrices formed
by $n$-tuples of wave-functions (i.e., complex unit vectors) acting
on $\bbC^d$. Define
\begin{eqnarray}\label{eq:Gn}
\cG_n:=\bigcup^{+\infty}_{d=2}\cG_n(d).
\end{eqnarray}
We should be careful that all $n$th-order Bargmann invariants
$\Delta_n(\Psi)$ are gauge-invariant for tuple $\Psi$, but Gram
matrices $G(\Psi)$ are not.

We demonstrate that $\cB^\circ_n(d)$ is independent of the
underlying space dimension $d$. This follows because each element of
$\cB^\circ_n(d)$ factorizes into a product of inner products, and
inner products themselves are dimension-independent. Although this
claim appears in \cite{Zhang2025PRA1}, we provide its first
quantitative verification.
\begin{prop}\label{prop:pdfinnprod}
For two independent Haar-distributed random unit vectors $\bsu,\bsv$
in $\bbC^d$, if $z=\Inner{\bsu}{\bsv}=x+\mathrm{i}y$, where
$x,y\in\bbR$, then the joint probability density function (PDF) of
$(x,y)$ is given by
\begin{eqnarray}
f_d(z) = \tfrac1\pi (d-1)\Pa{1-\abs{z}^2}^{d-2}\chi_D(z),
\end{eqnarray}
where $D:=\Set{z\in\bbC:\abs{z}\leqslant1}$ and $\chi_D(z)$ is the
indicator function of the set $D$, i.e., $\chi_D(z)=1$ if $z\in D$,
and $\chi_D(z)=0$ if $z\notin D$. Thus both $x$ and $y$ has the same
marginal PDF $p_d$, which is given by
\begin{eqnarray}
p_d(t) = \tfrac{\Gamma(d)}{\sqrt{\pi}\Gamma
\Pa{d-\frac{1}{2}}}\Pa{1-t^2}^{d-\frac32}\chi_{[-1,1]}(t).
\end{eqnarray}
\end{prop}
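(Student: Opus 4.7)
The plan is to exploit the unitary invariance of the Haar measure on complex unit vectors in order to reduce the question to a single Haar-random vector, and then translate the problem into a classical calculation on a real sphere.

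First, I would use the defining property of the Haar measure: for any fixed $\bsW\in\sfU(d)$, the pair $(\bsW\bsu,\bsW\bsv)$ has the same law as $(\bsu,\bsv)$, while $\Inner{\bsW\bsu}{\bsW\bsv}=\Inner{\bsu}{\bsv}$. Conditioning on $\bsu$ and choosing $\bsW$ that sends $\bsu\mapsto\ket{1}$ therefore shows that $z=\Inner{\bsu}{\bsv}$ has the same distribution as the first component $v_1=x+\mathrm{i}y$ of a single Haar-random unit vector $\bsv\in\bbC^d$. The task is reduced to computing the joint density of $(x,y)=(\re v_1,\im v_1)$.

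Next, under the real-linear identification $\bbC^d\cong\bbR^{2d}$, the Haar-distributed unit vector $\bsv$ becomes a point uniformly distributed on the real sphere $S^{2d-1}$. To determine the law of $|v_1|^2=x^2+y^2$, I would invoke the Gaussian representation $\bsv=\bsZ/\Norm{\bsZ}$ with $\bsZ$ a vector of $2d$ independent standard real Gaussians. Then $|v_1|^2=(Z_1^2+Z_2^2)/\Norm{\bsZ}^2$, a ratio of a $\chi^2_2$ numerator to a sum of independent $\chi^2_2$ and $\chi^2_{2d-2}$; by the standard chi-squared/Beta identity this ratio follows $\mathrm{Beta}(1,d-1)$ with density $(d-1)(1-u)^{d-2}$ on $[0,1]$. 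Since the phase of $v_1$ is uniform and independent of $|v_1|$, passing from $(r,\theta)=(\abs{v_1},\arg v_1)$ to $(x,y)$ via $dx\,dy=r\,dr\,d\theta$ produces the claimed joint density
\[
f(z)=\tfrac{d-1}{\pi}\Pa{1-\abs{z}^2}^{d-2}\chi_D(z),
\]
and the total integral equals $1$ as a sanity check.

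Finally, to obtain the marginal $p(x)$, I would integrate out $y$ via the substitution $y=\sqrt{1-x^2}\,s$, which gives
\[
p(x)=\tfrac{d-1}{\pi}(1-x^2)^{d-3/2}\int_{-1}^{1}(1-s^2)^{d-2}\,ds.
\]
The remaining integral is the Beta value $B(1/2,d-1)=\sqrt{\pi}\,\Gamma(d-1)/\Gamma(d-1/2)$; combining this with the prefactor and using $(d-1)\Gamma(d-1)=\Gamma(d)$ yields the stated formula for $p$. By the rotational symmetry of $f$ in the $(x,y)$-plane, the marginal of $y$ is identical. The only genuine obstacle is bookkeeping with Gamma-function identities; the geometric content, namely that radial distributions of coordinates on real spheres are given by Beta laws, is classical.
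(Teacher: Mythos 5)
Your proof is correct, but it follows a genuinely different route from the paper's. The paper takes the modulus density $P_d(r)=2(d-1)r(1-r^2)^{d-2}$ as known (citing Hiai--Petz) and then reconstructs the joint density $f(x,y)$ by a two-dimensional Fourier inversion of a Dirac delta, which passes through Bessel functions $J_0$ and the confluent hypergeometric function $\,_0F_1$ before collapsing to $\tfrac{d-1}{\pi}(1-x^2-y^2)^{d-2}$; the uniformity of the phase is built into that integral representation implicitly. You instead derive everything from first principles: unitary invariance reduces $\Inner{\bsu}{\bsv}$ to the first coordinate of a single Haar vector, the Gaussian representation $\bsv=\bsZ/\Norm{\bsZ}$ identifies $\abs{v_1}^2$ as $\mathrm{Beta}(1,d-1)$ via the chi-squared ratio, and the uniform independent phase plus the Jacobian $\dif x\,\dif y=r\,\dif r\,\dif\theta$ gives the joint density directly. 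Your argument is more elementary and self-contained (it even recovers the Hiai--Petz formula for $P_d(r)$ as a byproduct rather than assuming it), at the cost of invoking the standard Gaussian/Beta machinery; the paper's route is heavier analytically but stays entirely within the given radial law. The marginal computation --- substituting $y=\sqrt{1-x^2}\,s$ and evaluating $B(\tfrac12,d-1)=\sqrt{\pi}\,\Gamma(d-1)/\Gamma(d-\tfrac12)$ --- is essentially identical in both treatments, and your Gamma-function bookkeeping checks out.
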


\begin{proof}
Recall that for two Haar-distributed random unit vectors $\bsu,\bsv$
in $\bbC^d$, their inner product has a polar form
$\Inner{\bsu}{\bsv}=re^{\mathrm{i}\theta}=r\cos\theta+\mathrm{i}r\sin\theta$,
where the PDF of $r=\abs{\Inner{\bsu}{\bsv}}$ is given by
\cite{Hiai2000}
\begin{eqnarray}\label{eq:Pn}
P_d(r) = 2(d-1)r(1-r^2)^{d-2}\chi_{[0,1]}(r).
\end{eqnarray}
Next, we want to derive the joint PDF of both the real part and
imaginary part of $\Inner{\bsu}{\bsv}=x+\mathrm{i}y$.
\begin{enumerate}[(i)]
\item Firstly, we derive the joint PDF $f(z)\equiv f(x,y)$. Note that, from the
defining expression,
\begin{eqnarray}
f_d(x,y)= \frac1{2\pi}\int^1_0\dif
r\int^{2\pi}_0\dif\theta\delta\Pa{x-r\cos\theta}\delta\Pa{y-r\sin\theta}P_d(r),
\end{eqnarray}
where $P_d(r)$ is taken from Eq.~\eqref{eq:Pn}, using the integral
representation of Dirac delta function,
\begin{eqnarray}
\delta(h)=\frac1{2\pi}\int_\real e^{\mathrm{i}h\nu}\dif \nu,
\end{eqnarray}
we see that
\begin{eqnarray*}
f_d(x,y) &=& \frac1{(2\pi)^3}\int_{\real^2}\dif\alpha\dif\beta
e^{\mathrm{i}( x\alpha+ y\beta)}\int^1_0\dif
rP_d(r)\int^{2\pi}_0\dif\theta e^{-\mathrm{i}r(\alpha
\cos\theta+\beta\sin\theta)}\\
&=&\frac1{(2\pi)^2}\int_{\real^2}\dif\alpha\dif\beta e^{\mathrm{i}(
x\alpha+ y\beta)}\int^1_0\dif r
P_d(r)J_0\Pa{r\pa{\alpha^2+\beta^2}^{\frac12}}\\
&=& \frac1{(2\pi)^2}\int^1_0\dif r
P_d(r)\int_{\real^2}\dif\alpha\dif\beta e^{\mathrm{i}( x\alpha+
y\beta)} J_0\Pa{r\pa{\alpha^2+\beta^2}^{\frac12}}\\
&=&\frac1{(2\pi)^2}\int^1_0\dif r P_d(r) \int^\infty_0\dif R R
J_0(rR)\int^{2\pi}_0e^{\mathrm{i}R(x\cos\varphi+
y\sin\varphi)}\dif\varphi\\
&=&\frac1{2\pi}\int^1_0\dif r P_d(r) \int^\infty_0\dif RR
J_0(rR)J_0(R(x^2+y^2))\\
&=&\frac1{2\pi}\int^\infty_0\dif RR
J_0(R(x^2+y^2))\int^1_0\dif r P_d(r)J_0(rR)\\
&=& \frac1{2\pi}\int^\infty_0\dif R RJ_0\Pa{R(x^2+y^2)}\,
_0F_1\Pa{;d;-\tfrac{R^2}{4}}\\
&=& \frac1\pi (d-1)(1-x^2-y^2)^{d-2}\chi_D(x,y).
\end{eqnarray*}
Here $\, _0F_1$ is the so-called \emph{confluent hypergeometric
function}, defined by
\begin{eqnarray}
\, _0F_1(;a;z) :=\sum^\infty_{k=0}\frac{z^k}{k!a(a+1)\cdots
(a+k-1)}.
\end{eqnarray}
\item Second, we obtain the marginal density $p_d(x)$ by integrating the joint PDF $f_d(x,y)$ over $y$.
Indeed, $x^2+y^2\leqslant1$ is equivalent to $x\in[-1,1],
y\in\Br{-\sqrt{1-x^2},\sqrt{1-x^2}}$, thus
\begin{eqnarray*}
p_d(x)&=&\int^{\sqrt{1-x^2}}_{-\sqrt{1-x^2}} f_d(x,y)\dif y\quad
(x\in[-1,1])\\
&=&\frac{\Gamma(d)}{\sqrt{\pi}\Gamma
\Pa{d-\frac{1}{2}}}\Pa{1-x^2}^{d-\frac32}\chi_{[-1,1]}(x).
\end{eqnarray*}
Similarly, the marginal PDF for $y$, denoted by $q_d(y)$, is
obtained by integrating $f_d(x,y)$ over $x$, mirroring the procedure
used for $p_d(x)$. Finally, we find that $p_d(t)\equiv q_d(t)$ for
$t\in[-1,1]$.
\end{enumerate}
This completes the proof.
\end{proof}
From Proposition~\ref{prop:pdfinnprod}, we see that the support of
PDF $f_d(z)$ is $\supp(f_d)=D$, the unit disk in $\bbC$. Apparently,
$\supp(f_d)=D$ is independent of the underlying space dimension $d$.

An interesting problem can be posed here: For $n$ independent
Haar-distributed random unit vectors $\ket{\psi_k}$'s in $\bbC^d$,
the joint PDF of the random Bargmann invariant
$z=\Tr{\psi_1\cdots\psi_n}$ can be investigated via the Dirac delta
function \cite{Zhang2021IJTP}:
\begin{eqnarray}
\phi_{n,d}(z):=
\int\delta(z-\Tr{\psi_1\cdots\psi_n})\prod^n_{k=1}\dif\mu_{\haar}(\psi_k)
\end{eqnarray}
where $\mu_{\haar}$ is the normalized Haar measure. The PDF
$\phi_{n,d}(z)$ satisfies $\supp(\phi_{n,d})=\cB^\circ_n(d)$. An
analytical expression for $\phi_{n,d}(z)$ would provide more
information than its support alone. This is beyond the scope of our
current discussion.

Beyond the notion of Gram matrix, the circulant matrix is also very
important notion which will be used in characterizing the boundary
curve of $\cB_n$. Let us explain it in more detail.

\subsection{Circulant matrices and its properties}

Denote by $S_n$ the set of all permutations of $n$ distinct elements
$\set{0,1,\ldots,n-1}$. For any permutation $\pi\in S_n$, define the
matrix representation of $\pi$ as
\begin{eqnarray}
\bsP_\pi : = \sum^{n-1}_{k=0}\out{\pi(k)}{k},
\end{eqnarray}
where $\set{\ket{k}:k=0,1,\ldots,n-1}$ is the computational basis of
$\bbC^n$. Apparently, $\bsP_\pi$'s are just the usual permutation
matrices. By conventions, the zeroth power of $\bsP_{\pi}$ means
that $\bsP^0_{\pi}\equiv\I_n$.
\begin{definition}[Circulant matrix]
Fix $\pi_0=(n-1,n-2,\ldots,2,1,0)\in S_n$. The so-called
\emph{circulant matrix} determined by
$\bsz=(z_0,z_1,\ldots,z_{n-1})\in \bbC^n$ is just the following one:
\begin{eqnarray}
\bsC(\bsz):=\sum^{n-1}_{k=0}z_k\bsP^k_{\pi_0}.
\end{eqnarray}
\end{definition}
Denote by $\cC_n$ the set of all $n\times n$ complex circulant
matrices. In the above definition, we can write $\bsP_{\pi_0}$
explicitly as
\begin{eqnarray}
\bsP_{\pi_0} = \Pa{\begin{array}{ccccc}
                     0 & 1 & 0 & \cdots & 0 \\
                     0 & 0 & 1 & \cdots & 0 \\
                     \vdots & \vdots & \ddots & \ddots & \vdots \\
                     0 & 0 & 0 & \ddots & 1 \\
                     1 & 0 & 0 & \cdots & 0
                   \end{array}
}.
\end{eqnarray}
Its characteristic polynomial is given by
$p_0(\lambda)=\det(\lambda\I_n-\bsP_{\pi_0})=\lambda^n-1$ whose
roots are $\set{\omega^k_n:k=0,1,\ldots,n-1}$, where
$\omega_n=\exp\Pa{\frac{2\pi\mathrm{i}}n}$.

\begin{prop}
The permutation matrix $\bsP_{\pi_0}$ can be diagonalized as
\begin{eqnarray}
\bsP_{\pi_0} = \bsF\Omega\bsF^\dagger =
\sum^{n-1}_{k=0}\omega^k_n\proj{\bsf_k},
\end{eqnarray}
where $\Omega:=\diag(\omega^0_n,\omega^1_n,\ldots,\omega^{n-1}_n)$
and $\bsF:=(\bsf_0,\bsf_1,\ldots,\bsf_{n-1})$ for
\begin{eqnarray}\label{eq:DF}
\ket{\bsf_k}:=\frac1{\sqrt{n}}\Pa{\omega^0_n,\omega^k_n,\omega^{2k}_n,\ldots,\omega^{(n-1)k}_n}^\t,
\end{eqnarray}
where $k\in\set{0,1,\ldots,n-1}$. Moreover, the circulant matrix
$\bsC(\bsz)$ for $\bsz=(z_0,z_1,\ldots,z_{n-1})^\t\in\bbC^n$ can be
diagonalized as
\begin{eqnarray}\label{eq:eigens}
\bsC(\bsz) = \bsF\Lambda_{\bsz}\bsF^\dagger,
\end{eqnarray}
where
$\Lambda_{\bsz}:=\sum^{n-1}_{k=0}z_k\Omega^k=\diag(\lambda_0(\bsz),\lambda_1(\bsz),\ldots,\lambda_{n-1}(\bsz))$
for $\lambda_j(\bsz) = \sum^{n-1}_{k=0}z_k\omega^{jk}_n$.
\end{prop}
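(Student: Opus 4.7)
The plan is to reduce the entire proposition to two elementary verifications plus one polynomial-functional-calculus step. First I would verify that the Fourier vectors $\ket{\bsf_k}$ defined in Eq.~\eqref{eq:DF} are actually eigenvectors of $\bsP_{\pi_0}$ with eigenvalue $\omega_n^k$. Given the explicit cyclic-shift form of $\bsP_{\pi_0}$, the action on a vector $(v_0,v_1,\ldots,v_{n-1})^{\t}$ is to output $(v_{n-1},v_0,v_1,\ldots,v_{n-2})^{\t}$, so applying $\bsP_{\pi_0}$ to $\ket{\bsf_k}$ (whose $j$th component is $n^{-1/2}\omega_n^{jk}$) produces a vector whose $j$th component is $n^{-1/2}\omega_n^{(j-1)k}=\omega_n^{-k}\cdot n^{-1/2}\omega_n^{jk}$. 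Being careful about the convention in the definition of $\bsP_\pi$, one gets the eigenvalue $\omega_n^k$ as claimed; the key identity is simply $\omega_n^{nk}=1$, which closes the cyclic relation.

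Second, I would confirm that $\bsF$ is unitary, so that the spectral decomposition $\bsP_{\pi_0}=\bsF\Omega\bsF^\dagger=\sum_k\omega_n^k\proj{\bsf_k}$ actually holds. This reduces to the orthogonality of characters of the cyclic group $\bbZ_n$:
\begin{eqnarray*}
\Inner{\bsf_j}{\bsf_k}=\frac{1}{n}\sum_{l=0}^{n-1}\omega_n^{(k-j)l}=\delta_{jk},
\end{eqnarray*}
using that the geometric sum $\sum_{l=0}^{n-1}\omega_n^{ml}$ vanishes for $m\not\equiv 0\pmod{n}$ and equals $n$ otherwise. Since $\bsP_{\pi_0}$ has $n$ distinct eigenvalues (the $n$th roots of unity), the eigenvectors $\set{\ket{\bsf_k}}$ necessarily form an orthonormal basis, which furnishes an independent sanity check.

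Third, I would invoke polynomial functional calculus to upgrade the statement to arbitrary circulant matrices. By definition $\bsC(\bsz)=\sum_{k=0}^{n-1}z_k\bsP_{\pi_0}^k$ is a polynomial in $\bsP_{\pi_0}$, and since $\bsP_{\pi_0}=\bsF\Omega\bsF^\dagger$ with $\bsF$ unitary, we get $\bsP_{\pi_0}^k=\bsF\Omega^k\bsF^\dagger$ for every $k\geqslant 0$ (interpreting $\Omega^0=\I_n$). Summing yields
\begin{eqnarray*}
\bsC(\bsz)=\bsF\Br{\sum_{k=0}^{n-1}z_k\Omega^k}\bsF^\dagger=\bsF\Lambda_{\bsz}\bsF^\dagger,
\end{eqnarray*}
and reading off the $j$th diagonal entry of $\Lambda_{\bsz}$ produces $\lambda_j(\bsz)=\sum_{k=0}^{n-1}z_k\omega_n^{jk}$, which is the DFT of the data vector $\bsz$.

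There is essentially no hard step here; the only place one can slip is in the indexing convention for $\bsP_{\pi_0}$. The explicit matrix displayed in the text and the formula $\bsP_\pi=\sum_k\out{\pi(k)}{k}$ with $\pi_0=(n-1,n-2,\ldots,1,0)$ must be reconciled before writing down $\bsP_{\pi_0}\ket{\bsf_k}$, otherwise one could end up with eigenvalue $\omega_n^{-k}$ and mislabelled Fourier vectors. So the single obstacle I anticipate is a bookkeeping one: fixing the action convention once and for all, and making sure the definition of $\ket{\bsf_k}$ in Eq.~\eqref{eq:DF} matches it, so that the spectral decomposition displayed in the proposition reproduces the claimed eigenvalues in the claimed order.
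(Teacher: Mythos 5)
Your proposal is correct and follows essentially the same route as the paper: identify the $n$th roots of unity as the eigenvalues of $\bsP_{\pi_0}$ with the Fourier vectors $\ket{\bsf_k}$ as orthonormal eigenvectors, then transport the diagonalization to $\bsC(\bsz)=\sum_k z_k\bsP^k_{\pi_0}$ by functional calculus; indeed you supply the two verifications (the eigenvector check and the unitarity of $\bsF$ via character orthogonality) that the paper merely asserts. One small correction to the bookkeeping point you flag: the matrix displayed in the paper has ones on the first superdiagonal and in the bottom-left corner, so it acts as $(v_0,\ldots,v_{n-1})^\t\mapsto(v_1,\ldots,v_{n-1},v_0)^\t$; hence the $j$th component of $\bsP_{\pi_0}\ket{\bsf_k}$ is $n^{-1/2}\omega_n^{(j+1)k}=\omega_n^{k}\cdot n^{-1/2}\omega_n^{jk}$ and the eigenvalue $\omega_n^{k}$ comes out directly, rather than through the $(j-1)k$ computation you wrote down, which would give $\omega_n^{-k}$.
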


\begin{proof}
(i) Since the matrix $\bsP_{\pi_0}$ is the permutation matrix
corresponding to an $n$-cycle. It is unitary, and its eigenvalues
are the $n$th roots of unity: $\lambda_k=\omega^k_n$ for
$k\in\set{0,1,\ldots,n-1}$. The corresponding normalized
eigenvectors are
\begin{eqnarray*}
\ket{\bsf_k}:=\frac1{\sqrt{n}}\Pa{\omega^0_n,\omega^k_n,\omega^{2k}_n,\ldots,\omega^{(n-1)k}_n}^\t,
\end{eqnarray*}
These eigenvectors form an orthonormal basis of $\bbC^n$. The
spectral decomposition of $\bsP_{\pi_0}$ is given by
$\bsP_{\pi_0}=\sum^{n-1}_{k=0}\omega^k_n\proj{\bsf_k}$. In matrix
form, $\bsP_{\pi_0} = \bsF\Omega\bsF^\dagger$, where $\bsF$ has
columns $\bsf_k$ and
$\Omega=\diag(\lambda_0,\lambda_1,\ldots,\lambda_{n-1})$.

(ii) By the definition of the circulant matrix, we get that
$\bsC(\bsz)=\sum^{n-1}_{k=0}z_k\bsP^k_{\pi_0}$. Because
$\bsP_{\pi_0}=\bsF\Omega\bsF^\dagger$, we see that
$\bsP^k_{\pi_0}=\bsF\Omega^k\bsF^\dagger$. This implies that
\begin{eqnarray*}
\bsC(\bsz)=\sum^{n-1}_{k=0}z_k\bsP^k_{\pi_0} =
\bsF\Pa{\sum^{n-1}_{k=0}z_k\Omega^k}\bsF^\dagger=\bsF\Lambda_{\bsz}\bsF^\dagger,
\end{eqnarray*}
where
\begin{eqnarray*}
\Lambda_{\bsz}:=\sum^{n-1}_{k=0}z_k\Omega^k =
\diag(\lambda_0(\bsz),\lambda_1(\bsz),\ldots,\lambda_{n-1}(\bsz))
\end{eqnarray*}
for $\lambda_j(\bsz) = \sum^{n-1}_{k=0}z_k\omega^{jk}_n$.
\end{proof}

\begin{prop}\label{prop:GnCn}
For each $\bsC(\bsz)\in\cC_n$ for
$\bsz=(z_0,z_1,\ldots,z_{n-1})\in\bbC^n$, it holds that
\begin{enumerate}[(i)]
\item $\bsC(\bsz)\in\Herm(\bbC^n)$, the set of all Hermitian matrices acting on $\bbC^n$, if and only if $z_0\in\bbR$ and
$\bar z_k=z_{n-k}$ for $k\in\set{1,\ldots,n-1}$.
\item $\bsC(\bsz)\in\pos{\bbC^n}$, the set of all positive semi-definite matrices in $\Herm(\bbC^n)$, if and only if $\bsF\bsz\in\bbR^n_{\geqslant0}$, where
$\bsz=(z_0,z_1,\ldots,z_{n-1})^\t\in\bbC^n$ satisfying
$z_0\in\bbR_{\geqslant0}$ and $\bar z_k=z_{n-k}$ for
$k\in\set{1,\ldots,n-1}$.
\item $\bsC(\bsz)\in\cG_n$ if and only if $\bsF\bsz\in\bbR^n_{\geqslant0}$, where
$\bsz=(z_0,z_1,\ldots,z_{n-1})^\t\in\bbC^n$ satisfying $z_0=1$ and
$\bar z_k=z_{n-k}$ for $k\in\set{1,\ldots,n-1}$.
\end{enumerate}
\end{prop}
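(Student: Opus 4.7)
The plan is to dispatch the three items by combining the spectral decomposition $\bsC(\bsz)=\bsF\Lambda_{\bsz}\bsF^\dagger$ from the preceding proposition with the linear independence of the family $\set{\bsP^k_{\pi_0}}^{n-1}_{k=0}$ inside $\mat_n(\bbC)$. Under this lens, each of (i)--(iii) becomes a transparent translation of a familiar matrix property (Hermitian, PSD, Gram) into a statement about the coefficient vector $\bsz$ or its Fourier transform $\bsF\bsz$.

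For (i), I would take the adjoint of the defining expansion $\bsC(\bsz)=\sum^{n-1}_{k=0}z_k\bsP^k_{\pi_0}$. Since $\bsP_{\pi_0}$ is unitary of order $n$, one has $(\bsP^k_{\pi_0})^\dagger=\bsP^{n-k}_{\pi_0}$ with indices read modulo $n$. Re-indexing $j=n-k$ (using the convention $z_n\equiv z_0$) gives $\bsC(\bsz)^\dagger=\sum^{n-1}_{j=0}\bar z_{n-j}\bsP^j_{\pi_0}$. Linear independence of $\set{\bsP^j_{\pi_0}}^{n-1}_{j=0}$ then lets me match coefficients against $\bsC(\bsz)=\sum_j z_j\bsP^j_{\pi_0}$, forcing $z_j=\bar z_{n-j}$ for every $j$; the special case $j=0$ yields $z_0\in\bbR$, and the remaining cases are precisely the stated conjugation-reflection symmetry.

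For (ii), I would pass to the eigenbasis provided by $\bsF$. A direct calculation shows that the $j$th component of $\bsF\bsz$ equals $\frac1{\sqrt n}\sum^{n-1}_{k=0}\omega^{jk}_n z_k=\frac1{\sqrt n}\lambda_j(\bsz)$, so $\bsF\bsz=\frac1{\sqrt n}(\lambda_0(\bsz),\ldots,\lambda_{n-1}(\bsz))^\t$. Because $\bsC(\bsz)=\bsF\Lambda_{\bsz}\bsF^\dagger$ is a unitary similarity, membership in $\pos{\bbC^n}$ is equivalent to all entries of $\Lambda_{\bsz}$ being real and non-negative, which is precisely $\bsF\bsz\in\bbR^n_{\geqslant 0}$. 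Hermiticity of $\bsC(\bsz)$ is automatic once $\Lambda_{\bsz}$ is real, and part (i) then delivers the side conditions $\bar z_k=z_{n-k}$; the remaining inequality $z_0=\frac1n\tr\bsC(\bsz)=\frac1n\sum_j\lambda_j(\bsz)\geqslant 0$ follows from the same non-negativity. For (iii), I would invoke the standard fact that $\cG_n$ coincides with the set of PSD matrices with unit diagonal: any $n$-tuple of unit vectors yields such a matrix, and conversely any PSD $G$ with unit diagonal factorizes as $G=\bsV^\dagger\bsV$ whose columns are unit vectors in $\bbC^d$ for $d=\rank G$, placing $G$ in $\cG_n(d)\subset\cG_n$. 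Since every diagonal entry of the circulant $\bsC(\bsz)$ equals $z_0$, the Gram condition collapses to (ii) together with $z_0=1$.

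The main obstacle is conceptual rather than computational: one must manage the cyclic index convention carefully in (i), and correctly identify $\bsF\bsz$ with a scalar multiple of the eigenvalue vector in (ii). Once these two identifications are secured, the three characterizations are essentially read off from the spectral picture, and the Gram description in (iii) requires only the observation that circulant matrices have a constant diagonal.
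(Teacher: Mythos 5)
Your proof is correct and follows essentially the same route as the paper: the paper dismisses (i) and (iii) as trivial and proves (ii) exactly by reading the eigenvalues off the spectral decomposition $\bsC(\bsz)=\bsF\Lambda_{\bsz}\bsF^\dagger$ and identifying $\bslambda(\bsz)=\sqrt{n}\,\bsF\bsz$. Your write-up merely supplies the details the paper omits (coefficient matching via linear independence of the $\bsP^k_{\pi_0}$ for (i), and the PSD-with-unit-diagonal characterization of Gram matrices for (iii)), all of which are sound.
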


\begin{proof}
\begin{enumerate}[(i)]
\item It is trivially.
\item From Eq.~\eqref{eq:eigens}, we can read all eigenvalues of
$\bsC(\bsz)$ as follows:
\begin{eqnarray}
\Pa{\begin{array}{c}
      \lambda_0(\bsz) \\
      \lambda_1(\bsz) \\
      \lambda_2(\bsz) \\
      \vdots \\
      \lambda_{n-1}(\bsz)
    \end{array}
}=\Pa{\begin{array}{ccccc}
        1 & 1 & 1 & \cdots & 1 \\
        1 & \omega_n & \omega^2_n & \cdots & \omega^{n-1}_n \\
        1 & \omega^2_n & \omega^4_n & \cdots & \omega^{2(n-1)}_n \\
        \vdots & \vdots & \vdots & \ddots & \vdots \\
        1 & \omega^{n-1}_n & \omega^{2(n-1)}_n & \cdots & \omega^{(n-1)^2}_n
      \end{array}
}\Pa{\begin{array}{c}
      z_0 \\
      z_1 \\
      z_2 \\
      \vdots \\
      z_{n-1}
    \end{array}
},
\end{eqnarray}
which can be rewritten simply as $\bslambda(\bsz)=\sqrt{n}\bsF\bsz$.
Then $\bsC(\bsz)\in\pos{\bbC^n}$ if and only if all eigenvalues
$\lambda_j(\bsz)\in\bbR_{\geqslant0}$ for each
$j\in\set{0,1,\ldots,n-1}$, which is equivalently to
$\bslambda(\bsz)\in\bbR^n_{\geqslant0}$. That is,
$\bsF\bsz\in\bbR^n_{\geqslant0}$.
\item The proof is trivially.
\end{enumerate}
This completes the proof.
\end{proof}

In what follows, we introduce a subclass of mixed-permutation
channels, studied in \cite{Zhang2024epjp}. Such channels are called
the circulant quantum channels \cite{Xie2026}, which is intimately
to the above circulant matrices.

\subsection{Circulant quantum channels and its properties}

\begin{definition}[Circulant quantum channel]
Fix $\pi_0=(n-1,n-2,\ldots,2,1,0)\in S_n$. The so-called
\emph{circulant quantum channel} is defined as
\begin{eqnarray}
\Phi(\bsX):=\frac1n\sum^{n-1}_{k=0}\bsP^k_{\pi_0}\bsX\bsP^{-k}_{\pi_0}.
\end{eqnarray}
\end{definition}
An interesting property of this quantum channel is that its fixed
points are precisely the circulant matrices \cite{Xie2026}. This
arises from the covariance of the channel under the cyclic shift
generated by the matrix $\bsP_{\pi_0}$, implying that any fixed
point must commute with $\bsP_{\pi_0}$--- a defining characteristic
of circulant matrices.

\begin{prop}[\cite{Xie2026}]
The circulant quantum channel $\Phi$ can be reformulated into two
forms:
\begin{eqnarray}
\Phi(\bsX)&=&\sum^{n-1}_{k=0}\proj{\bsf_k}\bsX\proj{\bsf_k}\\
&=&
\frac1n\sum^{n-1}_{k=0}\Inner{\bsP^k_{\pi_0}}{\bsX}\bsP^k_{\pi_0}.
\end{eqnarray}
Moreover, the set of fixed points of $\Phi$ is precisely the set of
circulant matrices. In addition, $\Phi$ is an entanglement-breaking
channel\footnote{A channel is entanglement-breaking if and only if
its Choi presentation is separable.}.
\end{prop}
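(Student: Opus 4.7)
My plan is to diagonalize everything in the Fourier basis $\set{\ket{\bsf_k}}_{k=0}^{n-1}$ supplied by the identity $\bsP_{\pi_0}=\bsF\Omega\bsF^\dagger$, and then read off the four claims in turn.

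For the first reformulation, I would substitute $\bsP^k_{\pi_0}=\bsF\Omega^k\bsF^\dagger$ into the definition of $\Phi$ and set $\bsY:=\bsF^\dagger\bsX\bsF$, reducing the sum to
$$\Phi(\bsX)=\bsF\Pa{\tfrac1n\sum_{k=0}^{n-1}\Omega^k\bsY\Omega^{-k}}\bsF^\dagger.$$
The inner twirl by diagonal unitaries has $(i,j)$-entry $\tfrac1n\sum_k\omega_n^{k(i-j)}\bsY_{ij}=\delta_{ij}\bsY_{ii}$, so it equals $\diag(\bsY)$ and hence $\Phi(\bsX)=\sum_{k=0}^{n-1}\bra{\bsf_k}\bsX\ket{\bsf_k}\proj{\bsf_k}$, which is the first desired form. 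For the second form, I would expand $\bsP^k_{\pi_0}=\sum_l\omega_n^{lk}\proj{\bsf_l}$ to obtain $\Inner{\bsP^k_{\pi_0}}{\bsX}=\sum_l\omega_n^{-lk}\bsY_{ll}$, and then use the orthogonality relation $\tfrac1n\sum_k\omega_n^{(m-l)k}=\delta_{l,m}$ to collapse $\tfrac1n\sum_k\Inner{\bsP^k_{\pi_0}}{\bsX}\bsP^k_{\pi_0}$ back to $\sum_l\bsY_{ll}\proj{\bsf_l}$, matching the first form.

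For the fixed-point claim, the first reformulation makes $\Phi(\bsX)=\bsX$ equivalent to $\bsY=\bsF^\dagger\bsX\bsF$ being diagonal. By Eq.~\eqref{eq:eigens} every circulant $\bsC(\bsz)$ already has this property, while conversely the invertibility of the DFT (equivalently, of the map $\bsz\mapsto\Lambda_{\bsz}$ from Proposition~\ref{prop:GnCn}) lets one recover a unique $\bsz\in\bbC^n$ from any prescribed diagonal in the Fourier basis, so the fixed-point set coincides precisely with $\cC_n$. Finally, for entanglement-breaking, the just-established form $\Phi(\bsX)=\sum_k\bra{\bsf_k}\bsX\ket{\bsf_k}\proj{\bsf_k}$ displays $\Phi$ as a measure-and-prepare channel with projective POVM $\set{\proj{\bsf_k}}$ and post-measurement preparations $\proj{\bsf_k}$, which is the standard criterion for EB; equivalently, the Choi matrix decomposes into a convex combination of product projectors and is thus separable.

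The main subtlety is the biconditional ``diagonal in the Fourier basis $\Longleftrightarrow$ circulant,'' which is really Eq.~\eqref{eq:eigens} read in both directions; once the bijection $\bsz\leftrightarrow\Lambda_{\bsz}$ is in hand the remaining steps are short algebraic verifications that amount to the unitary orthogonality of the $n$th roots of unity.
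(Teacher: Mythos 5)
Your proposal is correct and follows essentially the same route as the paper: diagonalize $\bsP_{\pi_0}$ in the Fourier basis, observe that the cyclic averaging kills the off-diagonal entries of $\bsF^\dagger\bsX\bsF$ so that $\Phi$ is the pinching onto $\set{\proj{\bsf_k}}$, identify circulant matrices with Fourier-diagonal ones for the fixed-point claim, and conclude entanglement breaking from the measure-and-prepare form (the paper instead writes out the Choi matrix $\frac1n\sum_k\proj{\bsf_k}\ot\overline{\proj{\bsf_k}}$ explicitly, which is the same observation). As a minor bonus, you also verify the second reformulation $\frac1n\sum_k\Inner{\bsP^k_{\pi_0}}{\bsX}\bsP^k_{\pi_0}$ via the roots-of-unity orthogonality relation, a step the paper's proof states but does not actually carry out.
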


\begin{proof}
By spectral decomposition of $\bsP_{\pi_0}$:
$\bsP_{\pi_0}=\sum^{n-1}_{k=0}\omega^k_n\proj{\bsf_k}$. It follows
that $\bsP_{\pi_0}\ket{\bsf_i}=\omega^i_n\ket{\bsf_i}$ and
$\bsP^{-k}_{\pi_0}\ket{\bsf_i}=\omega^{-ki}_n\ket{\bsf_i}$. Then
\begin{eqnarray*}
\Innerm{\bsf_i}{\Phi(\bsX)}{\bsf_j}
&=&\frac1n\sum^{n-1}_{k=0}\Innerm{\bsf_i}{\bsP^k_{\pi_0}\bsX\bsP^{-k}_{\pi_0}}{\bsf_j}
=\frac1n\sum^{n-1}_{k=0}\Innerm{\bsP^{-k}_{\pi_0}\bsf_i}{\bsX}{\bsP^{-k}_{\pi_0}\bsf_j}\\
&=&\frac1n\sum^{n-1}_{k=0}\overline{\omega^{-ki}_n}\omega^{-kj}_n\Innerm{\bsf_i}{\bsX}{\bsf_j}
=
\Pa{\frac1n\sum^{n-1}_{k=0}\omega^{k(i-j)}_n}\Innerm{\bsf_i}{\bsX}{\bsf_j}\\
&=&\delta_{ij}\Innerm{\bsf_i}{\bsX}{\bsf_j}.
\end{eqnarray*}
Furthermore,
\begin{eqnarray*}
\Phi(\bsX)
&=&\sum^{n-1}_{i,j=0}\proj{\bsf_i}\Phi(\bsX)\proj{\bsf_j}=\sum^{n-1}_{i,j=0}
\Innerm{\bsf_i}{\Phi(\bsX)}{\bsf_j}\out{\bsf_i}{\bsf_j}\\
&=&\sum^{n-1}_{i,j=0}\delta_{ij}\Innerm{\bsf_i}{\bsX}{\bsf_j}\out{\bsf_i}{\bsf_j}
=\sum^{n-1}_{i=0}\Innerm{\bsf_i}{\bsX}{\bsf_i}\out{\bsf_i}{\bsf_i}\\
&=&\sum^{n-1}_{k=0}\proj{\bsf_k}\bsX\proj{\bsf_k}.
\end{eqnarray*}
Based on this expression, $\Phi(\bsX)$ is a circulant matrix for
each $\bsX$ acting on $\bbC^n$, which implies that the set of fixed
points of $\Phi$ is precisely the set of circulant matrices.
Finally, Choi representation of $\Phi$ can be calculated as
\begin{eqnarray*}
J(\Phi)&=&\frac1n\sum^{n-1}_{k=0}\vec(\proj{\bsf_k})\vec(\proj{\bsf_k})^\dagger\\
&=&\frac1n\sum^{n-1}_{k=0}\proj{\bsf_k}\ot\overline{\proj{\bsf_k}},
\end{eqnarray*}
which is separable. That is, $\Phi$ is an entanglement-breaking
channel.
\end{proof}

With this result, we can easily realize circulant quantum channel by
chosen von Neumann measurement $\set{\proj{\bsf_j}}^{n-1}_{j=0}$
along the Fourier basis $\set{\ket{\bsf_i}:i=0,1,\ldots,n-1}$ from
Eq.~\eqref{eq:DF}.

\section{Circulant Gram matrices}\label{sect:4}

Now we introduce an important subset of $\cB_n$ as follows:
\begin{eqnarray}\label{eq:circBarg}
\cB_n|_{\mathrm{circ}}:=\Set{\prod^n_{k=1}h_{k,k\oplus1}\Big|(h_{ij})\in
\cG_n\cap\cC_n}.
\end{eqnarray}
Apparently $\cB_n|_{\mathrm{circ}}\subseteq \cB_n$. In what follows,
we characterize the set $\cB_n|_{\mathrm{circ}}$.

\subsection{Characterization of the set $\cB_n|_{\mathrm{circ}}$}

From the definition of $\cB_n|_{\mathrm{circ}}$ in
Eq.~\eqref{eq:circBarg}, we see that
\begin{eqnarray}
\cB_n|_{\mathrm{circ}}=\Set{z^n_1\mid
\bsC(\bsz)=\sum^{n-1}_{k=0}z_k\bsP^k_{\pi_0}\in\cG_n}=\set{z^n_1\mid
z_1\in \cQ_n},
\end{eqnarray}
where
\begin{eqnarray}
\cQ_n:=\Set{z_1\mid \bsC(\bsz)\in\cG_n}.
\end{eqnarray}
To characterize the set $\cB_n|_{\mathrm{circ}}$, we begin by
characterizing the set $\cQ_n$. We will prove that $\cQ_n$ is
geometrically identified with a particular regular $n$-gon inscribed
in the unit circle, which we denote by $\cP_n$ and define as the
polygon whose vertices are the $n$th roots of unity.

\begin{thrm}[\cite{Li2025PRA}]
For each integer $n\geqslant3$, denote
$\omega_n=\exp\Pa{\frac{2\pi\mathrm{i}}n}$, it holds that
\begin{eqnarray}\label{eq:n-gon}
\cQ_n=\cP_n.
\end{eqnarray}
Based on this identification, we get that
\begin{eqnarray}
\cB_n|_{\mathrm{circ}} &=& \Set{z^n\mid z\in \cP_n},\\
\partial \cB_n|_{\mathrm{circ}}
&=&\Set{(t+(1-t)\omega_n)^n\mid t\in[0,1]}.\label{eq:bcurve}
\end{eqnarray}
\end{thrm}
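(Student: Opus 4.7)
The argument proceeds in three stages: first characterize $\cQ_n$ by reducing it to a convex hull via Fourier analysis of circulant matrices; then derive $\cB_n|_{\mathrm{circ}}$ as the image of $\cP_n$ under $z \mapsto z^n$; finally identify the boundary curve via a symmetry-plus-fundamental-domain argument.

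For the first stage, I would invoke Proposition~\ref{prop:GnCn}(iii): a circulant matrix $\bsC(\bsz)$ with $\bsz = (1, z_1, \ldots, z_{n-1})^\t$ lies in $\cG_n$ precisely when its eigenvalues $\lambda_j(\bsz) = \sum_{k=0}^{n-1} z_k \omega_n^{jk}$ are all nonnegative. Inverting \eqref{eq:eigens} by the discrete Fourier transform yields $z_k = \tfrac1n \sum_{j=0}^{n-1} \lambda_j(\bsz)\, \omega_n^{-jk}$; in particular
\[
z_1 = \frac{1}{n} \sum_{j=0}^{n-1} \lambda_j(\bsz)\, \omega_n^{-j}.
\]
The normalization $z_0 = 1$ is equivalent to $\sum_j \lambda_j(\bsz) = n$. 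Setting $p_j = \lambda_j(\bsz)/n$, the admissible values of $z_1$ are exactly the convex combinations $\sum_j p_j\, \omega_n^{-j}$ over probability vectors $(p_j)$. Since $\{\omega_n^{-j} : j = 0,\ldots,n-1\}$ is just the set of $n$th roots of unity, this convex hull is $\cP_n$; conversely, every such convex combination arises from a valid choice of eigenvalues, and the recovered $\bsz$ automatically satisfies $\bar z_k = z_{n-k}$ because the $\lambda_j$ are real. This proves \eqref{eq:n-gon}, and then $\cB_n|_{\mathrm{circ}} = \{z^n : z \in \cP_n\}$ follows immediately from the definition.

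For the boundary formula \eqref{eq:bcurve}, I would exploit rotational symmetry: if $z \in \partial \cP_n$ then $\omega_n z \in \partial \cP_n$ as well, and $(\omega_n z)^n = z^n$ because $\omega_n^n = 1$. Hence the image under $z \mapsto z^n$ of any edge of $\cP_n$ coincides with that of the edge joining $1$ and $\omega_n$, which admits the parametrization $z(t) = t + (1-t)\omega_n$ for $t \in [0,1]$, producing the curve on the right-hand side of \eqref{eq:bcurve}. The technical heart of the argument --- and the step I expect to be the main obstacle --- is showing that this curve is \emph{exactly} the topological boundary and not merely contained in it. I would address this by a fundamental-domain argument: the closed pie slice $T \subset \cP_n$ with vertices $0$, $1$, $\omega_n$ is a fundamental domain for the cyclic rotation action $z \mapsto \omega_n z$, so $z \mapsto z^n$ maps $T$ surjectively onto $\cB_n|_{\mathrm{circ}}$. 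The two radial edges $[0,1]$ and $[0,\omega_n]$ both collapse onto $[0,1]$ inside the image, while on the interior of $T$ the map is injective (if $z^n = w^n$ with $z,w$ in the open sector, then $w = \omega_n^k z$ forces $k = 0$) and holomorphic, hence open. Consequently interior points of $T$ map to interior points of the image, and the outer edge $\{t + (1-t)\omega_n : t \in [0,1]\}$ maps homeomorphically onto $\partial \cB_n|_{\mathrm{circ}}$, yielding \eqref{eq:bcurve}.
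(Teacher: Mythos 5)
Your proposal is correct, but it reaches $\cQ_n=\cP_n$ by a genuinely different route than the paper. The paper proves the identification in four steps: it exhibits the seed point $1\in\cQ_n$, shows $\cQ_n$ is invariant under multiplication by $\omega_n$, proves convexity of $\cQ_n$ by convex-combining the vectors $\bsz$, and then establishes the supporting half-plane inequality $x\cos\frac{\pi}{n}+y\sin\frac{\pi}{n}\leqslant\cos\frac{\pi}{n}$ via an explicit auxiliary vector $\bsb$ whose Fourier transform is shown entrywise nonnegative; the two inclusions $\cP_n\subseteq\cQ_n$ and $\cQ_n\subseteq\cP_n$ then follow from taking convex hulls and intersecting rotated half-planes. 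You instead invert the diagonalization $\bslambda(\bsz)=\sqrt{n}\bsF\bsz$ to write $z_1=\frac1n\sum_j\lambda_j\omega_n^{-j}$ with $\lambda_j\geqslant0$ and $\sum_j\lambda_j=n$, so that $\cQ_n$ is \emph{by construction} the convex hull of the $n$th roots of unity; the observation that $\bar z_k=z_{n-k}$ is automatic for real $\lambda_j$ makes the converse inclusion immediate. This is shorter, avoids the computation with $\bsb$ entirely, and makes the polygon structure transparent; the paper's route has the side benefit of isolating the convexity of $\cQ_n$ and the explicit supporting lines, which are reused in spirit later. For the boundary formula your treatment is actually \emph{more} careful than the paper's: the paper simply asserts $\partial\cB_n|_{\mathrm{circ}}=\Set{z^n\mid z\in\partial\cP_n}$ and restricts to one edge by symmetry, whereas you justify that interchange of "image" and "boundary" with the fundamental-domain/open-mapping argument on the sector with vertices $0$, $1$, $\omega_n$. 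The one place you might add a line is the final claim that the closed curve $t\mapsto(t+(1-t)\omega_n)^n$ is not merely a superset container of $\partial\cB_n|_{\mathrm{circ}}$ but equal to it; injectivity of this parametrization on $(0,1)$ together with a Jordan-curve (or quotient-of-the-cone) argument closes that gap. No step of your argument fails.
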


\begin{proof}
The proof of the equality $\cQ_n=\cP_n$ is completed in four steps:
\begin{enumerate}[(1)]
\item $1\in \cQ_n$.
\item $\omega_n z_1\in \cQ_n$ whenever $z_1\in \cQ_n$.
\item the set $\cQ_n$ is convex.
\item For each $z_1=x+\mathrm{i}y\in \cQ_n$, identified with
$(x,y)$, in Cartesian coordinates, the point satisfies the
inequality defining the closed half-plane below the straight line
$L_n$ through two points $(1,0)$ and
$(\cos\frac{2\pi}n,\sin\frac{2\pi}n)$, corresponding to $1$ and
$\omega_n$, respectively. The equation of $L_n$ is given by
\begin{eqnarray}
\tfrac{y-0}{x-1} = \tfrac{0-\sin\frac{2\pi}n}{1-\cos\frac{2\pi}n} =
-\cot\tfrac\pi n\Longleftrightarrow y=(\cot\tfrac\pi n)(1-x).
\end{eqnarray}
Note that the side of $\cP_n$ connecting $1$ and $\omega_n$ lies on
the line $L_n$.
\end{enumerate}
Now we can use the four items above to show $\cQ_n=\cP_n$.
\begin{itemize}
\item $\cP_n\subseteq\cQ_n$. Apparently
$\cP_n=\mathrm{ConvHull}\Set{1,\omega_n,\ldots,\omega^{n-1}_n}$.
From the above items (1) and (2), we can get that
$\Set{1,\omega_n,\ldots,\omega^{n-1}_n}\subset\cQ_n$. By the item
(3), i.e., the convexity of $\cQ_n$, we get that
\begin{eqnarray*}
\cP_n=\mathrm{ConvHull}\Set{1,\omega_n,\ldots,\omega^{n-1}_n}\subseteq
\mathrm{ConvHull}\set{\cQ_n}=\cQ_n.
\end{eqnarray*}
\item $\cQ_n\subseteq\cP_n$. It is easily seen that
$\cQ_n=\omega^k_n\cQ_n$ for $k\in\set{0,1,\ldots,n-1}$. It suffices
to show it for $k=1$. Indeed, by the item (2), for any
$z_1\in\cQ_n$, we have $\omega_nz_1\in\cQ_n$, thus
$\omega^{n-1}_nz_1\in\cQ_n$, which amounts to say that
$\omega^{-1}_nz_1\in\cQ_n$ because $\omega^n_n=1$. That is,
$z_1\in\omega_n\cQ_n$. Thus $\cQ_n\subseteq \omega_n\cQ_n$.
Furthermore,
\begin{eqnarray*}
\cQ_n\subseteq \omega_n\cQ_n\subseteq
\omega^2_n\cQ_n\subseteq\cdots\subseteq\omega^{n-1}_n\cQ_n\subseteq
\omega^n_n\cQ_n=\cQ_n.
\end{eqnarray*}
We know that the line $L_n$ partitions the plane into an upper and a
lower half-plane, defined by the points lying above and below $L_n$,
respectively. Define $L^+_n$ and $L^-_n$ to be the upper and lower
half-planes (relative to $L_n$). In fact,
\begin{eqnarray*}
L^-_n=\Set{x+\mathrm{i}y\in\bbC\mid (x,y)\in\bbR^2,x\cos\tfrac\pi
n+y\sin\tfrac\pi n\leqslant \cos\tfrac\pi n}.
\end{eqnarray*}
By the item (4), $\cQ_n\subset L^-_n$. Then $\omega^k_n\cQ_n\subset
\omega^k_nL^-_n$ for $k\in\set{0,1,\ldots,n-1}$, and thus
\begin{eqnarray*}
\cQ_n=\bigcap^{n-1}_{k=0}\omega^k_n\cQ_n\subset
\bigcap^{n-1}_{k=0}\omega^k_nL^-_n=\cP_n.
\end{eqnarray*}
\end{itemize}
In summary, $\cQ_n=\cP_n$. Let us proceed to prove all items
mentioned above.
\begin{itemize}
\item \textbf{Step 1: $1\in \cQ_n$.}

Consider a specific matrix $\bsC(\bsz_0)$, where
$\bsz_0=(1,1,\ldots,1)\in\bbC^n$. Apparently
$\bsC(\bsz_0)\in\cG_n\cap\cC_n$. This implies that $1\in \cQ_n$.
\item \textbf{Step 2: $\omega_n z_1\in \cQ_n$ whenever $z_1\in \cQ_n$.}

Note that $\bsC(\bsz)\in\cG_n$ if and only if
$\bsF\bsz\in\bbR^n_{\geqslant0}$, where
$\bsz=(z_0,z_1,\ldots,z_{n-1})^\t\in\bbC^n$ satisfying $z_0=1$ and
$\bar z_k=z_{n-k}$ for $k\in\set{1,\ldots,n-1}$ by
Proposition~\ref{prop:GnCn}:
\begin{eqnarray*}
\cQ_n = \Set{z_1\in\bbC\mid z_0=1,\bar z_k=z_{n-k}\text{ for
}k\in\set{1,\ldots,n-1}, \bsF\bsz\in\bbR^n_{\geqslant0}}.
\end{eqnarray*}
Now if $z_1\in \cQ_n$, then there exists
$\bsz=(z_0,z_1,\ldots,z_{n-1})^\t\in \bbC^n$ satisfying that $z_0=1$
and $\bar z_k=z_{n-k}$ for $k\in\set{1,\ldots,n-1}$ and
$\bsF\bsz\in\bbR^n_{\geqslant0}$. Define
\begin{eqnarray}
\bsz_\omega:=\Pa{z_0\omega^0_n,z_1\omega^1_n,\ldots,z_{n-1}\omega^{n-1}_n}^\t=\Omega\bsz.
\end{eqnarray}
Apparently $z_0\omega^0_n=1$ and
$\overline{z_k\omega^k_n}=z_{n-k}\omega^{n-k}_n$ for
$k\in\set{1,\ldots,n-1}$ due to the fact that $z_0=1$ and $\bar
z_k=z_{n-k}$ for $k\in\set{1,\ldots,n-1}$. The proof of
$\omega_nz_1\in \cQ_n$ can be completed whenever we show that
$\bsF\bsz_\omega\in\bbR^n_{\geqslant0}$. In fact,
\begin{eqnarray}
\bsF\bsz_\omega = \bsF\Pa{\Omega\bsz}=\Pa{\bsF\Omega\bsF^\dagger}
(\bsF\bsz)=\bsP_{\pi_0}(\bsF\bsz),
\end{eqnarray}
which components are the permutated ones of
$\bsF\bsz\in\bbR^n_{\geqslant0}$. Therefore $\bsF\bsz_\omega
\in\bbR^n_{\geqslant0}$.
\item \textbf{Step 3: The set $\cQ_n$ is convex.}

For any elements $u_1,v_1\in \cQ_n$ and $s\in[0,1]$, there exists
$\bsu=(u_0,u_1,\ldots,u_{n-1}),\bsv=(v_0,v_1,\ldots,v_{n-1})$ in
$\bbC^n$ satisfying $u_0=1,\bar u_k=u_{n-k}$ for
$k\in\set{1,\ldots,n-1}$ and $\bsF\bsu\in \bbR^n_{\geqslant0}$; and
$v_0=1,\bar v_k=v_{n-k}$ for $k\in\set{1,\ldots,n-1}$ and
$\bsF\bsv\in \bbR^n_{\geqslant0}$, respectively. Consider the convex
combination of $\bsu$ and $\bsv$ with weight $s\in[0,1]$:
\begin{eqnarray*}
\bsw:=s\bsu+(1-s)\bsv=\Pa{su_0+(1-s)v_0,su_1+(1-s)v_1,\ldots,su_{n-1}+(1-s)v_{n-1}}^\t.
\end{eqnarray*}
Thus $w_0=su_0+(1-s)v_0=1$ and
\begin{eqnarray*}
\bar w_k &=& \overline{su_k+(1-s)v_k} = s\bar u_k+(1-s)\bar v_k\\
&=& s u_{n-k} + (1-s)v_{n-k} = w_{n-k}.
\end{eqnarray*}
Moreover,
\begin{eqnarray*}
\bsF\bsw =s\bsF\bsu + (1-s)\bsF\bsv\in\bbR^n_{\geqslant0}\text{ for
}\bsF\bsu\in\bbR^n_{\geqslant0}\text{ and
}\bsF\bsv\in\bbR^n_{\geqslant0}.
\end{eqnarray*}
Therefore $su_1+(1-s)v_1=w_1\in \cQ_n$. That is, the set $\cQ_n$ is
convex.
\item \textbf{Step 4: For each $z_1=x+\mathrm{i}y\in \cQ_n$, the point $(x,y)$ must satisfy the inequality: $y\leqslant(\cot\frac\pi n)(1-x)$ or $x\cos\frac\pi n + y\sin\frac\pi n \leqslant\cos\frac\pi
n$.}

For any $z_1=x+\mathrm{i}y\in \cQ_n$, there exists
$\bsz=(z_0,z_1,\ldots,z_{n-1})^\t\in\bbC^n$ such
$\bsF\bsz\in\bbR^n_{\geqslant0}$ satisfying $z_0=1$ and $\bar
z_k=z_{n-k}$ for $k\in\set{1,\ldots,n-1}$. Let
$\bsb=(b_0,b_1,\ldots,b_{n-1})^\t$, where $b_0=2\cos\frac\pi n$ and
$b_1=\bar b_{n-1}=-\cos\frac\pi n+\mathrm{i}\sin\frac\pi n$, and
$b_k=0$ for $k\in\set{2,\ldots, n-2}$. Note that $\bar z_1=z_{n-1}$
\begin{eqnarray*}
\bsb^\t\bsz &=& b_0z_0+b_1z_1+\cdots+b_{n-1}z_{n-1} =
b_0z_0+b_1z_1+b_{n-1}z_{n-1}\\
&=& b_0+b_1z_1+ \bar b_1\bar z_1 = b_0+2\re[b_1z_1]\\
&=& 2\cos\tfrac\pi n  + 2\re\Br{\Pa{-\cos\tfrac\pi n
+\mathrm{i}\sin\tfrac\pi n}z_1} \\
&=&2\Br{\cos\tfrac\pi n - \Pa{x\cos\tfrac\pi n + y\sin\tfrac\pi n}},
\end{eqnarray*}
where $\re\Br{\Pa{-\cos\tfrac\pi n +\mathrm{i}\sin\tfrac\pi
n}z_1}=-\Pa{x\cos\tfrac\pi n + y\sin\tfrac\pi n}$. Thus
\begin{eqnarray*}
&&2[\cos\tfrac\pi n - \Pa{x\cos\tfrac\pi n + y\sin\tfrac\pi n}]
=\bsb^\t\bsz =(\bar\bsF\bsb)^\t(\bsF\bsz).
\end{eqnarray*}
Our goal is to prove that
\begin{eqnarray*}
x\cos\tfrac\pi n + y\sin\tfrac\pi n \leqslant\cos\tfrac\pi n.
\end{eqnarray*}
This is equivalent to prove that
$\bsb^\t\bsz=(\bar\bsF\bsb)^\t(\bsF\bsz)\geqslant0$. To this end, it
suffices to show that $\bsx:=\bar\bsF\bsb\in\bbR^n_{\geqslant0}$
since we have already $\bsF\bsz\in\bbR^n_{\geqslant0}$. In what
follows, we show that $\bsx=\bar\bsF\bsb\in \bbR^n_{\geqslant0}$.
Indeed,
\begin{eqnarray*}
\bsx&=&b_0\ket{\bar\bsf_0} + b_1\ket{\bar\bsf_1} +
b_{n-1}\ket{\bar\bsf_{n-1}}\\
&=&b_0\ket{\bar\bsf_0} + b_1\ket{\bar\bsf_1} + \bar
b_1\ket{\bar\bsf_{n-1}},
\end{eqnarray*}
where $\ket{\bsf_k}$'s are the $k$th column vectors of $\bsF$,
implying that $\bsx=(x_0,x_1,\ldots,x_{n-1})$ is identified by
\begin{eqnarray*}
x_k &=& \frac1{\sqrt{n}}\Pa{b_0 + b_1\bar\omega^k_n + \bar
b_1\bar\omega^{k(n-1)}_n} = \frac1{\sqrt{n}}\Pa{b_0 +
b_1\bar\omega^k_n + \bar
b_1\omega^k_n}\\
&=& \frac2{\sqrt{n}}\Pa{\cos\tfrac\pi n -
\cos\tfrac{(2k+1)\pi}n}=\frac4{\sqrt{n}}\sin\tfrac{(k+1)\pi}n\sin\tfrac{k\pi}n,
\end{eqnarray*}
where $k\in\set{0,1,\ldots,n-1}$. For $n\geqslant3$, we have
$\sin\tfrac{k\pi}n\sin\tfrac{(k+1)\pi}n\geqslant0$ for
$k\in\set{0,1,\ldots,n-1}$. Therefore
$\bsx=\bar\bsF\bsb\in\bbR^n_{\geqslant0}$. Based on this
observation, we see that
$$
2\Br{\cos\tfrac\pi n - \Pa{x\cos\tfrac\pi n + y\sin\tfrac\pi
n}}=(\bar\bsF\bsb)^\t(\bsF\bsz)=\bsx^\t(\bsF\bsz)\geqslant0,
$$
implying that $x\cos\tfrac\pi n + y\sin\tfrac\pi n\leqslant
\cos\tfrac\pi n$.
\end{itemize}
From the definition of $\cB_n|_{\mathrm{circ}}$, together
$\cQ_n=\cP_n$, it follows that
\begin{eqnarray*}
\cB_n|_{\mathrm{circ}} = \Set{z^n\mid z\in \cQ_n} = \Set{z^n\mid
z\in \cP_n}.
\end{eqnarray*}
Moreover,
\begin{eqnarray*}
\partial\cB_n|_{\mathrm{circ}} &=& \Set{z^n\mid
z\in \partial\cP_n} = \Set{z^n\mid z=t+(1-t)\omega_n}\\
&=& \Set{(t+(1-t)\omega_n)^n\mid t\in[0,1]}.
\end{eqnarray*}
The proof is complete.
\end{proof}

\subsection{Convexity of the set $\cB_n|_{\mathrm{circ}}$}

In this subsection, the convexity of $\cB_n|_{\mathrm{circ}}$ will
be established immediately.

\begin{thrm}[\cite{Zhang2025PRA1}]\label{th:conv}
The set $\cB_n|_{\mathrm{circ}}$ is convex in $\bbC$.
\end{thrm}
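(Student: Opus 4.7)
The plan is to exploit the explicit boundary formula $\partial\cB_n|_{\mathrm{circ}} = \{(t+(1-t)\omega_n)^n : t\in[0,1]\}$ from Eq.~\eqref{eq:bcurve} and show this closed curve bounds a convex region that coincides with $\cB_n|_{\mathrm{circ}}$. First, $\cB_n|_{\mathrm{circ}} = \{z^n : z\in\cP_n\}$ is closed and bounded, and contains $0$: the choice $\bsz = (1,0,\ldots,0)^\t$ gives $\bsC(\bsz) = \I_n \in \cG_n \cap \cC_n$, hence $z_1^n = 0 \in \cB_n|_{\mathrm{circ}}$. This produces a star-center for the region.

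Next I would pass to polar coordinates on the boundary. By the rotational symmetry of the regular $n$-gon $\cP_n$, each point on $\partial\cP_n$ at angle $\theta\in[0,2\pi/n]$ has radius $r(\theta) = \cos(\pi/n)/\cos(\theta-\pi/n)$ (as derived in Step~4 of the preceding proof). Under $z\mapsto z^n$, setting $\phi = n\theta$ and $\rho = r^n$ collapses all $n$ edges of $\cP_n$ into the single polar curve
$$
\rho(\phi) = \frac{\cos^n(\pi/n)}{\cos^n\!\left(\frac{\phi-\pi}{n}\right)},\qquad \phi\in[0,2\pi],
$$
which describes $\partial\cB_n|_{\mathrm{circ}}$.

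I would then invoke the classical criterion: a smooth star-shaped region $\{r\leqslant R(\theta)\}$ containing the origin is convex iff $u(\theta) := 1/R(\theta)$ satisfies $u''(\theta)+u(\theta)\geqslant 0$. Setting $\alpha := (\phi-\pi)/n$, so $u(\phi) = \cos^n(\alpha)/\cos^n(\pi/n)$, two differentiations (each contributing a chain-rule factor $1/n$) produce a mix of $\cos^{n-2}\alpha\sin^2\alpha$ and $\cos^n\alpha$ terms; combining via $\sin^2\alpha+\cos^2\alpha=1$ should collapse to
$$
u''(\phi) + u(\phi) = \frac{n-1}{n\cos^n(\pi/n)}\cos^{n-2}\alpha,
$$
which is strictly positive for $n\geqslant 3$ since $|\alpha|\leqslant \pi/n < \pi/2$.

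Two points demand attention. (i) The algebraic bookkeeping must cleanly consolidate the mixed terms into the single positive quantity above; this is the primary technical step. (ii) The closed boundary meets itself at the point $1$ with a corner, since $f'(0) = n(\bar\omega_n - 1) \ne n(1-\omega_n) = f'(1)$; I would verify separately that the interior angle at this corner equals $\pi - 2\pi/n < \pi$, so the corner is compatible with convexity. As an independent cross-check, one can compute the signed curvature $\kappa \propto \mathrm{Im}[\overline{f'(t)}f''(t)]$ of the parametric form $f(t) = (\omega_n + t(1-\omega_n))^n$ directly; with $g(t) := \omega_n + t(1-\omega_n)$ the calculation reduces to a positive multiple of $\mathrm{Im}[\overline{g(t)}(1-\omega_n)] = -\sin(2\pi/n)$, a $t$-independent strictly negative constant for $n\geqslant 3$, confirming constant-sign curvature on the smooth portion of $\partial\cB_n|_{\mathrm{circ}}$.
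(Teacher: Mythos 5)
Your proposal is correct and follows essentially the same route as the paper: both establish convexity by checking the sign of a second-derivative/curvature quantity along the explicit boundary curve $r_n(\theta)=\cos^n\left(\tfrac{\pi}{n}\right)\sec^n\left(\tfrac{\theta-\pi}{n}\right)$, and your polar criterion $u''+u\geqslant 0$ does indeed collapse (as you anticipate) to $\tfrac{n-1}{n}\sec^n\left(\tfrac{\pi}{n}\right)\cos^{n-2}\alpha>0$ for $|\alpha|\leqslant \pi/n<\pi/2$. The only substantive differences are presentational but in your favor: your reciprocal-radius computation covers all of $\theta\in(0,2\pi)$ uniformly and for every $n\geqslant 3$, and you explicitly verify the corner of $\partial\cB_n|_{\mathrm{circ}}$ at the point $1$ (interior angle $\pi-2\pi/n<\pi$), whereas the paper works with $\dif^2 y/\dif x^2$ on $(0,\pi)$, appeals to reflection symmetry for the rest, and leaves the corner implicit.
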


\begin{proof}
The boundary curve $\partial\cB_n|_{\mathrm{circ}}$ in
Eq.~\eqref{eq:bcurve} can be described as
\begin{eqnarray}
r_n(\theta)e^{\mathrm{i}\theta}=\cos^n\Pa{\tfrac\pi
n}\sec^n\Pa{\tfrac{\theta-\pi}n}e^{\mathrm{i}\theta},\quad\theta\in[0,2\pi].
\end{eqnarray}
Indeed, let
\begin{eqnarray}\label{eq:polarequation}
t:=f_n(\theta)=\tfrac12\Br{1-\cot\Pa{\tfrac\pi
n}\tan\Pa{\tfrac{\theta-\pi}n}},\quad \theta\in[0,2\pi].
\end{eqnarray}
It is easily seen that the function $f_n:[0,2\pi]\to[0,1]$ is a
monotonically decreasing and continuous function, moreover, $f_n$ is
one-one and onto \cite{Chen2026}. The graph of the boundary curve is
symmetric with respect to the real axis. Then we show that the graph
of this curve $r_n(\theta)=\cos^n\Pa{\frac\pi
n}\sec^n\Pa{\frac{\theta-\pi}n}$ in Cartesian Coordinate System is
\emph{concave} on the open interval $(0,\pi)$. This implies that the
region below this curve must be a convex region. By the symmetry of
this curve with respect to the real axis, $r_n(\theta)$ is
\emph{convex} on the open interval $(\pi,2\pi)$. Both together form
the set $\cB_n|_{\mathrm{circ}}$. Let us rewrite it as the
parametric equation in parameter $\theta$:
\begin{eqnarray*}
\begin{cases}
x(\theta) =r_n(\theta)\cos\theta=\cos^n\Pa{\frac\pi n}\sec^n\Pa{\frac{\theta-\pi}n}\cos\theta,\\
y(\theta) =r_n(\theta)\sin\theta=\cos^n\Pa{\frac\pi
n}\sec^n\Pa{\frac{\theta-\pi}n}\sin\theta.
\end{cases}
\end{eqnarray*}
In fact, for $n\geqslant4$ and $\theta\in(0,\pi)$,
\begin{eqnarray*}
\frac{\dif^2 y}{\dif x^2} = -\tfrac{n-1}n\sec^n\Pa{\tfrac\pi
n}\cos^{n+1}\Pa{\tfrac{\theta-\pi}n}\csc^3\Pa{\tfrac{\pi+(n-1)\theta}n},
\end{eqnarray*}
which is negative because all factors on the right hand side are
positive up to the minus sign. In summary, $\frac{\dif^2 y}{\dif
x^2}<0$ on $(0,\pi)$, implying that the curve
$r_n(\theta)=\cos^n\Pa{\frac\pi n}\sec^n\Pa{\frac{\theta-\pi}n}$ is
concave on $[0,\pi]$. Furthermore, it is convex on $[\pi,2\pi]$ by
the symmetry with respect to the real axis. Therefore the region
$\cB_n|_{\mathrm{circ}}$ enclosed by the curve
$r_n(\theta)=\cos^n\Pa{\frac\pi n}\sec^n\Pa{\frac{\theta-\pi}n}$,
where $\theta\in[0,2\pi]$, is convex.
\end{proof}

\section{Characterization of the equality $\cB_n=\cB_n|_{\mathrm{circ}}$}
\label{sect:5}

In what follows, we will show that the reverse containment is also
true: $\cB_n\subseteq \cB_n|_{\mathrm{circ}}$. To that end, we need
the following result:

\begin{prop}[\cite{Xu2026PLA}]\label{prop:phaseinvariant}
Given any $n$-tuple of wave functions
$\Psi=(\ket{\psi_1},\ldots,\ket{\psi_n})$ with
$\Tr{\psi_1\cdots\psi_n}\neq0$, where $\psi_k\equiv\proj{\psi_k}$,
there exists a $n$-tuple of wave functions
$\widetilde\Psi=(\ket{\tilde\psi_1},\ldots,\ket{\tilde\psi_n})$ such
that the Gram matrix
$G(\widetilde\Psi)=(\inner{\tilde\psi_i}{\tilde\psi_j})_{n\times n}$
is a circular matrix with
\begin{enumerate}[(i)]
\item
$\inner{\tilde\psi_1}{\tilde\psi_2}=\inner{\tilde\psi_2}{\tilde\psi_3}=\cdots=\inner{\tilde\psi_{n-1}}{\tilde\psi_n}=\inner{\tilde\psi_n}{\tilde\psi_1}$,
\item $\arg \Tr{\psi_1\cdots\psi_n}=\arg
\Tr{\tilde\psi_1\cdots\tilde\psi_n}$,
\item $\Abs{\Tr{\psi_1\cdots\psi_n}}\leqslant
\Abs{\Tr{\tilde\psi_1\cdots\tilde\psi_n}}$, where $\arg
z\in[0,2\pi)$ is the principle argument of the complex number $z$.
\end{enumerate}
\end{prop}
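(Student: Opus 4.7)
The plan is to produce the desired tuple $\widetilde{\Psi}$ in two stages: first, exploit the gauge freedom of pure states to align the phases of the cyclic overlaps $\inner{\psi_k}{\psi_{k+1}}$ to a common argument; second, embed the resulting tuple into an enlarged Hilbert space via a cyclic-averaging construction that equalises the moduli of the cyclic overlaps and forces the Gram matrix to be circulant. The hypothesis $\Tr{\psi_1\cdots\psi_n}\neq 0$ will be used precisely to guarantee that the phase-alignment step is self-consistent around the cycle.

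For the phase-alignment step I would write $\inner{\psi_k}{\psi_{k+1}} = r_k e^{\mathrm{i}\alpha_k}$ with $r_k>0$ and replace each $\ket{\psi_k}$ by $e^{\mathrm{i}\theta_k}\ket{\psi_k}$; since this does not affect the projector $\psi_k$, it leaves both the Bargmann invariant and every requirement in the statement unchanged. Demanding a common argument $\beta$ for every shifted cyclic overlap yields the recursion $\theta_{k+1}-\theta_k \equiv \beta-\alpha_k \pmod{2\pi}$; summing around the cycle forces $n\beta \equiv \sum_k \alpha_k \equiv \arg \Tr{\psi_1\cdots\psi_n}\pmod{2\pi}$, which pins down $\beta$ (modulo $2\pi/n$) as $\frac{1}{n}\arg \Tr{\psi_1\cdots\psi_n}$ and makes the recursion solvable. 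After this relabelling I may assume $\inner{\psi_k}{\psi_{k+1}} = r_k e^{\mathrm{i}\beta}$ for every $k$.

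Next I would enlarge the ambient space to $\bbC^d\otimes\bbC^n$ and define
\[
\ket{\tilde\psi_k} := \frac{1}{\sqrt{n}} \sum_{j=0}^{n-1} \ket{\psi_{k+j}} \otimes \ket{j}, \qquad k=1,\dots,n,
\]
with indices read modulo $n$. Normalisation is immediate, and an index shift gives
\[
\inner{\tilde\psi_i}{\tilde\psi_j}=\frac{1}{n}\sum_{l=1}^{n}\inner{\psi_l}{\psi_{l+(j-i)}},
\]
which depends only on $j-i$ modulo $n$; hence $G(\widetilde\Psi)$ is automatically circulant, and every cyclic overlap equals
\[
\mu = \frac{1}{n}\sum_{l=1}^{n}\inner{\psi_l}{\psi_{l+1}} = \frac{e^{\mathrm{i}\beta}}{n}\sum_{l=1}^{n}r_l,
\]
which settles item (i).

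For the remaining two items I would note that $\Tr{\tilde\psi_1\cdots\tilde\psi_n}=\mu^n$, so item (ii) reads $\arg\mu^n = n\beta \equiv \arg\Tr{\psi_1\cdots\psi_n}\pmod{2\pi}$, which holds by the very construction of $\beta$. Item (iii) then reduces to the AM--GM inequality applied to the positive reals $r_1,\dots,r_n$:
\[
|\mu|^n = \Pa{\frac{1}{n}\sum_l r_l}^n \geqslant \prod_l r_l = \Abs{\Tr{\psi_1\cdots\psi_n}}.
\]
The main obstacle is really the phase-alignment step, whose cyclic consistency rests on $\Tr{\psi_1\cdots\psi_n}\neq 0$; once that is in place, the symmetrising embedding and AM--GM deliver the circulant structure together with the phase-preservation and modulus-monotonicity statements essentially for free.
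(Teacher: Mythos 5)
Your proposal is correct and follows essentially the same two-step strategy as the paper: a gauge transformation aligning all cyclic overlap phases to $\tfrac1n\arg\Tr{\psi_1\cdots\psi_n}$ (with $\Tr{\psi_1\cdots\psi_n}\neq0$ guaranteeing each $r_k>0$ and the cyclic consistency of the recursion), followed by a cyclic averaging of the Gram matrix and an application of AM--GM. The only difference is that where the paper applies the circulant quantum channel to $G(\Psi')$ and then invokes the abstract existence of a tuple realizing the resulting positive semidefinite circulant matrix as a Gram matrix, you exhibit that tuple explicitly as $\ket{\tilde\psi_k}=\tfrac1{\sqrt n}\sum_{j}\ket{\psi_{k+j}}\otimes\ket{j}$ in $\bbC^d\otimes\bbC^n$ --- a constructive touch that makes the ambient dimension explicit but implements the identical averaging operation.
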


\begin{proof}
Let $\Tr{\psi_1\cdots\psi_n}=re^{\mathrm{i}\theta}$ with
$r=\abs{\Tr{\psi_1\cdots\psi_n}}>0$ and $\theta\in[0,2\pi)$. Let
$\Inner{\psi_k}{\psi_{k\oplus1}}=r_ke^{\mathrm{i}\theta_k}$, where
$r_k=\abs{\Inner{\psi_k}{\psi_{k\oplus1}}}>0$ and
$\theta_k\in[0,2\pi)$ for $k\in\set{1,2,\ldots,n}$. Thus
$r=r_1r_2\cdots r_n$.\\
\textbf{Step 1: Based on $\Psi$, we define a new tuple of wave
functions $\Psi'=(\ket{\psi'_1},\ldots,\ket{\psi'_n})$ by choosing a
diagonal unitary $\bsT\in\sfU(1)^{\times n}$ such that
$G(\Psi')=\bsT^\dagger\bsG(\Psi)\bsT$}. In fact, by
Proposition~\ref{prop:jequiv}, $\Psi'$ is joint unitary equivalent
to $\Psi$ if and only if there exists a diagonal unitary
$\bsT\in\sfU(1)^{\times n}$ such that $G(\Psi')=\bsT^\dagger
G(\Psi)\bsT$. Thus we can choose suitable
$\bsT=\diag(e^{\mathrm{i}\alpha_1},\ldots,e^{\mathrm{i}\alpha_n})$,
and construct $\Psi'=(\ket{\psi'_1},\ldots,\ket{\psi'_n})$ by
defining $\ket{\psi'_k} :=e^{\mathrm{i}\alpha_k}\ket{\psi_k}$, where
$k\in\set{1,\ldots,n}$. We require the new tuple $\Psi'$ to satisfy
the condition that all consecutive inner products have the same
phase:
\begin{eqnarray*}
\arg\Inner{\psi'_1}{\psi'_2}=\arg\Inner{\psi'_2}{\psi'_3}=\cdots=\arg\Inner{\psi'_{n-1}}{\psi'_n}=\arg\Inner{\psi'_n}{\psi'_1}=\frac\theta
n
\end{eqnarray*}
which leads to the following constraints on the phase angles
$\set{\alpha_k\mid k=1,\ldots,n}\subset\bbR$:
$$
\alpha_j=\alpha_1+(j-1)\frac\theta n-\sum^{j-1}_{k=1}\theta_k,\quad
j\in\set{2,\ldots,n}.
$$
In what follows, we explain about why we can do this! That is, we
can show that the existence of $\alpha_k$'s satisfying the
constraints. Indeed, we rewrite
$\Tr{\psi_1\cdots\psi_n}=re^{\mathrm{i}\theta}$ in two ways:
\begin{eqnarray*}
\begin{cases}
re^{\mathrm{i}\theta}=(r_1e^{\mathrm{i}\frac\theta n})\cdots (r_n
e^{\mathrm{i}\frac\theta n}),\\
re^{\mathrm{i}\theta}=(r_1e^{\mathrm{i}\theta_1})\cdots
(r_ne^{\mathrm{i}\theta_n}).
\end{cases}
\end{eqnarray*}
What we want is that $r_ke^{\mathrm{i}\frac\theta n}$ will be
identified with
$r_ke^{\mathrm{i}\theta_k}=\Inner{\psi_k}{\psi_{k\oplus1}}$ \emph{up
to a phase factor}. This leads to the definition $\ket{\psi'_k}
:=e^{\mathrm{i}\alpha_k}\ket{\psi_k}$ for some $\alpha_k\in\bbR$
\emph{to be determined}. In such definition, we hope that
$r_ke^{\mathrm{i}\frac\theta n}=\inner{\psi'_k}{\psi'_{k\oplus1}}$.
Assuming this, we get that
\begin{eqnarray*}
r_ke^{\mathrm{i}\frac\theta n}&=&\inner{\psi'_k}{\psi'_{k\oplus1}}=
e^{\mathrm{i}(\alpha_{k\oplus1}-\alpha_k)}\inner{\psi_k}{\psi_{k\oplus1}}\\
&=&
e^{\mathrm{i}(\alpha_{k\oplus1}-\alpha_k)}r_ke^{\mathrm{i}\theta_k}=r_ke^{\mathrm{i}(\alpha_{k\oplus1}-\alpha_k+\theta_k)}.
\end{eqnarray*}
Under the assumption about the existence of
$\set{\alpha_k}^n_{k=1}\subset\bbR$, we derive the constraints: The
requirement concerning unknown constants $\alpha_k$'s are determined
by
\begin{eqnarray*}
\frac\theta n=\alpha_{k\oplus1}-\alpha_k+\theta_k
\Longleftrightarrow \alpha_{k\oplus1}-\alpha_k=\frac\theta
n-\theta_k,
\end{eqnarray*}
implying that
\begin{eqnarray*}
\alpha_{j\oplus1}=\alpha_1+(j-1)\frac\theta
n-\sum^{j-1}_{k=1}\theta_k,\quad j\in\set{2,\ldots,n}.
\end{eqnarray*}
Note that $\alpha_1$ is chosen freely. Let
\begin{eqnarray*}
\gamma_j\equiv
\gamma_j(\theta,\theta_1,\ldots,\theta_n):=(j-1)\frac\theta
n-\sum^{j-1}_{k=1}\theta_k,\quad j\in\set{2,\ldots,n}.
\end{eqnarray*}
Thus
$\bsT=e^{\mathrm{i}\alpha_1}\diag\Pa{1,e^{\mathrm{i}\gamma_2},\ldots,e^{\mathrm{i}\gamma_n}}$.
Thus $\Psi'$ can be defined reasonably by using such constants
$\alpha_k$'s satisfying the mentioned constraints. In this
situation, $G(\Psi')=\bsT^\dagger G(\Psi)\bsT$ for
$\bsT=\diag(e^{\mathrm{i}\alpha_1},\ldots,e^{\mathrm{i}\alpha_n})\in\sfU(1)^{\times
n}$.\\
\textbf{Step 2: Based on $\Psi'$, we define a new tuple of wave
functions
$\tilde\Psi=(\ket{\tilde\psi_1},\ldots,\ket{\tilde\psi_n})$ such
that $G(\tilde\Psi)$ is a circulant matrix}. By acting the
\emph{circulant quantum channel} $\Phi$ on the Gram matrix
$G(\Psi')$ for the constructed tuple of wave functions $\Psi'$ in
step 1, we have that $\zero\leqslant\Phi(G(\Psi'))\in\cC_n$. Thus
there exists some tuple of wave functions
$\widetilde\Psi=(\ket{\tilde\psi_1},\ldots,\ket{\tilde\psi_n})$\footnote{We
cannot guarantee that $\tilde\Psi$ and $\Psi'$ live in the same
underlying space.} such that $G(\widetilde\Psi) = \Phi(G(\Psi'))$,
and
\begin{eqnarray*}
\inner{\tilde\psi_j}{\tilde\psi_{j\oplus1}}&=&[G(\widetilde\Psi)]_{j,j\oplus1}=[\Phi(G(\Psi'))]_{j,j\oplus1}\\
&=&\frac1n\sum^n_{k=1}\inner{\psi'_k}{\psi'_{k\oplus1}}=\Pa{\frac1n\sum^n_{k=1}r_k}e^{\mathrm{i}\frac\theta
n},\quad\forall j\in\set{1,2,\ldots,n}.
\end{eqnarray*}
By the Arithmetic-Geometric Mean (AM-GM) Inequality, we have
\begin{eqnarray*}
&&\Abs{\Tr{\psi_1\cdots\psi_n}} =
\Abs{\prod^n_{k=1}\Inner{\psi_k}{\psi_{k\oplus1}}} =
\prod^n_{k=1}r_k\\
&&\leqslant \Pa{\frac1n\sum^n_{k=1}r_k}^n =
\Abs{\prod^n_{k=1}\Inner{\tilde\psi_k}{\tilde\psi_{k\oplus1}}}=
\Abs{\Tr{\tilde\psi_1\cdots\tilde\psi_n}}.
\end{eqnarray*}
In addition, we also see that
\begin{eqnarray*}
\begin{cases}
\Tr{\psi_1\cdots\psi_n}
=\Pa{\prod^n_{k=1}r_k}e^{\mathrm{i}\theta},\\
\Tr{\tilde\psi_1\cdots\tilde\psi_n}
=\Pa{\frac1n\sum^n_{k=1}r_k}^ne^{\mathrm{i}\theta}.
\end{cases}
\end{eqnarray*}
This completes the proof.
\end{proof}

We note that the technique used in the proof of
Proposition~\ref{prop:phaseinvariant} is a variation of one was
previously employed in studies of the numerical ranges of weighted
cyclic matrices \cite{Chien2023LAA,Gau2024LAA}. Specifically, the
so-called $n\times n$ \emph{weighted cyclic matrix} is of the
following shape:
\begin{eqnarray}
\bsA=\Pa{\begin{array}{cccc}
           0 & a_1 &  & \\
            & 0 & \ddots & \\
            &  & \ddots & a_{n-1} \\
           a_n &  &  & 0
         \end{array}
}\quad (a_k\in\bbC \text{ for all } k\in\set{1,\ldots,n}).
\end{eqnarray}
Let $a_k=\abs{a_k}e^{\mathrm{i}\theta_k}$ for $\theta_k\in\bbR$ and
\begin{eqnarray}
\bsA'=\Pa{\begin{array}{cccc}
           0 & \abs{a_1} &  & \\
            & 0 & \ddots & \\
            &  & \ddots & \abs{a_{n-1}} \\
           \abs{a_n} &  &  & 0
         \end{array}
}.
\end{eqnarray}
Then we have that both $\bsA$ and
$e^{\mathrm{i}\frac1n\sum^n_{k=1}\theta_k}\bsA'$ are unitary similar
via a diagonal unitary matrix.

\begin{thrm}[\cite{Xu2026PLA,Pratapsi2025PRA}]
For each integer $n\geqslant3$, it holds that
\begin{eqnarray}
\cB_n=\cB_n|_{\mathrm{circ}}.
\end{eqnarray}
\end{thrm}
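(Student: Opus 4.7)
The containment $\cB_n|_{\mathrm{circ}}\subseteq\cB_n$ is immediate from the definitions, so the plan targets the reverse containment $\cB_n\subseteq\cB_n|_{\mathrm{circ}}$. The first step is to reduce to pure states: spectrally decomposing $\rho_k=\sum_{i_k}p_{k,i_k}\proj{\psi_{k,i_k}}$ and expanding,
\[
\Tr{\rho_1\cdots\rho_n}=\sum_{i_1,\ldots,i_n}p_{1,i_1}\cdots p_{n,i_n}\Tr{\proj{\psi_{1,i_1}}\cdots\proj{\psi_{n,i_n}}},
\]
exhibits every element of $\cB^\bullet_n$ as a convex combination of elements of $\cB^\circ_n$. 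Because $\cB_n|_{\mathrm{circ}}$ is convex by Theorem~\ref{th:conv}, it is enough to prove the pure-state inclusion $\cB^\circ_n\subseteq\cB_n|_{\mathrm{circ}}$.

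Next, fix a pure-state tuple $\Psi$ and set $z=\Tr{\psi_1\cdots\psi_n}$; I would split into two cases. If $z=0$, observe that $0$ is the centroid of the $n$th roots of unity, so $0\in\cP_n$ and therefore $0=0^n\in\{w^n:w\in\cP_n\}=\cB_n|_{\mathrm{circ}}$. If $z\neq0$, invoke Proposition~\ref{prop:phaseinvariant} to produce a tuple $\widetilde\Psi$ whose Gram matrix $G(\widetilde\Psi)=\Phi(G(\Psi'))$ is circulant (since the fixed-point set of the circulant quantum channel $\Phi$ is exactly $\cC_n$) and whose Bargmann invariant $\tilde z$ satisfies $\arg\tilde z=\arg z$ together with $\abs{\tilde z}\geqslant\abs{z}$. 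Then $\tilde z\in\cB_n|_{\mathrm{circ}}$, and $z$ lies on the closed segment from $0$ to $\tilde z$ (same argument, smaller or equal modulus); both endpoints belong to the convex set $\cB_n|_{\mathrm{circ}}$, so $z$ does as well.

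The substantive work is already packaged inside Proposition~\ref{prop:phaseinvariant}, which bundles two nontrivial moves: a diagonal rephasing $\Psi\mapsto\Psi'$ aligning all consecutive overlaps to the common phase $\arg(z)/n$, followed by the circulant quantum channel $\Phi$ applied to $G(\Psi')$, with the AM--GM inequality guaranteeing that the modulus of the Bargmann invariant only grows. Once this is accepted, the remainder is a short convexity-plus-origin closure; the only detail to verify carefully is $0\in\cB_n|_{\mathrm{circ}}$, which is immediate from $\cB_n|_{\mathrm{circ}}=\{w^n:w\in\cP_n\}$ together with $0\in\cP_n$ for $n\geqslant3$. One minor subtlety worth flagging is that the tuple $\widetilde\Psi$ supplied by Proposition~\ref{prop:phaseinvariant} may live in a different Hilbert space than $\Psi$, but this is harmless because $\cB_n$ in \eqref{eq:Bn} is defined as a union over all dimensions $d$.
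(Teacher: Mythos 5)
Your proof is correct and follows essentially the same route as the paper: the trivial containment one way, then Proposition~\ref{prop:phaseinvariant} to produce a circulant representative $\tilde z$ with the same argument and larger modulus, and convexity of $\cB_n|_{\mathrm{circ}}$ (star-shapedness about the origin) to pull $z$ back along the segment $[0,\tilde z]$. You are in fact slightly more careful than the paper in two spots it glosses over --- the explicit spectral-decomposition reduction from mixed to pure tuples and the verification that $0\in\cB_n|_{\mathrm{circ}}$ --- but these are completions of the same argument, not a different one.
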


\begin{proof}
With the preparations in the preceding, we can show that
$\cB_n\subseteq\cB_n|_{\mathrm{circ}}$. Indeed, choose any
$z\in\cB_n$, if $z=0$, apparently $0\in \cB_n|_{\mathrm{circ}}$; if
$z\neq0$, which can be realized as $z=\Tr{\psi_1\cdots\psi_n}$ for
an $n$-tuple of wave functions
$\Psi=(\ket{\psi_1},\ldots,\ket{\psi_n})$. By
proposition~\ref{prop:phaseinvariant}, there exists an $n$-tuple of
wave functions $\tilde
\Psi=(\ket{\tilde\psi_1},\ldots,\ket{\tilde\psi_n})$ such that
$G(\tilde\Psi)\in\cG_n\cap\cC_n$ with
\begin{enumerate}[(i)]
\item
$\inner{\tilde\psi_1}{\tilde\psi_2}=\inner{\tilde\psi_2}{\tilde\psi_3}=\cdots=\inner{\tilde\psi_{n-1}}{\tilde\psi_n}=\inner{\tilde\psi_n}{\tilde\psi_1}$,
\item $\arg \Tr{\psi_1\cdots\psi_n}=\arg
\Tr{\tilde\psi_1\cdots\tilde\psi_n}$,
\item $\Abs{\Tr{\psi_1\cdots\psi_n}}\leqslant
\Abs{\Tr{\tilde\psi_1\cdots\tilde\psi_n}}$.
\end{enumerate}
Let $\tilde z=\Tr{\tilde\psi_1\cdots\tilde\psi_n}$. From the above
proof, we see that $\tilde
z=\Pa{\frac1n\sum^n_{k=1}r_k}^ne^{\mathrm{i}\theta}\in\cB_n|_{\mathrm{circ}}$.
Due to the fact that $\cB_n|_{\mathrm{circ}}$ is convex by
Theorem~\ref{th:conv}, and thus star-shaped, it follows that
\begin{eqnarray}
z=\Pa{\prod^n_{k=1}r_k}e^{\mathrm{i}\theta}=\frac{\prod^n_{k=1}r_k}{\Pa{\frac1n\sum^n_{k=1}r_k}^n}\Pa{\frac1n\sum^n_{k=1}r_k}^ne^{\mathrm{i}\theta}=\frac{\prod^n_{k=1}r_k}{\Pa{\frac1n\sum^n_{k=1}r_k}^n}\tilde
z\in\cB_n|_{\mathrm{circ}}
\end{eqnarray}
because
$\frac{\prod^n_{k=1}r_k}{\Pa{\frac1n\sum^n_{k=1}r_k}^n}\in(0,1]$.
Therefore $\cB_n\subseteq\cB_n|_{\mathrm{circ}}$. It is trivially
that $\cB_n|_{\mathrm{circ}}\subseteq\cB_n$. Finally, we obtain that
$\cB_n=\cB_n|_{\mathrm{circ}}$.
\end{proof}

\begin{prop}[\cite{Zhang2025PRA1}]
The curve $r_n(\theta)e^{\mathrm{i}\theta}=\cos^n(\tfrac\pi
n)\sec^n(\tfrac{\theta-\pi}n)e^{\mathrm{i}\theta}$, where
$\theta\in[0,2\pi]$, can be attained by a family of single-parameter
qubit pure states. Based on this result, we can infer that
\begin{eqnarray*}
\cB_n=\cB_n(2).
\end{eqnarray*}
\end{prop}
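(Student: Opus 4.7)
The plan is to exhibit an explicit one-parameter family of qubit wave functions whose $n$th-order Bargmann invariant traces out the full boundary curve $\partial\cB_n$, and then to combine this with the convexity of $\cB_n$ established in Theorem~\ref{th:conv} (together with the identification $\cB_n=\cB_n|_{\mathrm{circ}}$) to conclude $\cB_n=\cB_n(2)$.

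For the construction, I would set
$$\ket{\psi_k(t)}:=\sqrt{t}\,\ket{0}+\sqrt{1-t}\,\omega_n^{k-1}\ket{1},\qquad k=1,\ldots,n,\ t\in[0,1].$$
A direct calculation yields $\iinner{\psi_j(t)}{\psi_k(t)}=t+(1-t)\omega_n^{k-j}$, so the Gram matrix $G(\Psi(t))$ is circulant. In particular, using $\omega_n^{-(n-1)}=\omega_n$, every consecutive inner product $\iinner{\psi_k(t)}{\psi_{k\oplus 1}(t)}$ takes the common value $z_1(t):=t+(1-t)\omega_n$, so
$$\Delta_n(\Psi(t))=z_1(t)^n=\bigl(t+(1-t)\omega_n\bigr)^n,$$
which matches exactly the parameterization of $\partial\cB_n|_{\mathrm{circ}}$ from Eq.~\eqref{eq:bcurve}. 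Since $\cB_n=\cB_n|_{\mathrm{circ}}$, this already shows $\partial\cB_n\subseteq\cB_n(2)$.

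To infer $\cB_n=\cB_n(2)$, the inclusion $\cB_n(2)\subseteq\cB_n$ is immediate. For the reverse inclusion, note that $\cB_n(2)$ is a compact connected subset of $\bbC$ (being the continuous image of the compact connected space of $n$-tuples of qubits), it contains $\partial\cB_n$ by the above, and it contains the origin (take $\ket{\psi_1}=\ket{0}$, $\ket{\psi_2}=\ket{1}$, and the remaining states arbitrary). Since $\cB_n$ is a convex region bounded by the Jordan curve $\partial\cB_n$, any interior point $w$ lies on a chord whose endpoints belong to $\partial\cB_n$; extending the extremal family above to the multi-parameter qubit family $\ket{\psi_k}=\sqrt{t}\,\ket{0}+\sqrt{1-t}\,e^{\mathrm{i}\phi_k}\ket{1}$, with phases $\phi_k$ freely varying under only a cyclic constraint, yields a continuous deformation of the Bargmann invariant whose image sweeps through $w$, giving $w\in\cB_n(2)$.

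The hardest step is the last one. Interior points of $\cB_n$ correspond to $z^n$ with $z$ strictly inside the polygon $\cP_n$, and the associated circulant Gram matrix generically has rank exceeding two, so it cannot be realized on $\bbC^2$ at the level of the full Gram matrix. The rescue is that the Bargmann invariant depends only on the cyclic product of $n$ consecutive inner products, not on the whole Gram matrix; one is therefore free to abandon the circulant structure and work with more general qubit configurations whose Gram matrices still have rank at most two. The genuine technical point is to verify that this non-circulant qubit family produces a two-dimensional image in $\bbC$ which, together with boundary containment and convexity of $\cB_n$, fills all of $\cB_n$, thereby confirming $\cB_n=\cB_n(2)$.
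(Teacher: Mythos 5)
Your boundary construction is exactly the paper's: the states $\ket{\psi_k(t)}=\sqrt{t}\,\ket{0}+\sqrt{1-t}\,\omega_n^{k-1}\ket{1}$ are the Oszmaniec--Brod--Galv\~{a}o states $\sin\gamma\ket{0}+\omega_n^{k}\cos\gamma\ket{1}$ under the substitution $t=\sin^2\gamma$, and your computation $\Delta_n(\Psi(t))=(t+(1-t)\omega_n)^n$, identified with the parameterization of $\partial\cB_n|_{\mathrm{circ}}$, is precisely the paper's argument. Up to the conclusion $\partial\cB_n\subseteq\cB_n(2)$ the proposal is correct and follows the same route.

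The gap is in the last step, and you have flagged it yourself. Knowing that $\cB_n(2)$ is compact, connected, contains $\partial\cB_n$ and contains $0$ does not force $\cB_n(2)\supseteq\cB_n$ (a circle together with its centre has all of these properties relative to the disk), and the assertion that the family $\sqrt{t}\ket{0}+\sqrt{1-t}e^{\mathrm{i}\phi_k}\ket{1}$ ``sweeps through'' a prescribed interior point $w$ is not proved; the chord picture gives no control over which points of the chord are actually attained. (The paper itself is equally terse here, passing directly from boundary attainment to $\cB_n\subset\cB_n(2)$.) A concrete way to close the gap with tools already in the paper: keep $\ket{\psi_1(t)},\ldots,\ket{\psi_{n-1}(t)}$ from the boundary family and let the last state $\ket{\phi}$ range over all qubit pure states, so that
\begin{eqnarray*}
\Tr{\psi_1\cdots\psi_{n-1}\phi}=\Inner{\psi_1}{\psi_2}\cdots\Inner{\psi_{n-2}}{\psi_{n-1}}\cdot\Innerm{\phi}{\out{\psi_1}{\psi_{n-1}}}{\phi}.
\end{eqnarray*}
By the rank-one numerical range formula quoted in Section~\ref{sect:6}, the last factor ranges over the filled ellipse $\Set{z:\abs{z}+\abs{z-\Inner{\psi_{n-1}}{\psi_1}}\leqslant1}$, which contains the whole segment from $0$ to $\Inner{\psi_{n-1}}{\psi_1}$; hence $s\,(t+(1-t)\omega_n)^n\in\cB_n(2)$ for all $s,t\in[0,1]$. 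Since $\cB_n$ is compact, convex and contains $0$, every $z\in\cB_n$ lies on a segment from $0$ to a boundary point, so this radial fill is all of $\cB_n$, giving $\cB_n\subseteq\cB_n(2)$ and, with the trivial reverse inclusion, the claim. Your side remark that interior points cannot be reached while keeping the Gram matrix circulant (its rank generically exceeds $2$) is correct, and is exactly why some non-circulant completion such as this one is unavoidable.
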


\begin{proof}
For the orthonormal basis $\set{\ket{0},\ket{1}}$ of $\bbC^2$,
consider the $n$-tuple of qubit pure states
\begin{eqnarray}
\ket{\psi_{k+1}(\gamma)}:=\sin\gamma\ket{0}+\omega^k_n\cos\gamma
\ket{1},
\end{eqnarray}
called \emph{Oszmaniec-Brod-Galv\~{a}o's states}, where
$k\in\set{0,1,\ldots,n-1}$ and
$\omega_n=\exp\Pa{\tfrac{2\pi\mathrm{i}}n}$. Then,
\begin{eqnarray*}
\Inner{\psi_k(\gamma)}{\psi_{k+1}(\gamma)} =
\sin^2\gamma+\omega_n\cos^2\gamma=\Inner{\psi_n(\gamma)}{\psi_1(\gamma)}.
\end{eqnarray*}
Therefore
\begin{eqnarray*}
\Tr{\psi_1\psi_2\cdots\psi_n} &=&
(\sin^2\gamma+\omega_n\cos^2\gamma)^n = [t+(1-t)\omega_n]^n,
\end{eqnarray*}
where $\sin^2\gamma=t$. For each $\gamma$, there exists uniquely $t$
or $\theta\in[0,2\pi]$ via Eq.~\eqref{eq:polarequation} such that
\begin{eqnarray*}
\Tr{\psi_1\psi_2\cdots\psi_n} = [t+(1-t)\omega_n]^n =
\cos^n\Pa{\tfrac\pi n}\sec^n\Pa{\tfrac{\theta-\pi}n}.
\end{eqnarray*}
Based on this result, we see that $\cB_n\subset\cB_n(2)$. In
addition, it is trivially that $\cB_n(2)\subset\cB_n$. Therefore
$\cB_n=\cB_n(2)$. This completes the proof.
\end{proof}
In fact, we infer from this result that $\cB_n(d)=\cB_n(2)$. We can
summarize these results
\cite{Fernandes2024PRL,Li2025PRA,Zhang2025PRA1,Xu2026PLA,Pratapsi2025PRA}
in the preceding sections into the following:

\begin{thrm}\label{th:allproperties}
For the set of $n$th-order Bargmann invariants $\cB_n$, where
$3\leqslant n\in\bbN$, it holds that
\begin{enumerate}[(i)]
\item $\cB^\circ_n(d)=\cB^\bullet_n(d)=:\cB_n(d)$.
\item $\cB_n(d)=\cB_n(2)=:\cB_n$ for all integer $d\geqslant2$.
\item $\cB_n=\cB_n|_\mathrm{circ}$.
\item The boundary curve $\partial\cB_n=\partial\cB_n|_\mathrm{circ}$ is identified with the graph of the
polar equation $r_n(\theta)e^{\mathrm{i}\theta}=\cos^n(\frac\pi
n)\sec^n(\frac{\theta-\pi}n)e^{\mathrm{i}\theta}$, where
$\theta\in[0,2\pi]$. See the Figure~\ref{fig:bcurve}.
\item The set $\cB_n$ is a convex set in $\bbC$.
\end{enumerate}
\end{thrm}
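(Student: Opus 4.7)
The plan is to recognize that items (iii), (iv), and (v) have essentially already been established in Sections~\ref{sect:4} and~\ref{sect:5}, so the remaining work is to bootstrap from the pure-state theory to the mixed-state theory to obtain (i), and to use the boundary-achieving qubit family to obtain (ii). The natural logical order will be: first assemble (iii)--(v) for the pure-state set $\cB^\circ_n$, then deduce (i), and finally (ii).

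For (iii) and (iv), I would invoke the characterization $\cQ_n = \cP_n$ from Section~\ref{sect:4}, which immediately yields the boundary parametrization $\partial\cB_n|_{\mathrm{circ}} = \set{(t+(1-t)\omega_n)^n : t \in [0,1]}$; the polar form in (iv) follows via the substitution $t = f_n(\theta)$ of~\eqref{eq:polarequation}. The equality $\cB^\circ_n = \cB_n|_{\mathrm{circ}}$ follows from the theorem preceding Theorem~\ref{th:allproperties}, whose proof uses only pure tuples together with Proposition~\ref{prop:phaseinvariant}, and hence applies to $\cB^\circ_n$ without circularity. Convexity (v) of $\cB^\circ_n$ is then inherited from Theorem~\ref{th:conv}.

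For (i), the inclusion $\cB^\circ_n(d) \subseteq \cB^\bullet_n(d)$ is immediate. For the reverse, I would write each state in its spectral form $\rho_k = \sum_{j_k} p^{(k)}_{j_k}\proj{\psi^{(k)}_{j_k}}$ and expand
\begin{eqnarray*}
\Tr{\rho_1 \cdots \rho_n} &=& \sum_{j_1,\ldots,j_n}\Pa{\prod^n_{k=1} p^{(k)}_{j_k}}\Tr{\psi^{(1)}_{j_1}\cdots\psi^{(n)}_{j_n}},
\end{eqnarray*}
which exhibits any mixed Bargmann invariant as a convex combination, in the same dimension $d$, of pure-state Bargmann invariants in $\cB^\circ_n(d)$. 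The already-established convexity of $\cB^\circ_n$ then gives $\cB^\bullet_n(d) \subseteq \cB^\circ_n(d)$, completing (i).

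For (ii), the Oszmaniec--Brod--Galv\~ao qubit family realizes the entire boundary $\partial\cB_n$ within $\cB^\circ_n(2)$, and $0 \in \cB_n(2)$ is attained by a tuple containing two orthogonal qubit states. Combining these observations with the convexity of $\cB_n$ from (v), one deduces $\cB_n \subseteq \cB_n(2)$ by interpolating with mixed qubit states (for instance, mixing a single factor $\rho_k$ with $\I/2$) and appealing to continuity to fill the interior. The hard part here is precisely this interior-filling step: the naive convex combination of two pure $n$-tuples does not yield the convex combination of their Bargmann invariants because of the multiplicative cross-terms in the $n$-fold trace, so some continuity or intermediate-value argument in $\bbC$ is needed. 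The main obstacle across the whole proof is therefore the logical ordering---ensuring the mixed-state reduction (i) is justified before it is used in the interior-filling step for (ii), with no implicit circularities.
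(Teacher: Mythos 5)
Your overall plan---assembling items (iii)--(v) from Sections~\ref{sect:4} and~\ref{sect:5} and then deducing (i) and (ii)---is exactly what the paper does (its proof is literally ``follows from the preceding sections''), and your spectral-decomposition identity
$\Tr{\rho_1\cdots\rho_n}=\sum_{j_1,\ldots,j_n}\bigl(\prod_k p^{(k)}_{j_k}\bigr)\Tr{\psi^{(1)}_{j_1}\cdots\psi^{(n)}_{j_n}}$
is the right (and only available) mechanism for (i). The gap is in the logical ordering, and it runs in the opposite direction from the one you flag. To conclude $\cB^\bullet_n(d)\subseteq\cB^\circ_n(d)$ for the \emph{same} $d$, you need convexity of $\cB^\circ_n(d)$ for that fixed $d$; what Theorem~\ref{th:conv} together with $\cB_n=\cB_n|_{\mathrm{circ}}$ gives you is convexity of the union $\cB^\circ_n=\bigcup_d\cB^\circ_n(d)$, which only places your convex combination in $\cB^\circ_n(d')$ for \emph{some} $d'$. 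Fixed-$d$ convexity is precisely item (ii) ($\cB^\circ_n(d)=\cB^\circ_n(2)=\cB_n$, a convex set). But your proof of (ii) proposes to fill the interior of $\cB_n(2)$ by ``mixing a single factor $\rho_k$ with $\I/2$'', i.e.\ by using mixed qubit states---which is item (i) at $d=2$. So (i) needs (ii) and your (ii) needs (i): the loop does not close.

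The fix, and the route the paper (via its cited Proposition on the Oszmaniec--Brod--Galv\~ao states and Refs.~\cite{Zhang2025PRA1,Pratapsi2025PRA}) takes, is to establish (ii) entirely at the pure-state level first: one must exhibit pure qubit tuples realizing every point of $\cB_n=\set{z^n: z\in\cP_n}$, not just the boundary arc $[t+(1-t)\omega_n]^n$. Note that ``closed, connected, contains $\partial\cB_n$, and is contained in $\cB_n$'' does not force $\cB^\circ_n(2)=\cB_n$ (the boundary curve itself satisfies all of these), so the interior-filling genuinely requires either a two-parameter generalization of the OBG family or the explicit constructions in the cited references; you correctly identify this as the hard step but leave it unresolved. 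Once (ii) is in hand for pure states, every $\cB^\circ_n(d)$ equals the convex set $\cB_n$, and your spectral-decomposition argument then delivers (i) cleanly.
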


\begin{proof}
The proof follows immediately from the preceding sections.
\end{proof}

\begin{figure}[ht]\centering
{\begin{minipage}[b]{0.6\linewidth}
\includegraphics[width=1\textwidth]{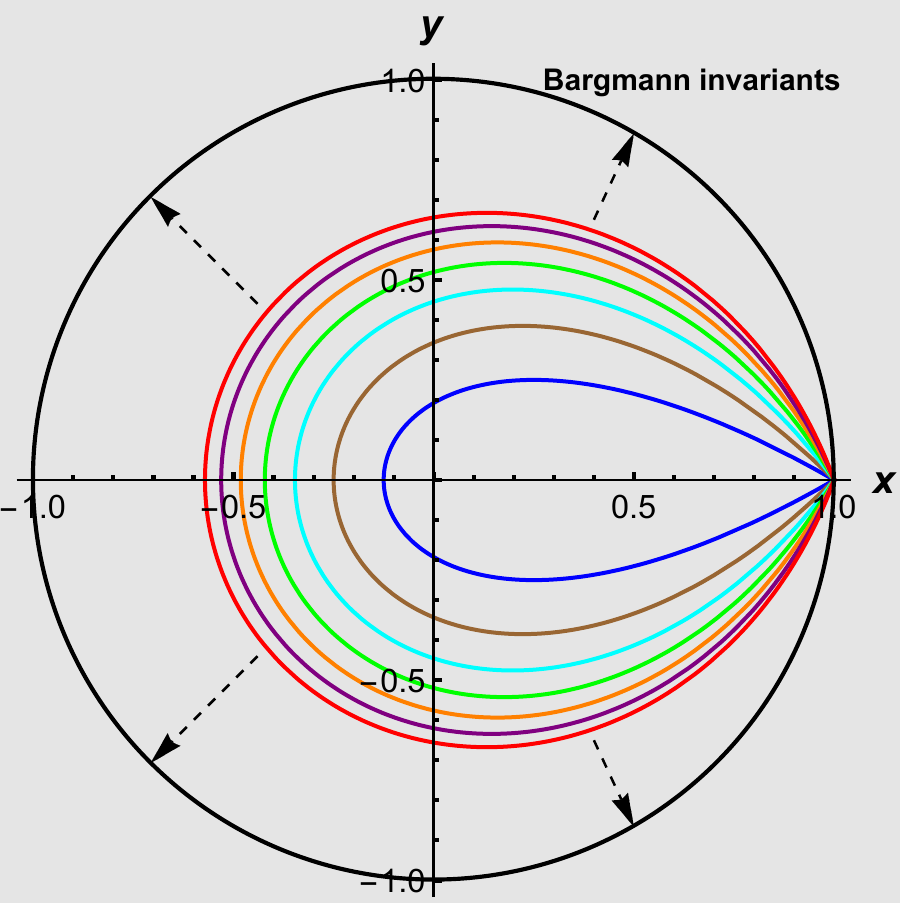}
\end{minipage}}
\caption{The graphs of boundary curves $\partial\cB_n$'s for
$n\in\set{3(\text{blue}),4(\text{brown}),5(\text{cyan}),
6(\text{green}),7(\text{orange}),8(\text{purple}),9(\text{red})}$.
The black curve is the unit circle. Here the horizontal axis means
the real part $x=\re\Tr{\psi_1\cdots\psi_n}$ and the vertical axis
stands for the imaginary part
$y=\im\Tr{\psi_1\cdots\psi_n}$.}\label{fig:bcurve}
\end{figure}

Based on the above Theorem~\ref{th:allproperties}, a longstanding
open problem \cite{Wagner2025PhD}, regarding the convexity of the
set $\cB_n$ of $n$th-order Bargmann invariants, is perfectly
resolved.

\begin{cor}[\cite{Li2025PRA,Zhang2025PRA2}]
It holds that
\begin{eqnarray}
\cB_n\subset \Br{-\cos^n(\tfrac\pi n),1}\times[-\tau_n,\tau_n],
\end{eqnarray}
where
\begin{eqnarray}
\tau_n:=\cos^n(\tfrac\pi n)\sec^{n-1}\Pa{\tfrac\pi{2(n-1)}}.
\end{eqnarray}
\end{cor}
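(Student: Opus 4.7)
The plan is to use the explicit polar description of $\partial\cB_n$ from Theorem~\ref{th:allproperties}, to extract the extremes of $z \mapsto \re z$ and $z \mapsto \im z$ on this curve, and to appeal to the convexity of $\cB_n$. Since $\cB_n$ is convex and symmetric about the real axis, both linear functionals attain their extremes on $\partial\cB_n$, so the problem reduces to optimizing $x(\theta) := r_n(\theta) \cos \theta$ and $y(\theta) := r_n(\theta) \sin\theta$ for $\theta \in [0, 2\pi]$, where $r_n(\theta) = \cos^n(\pi/n) \sec^n((\theta-\pi)/n)$.

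First I would handle the real part by evaluating at the two real-axis intersections of $\partial \cB_n$: at $\theta = 0$ one has $x(0) = r_n(0) = 1$, and at $\theta = \pi$ one has $x(\pi) = -r_n(\pi) = -\cos^n(\pi/n)$. The upper bound $1$ is also forced by $\abs{z}\leqslant 1$ on $\cB_n$ (property (c) of Section~\ref{sect:3}). Combined with the symmetry of $\cB_n$ about the real axis and convexity, this yields $\re z \in [-\cos^n(\pi/n), 1]$ for every $z \in \cB_n$.

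For the imaginary part I would introduce the change of variable $\phi = (\theta - \pi)/n$, so that $\sin\theta = -\sin(n\phi)$ and $\cos\theta = -\cos(n\phi)$. Differentiating $y(\theta)$ one finds that, up to a positive factor, the derivative equals $\tan\phi \sin\theta + \cos\theta$; after substitution and using the addition formula, this simplifies to $-\cos((n-1)\phi)/\cos\phi$. Hence critical points satisfy $\cos((n-1)\phi) = 0$. For $\theta \in (0, \pi)$ the variable $\phi$ ranges over $(-\pi/n, 0)$, and for $n \geqslant 3$ the unique solution in this range is $\phi^* = -\pi/(2(n-1))$ (valid because $\pi/(2(n-1)) < \pi/n$ iff $n > 2$), corresponding to $\theta^* = \pi(n-2)/(2(n-1)) = \pi/2 - \pi/(2(n-1))$. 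Then $\sin\theta^* = \cos(\pi/(2(n-1)))$, so
$$
y(\theta^*) = \cos^n\Pa{\tfrac{\pi}{n}} \sec^n\Pa{\tfrac{\pi}{2(n-1)}} \cos\Pa{\tfrac{\pi}{2(n-1)}} = \cos^n\Pa{\tfrac{\pi}{n}} \sec^{n-1}\Pa{\tfrac{\pi}{2(n-1)}} = \tau_n.
$$
Because $y$ vanishes at $\theta=0$ and $\theta=\pi$ and is strictly positive on $(0,\pi)$, this critical point must be the global maximum; by symmetry about the real axis the minimum is $-\tau_n$, completing the bound $\im z \in [-\tau_n, \tau_n]$.

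The main obstacle is the critical-point analysis in the last step: one must correctly locate $\phi^*$ in the admissible interval $(-\pi/n, 0)$ (which requires the hypothesis $n \geqslant 3$), and then execute the trigonometric simplification that turns $\sin(\pi(n-2)/(2(n-1)))$ into $\cos(\pi/(2(n-1)))$, so that exactly one factor of $\sec(\pi/(2(n-1)))$ is cancelled and the announced value of $\tau_n$ emerges.
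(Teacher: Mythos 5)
Your proof is correct and follows essentially the same route as the paper: both locate the maximum of $y_n(\theta)=r_n(\theta)\sin\theta$ on $[0,\pi]$ via a critical-point computation (the paper differentiates in $\theta$ directly and finds $y_n'(\theta)\propto\cos(\tfrac{\pi}{n}+(1-\tfrac1n)\theta)$, which is the same condition as your $\cos((n-1)\phi)=0$ under $\phi=(\theta-\pi)/n$), arriving at the same $\theta^*=\tfrac{n-2}{n-1}\tfrac{\pi}{2}$ and the same value $\tau_n$. Your treatment is in fact slightly more complete than the paper's, since you also justify the real-part interval $[-\cos^n(\tfrac\pi n),1]$ and the uniqueness of the critical point, which the paper leaves implicit.
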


\begin{proof}
Define by $y_n(\theta):=r_n(\theta)\sin\theta=\cos^n(\tfrac\pi
n)\sec^n(\tfrac{\theta-\pi} n)\sin\theta$ and let
$$
\tau_n:=\max_{\theta\in[0,\pi]}y_n(\theta).
$$
It suffices to identity $\tau_n$. In fact,
\begin{eqnarray*}
y'_n(\theta) = \cos^n(\tfrac\pi n)
\sec^{n+1}\Pa{\tfrac{\theta-\pi}n}\cos\Pa{\tfrac\pi n +
(1-\tfrac1n)\theta},
\end{eqnarray*}
which vanishes if and only if $\theta=\frac{n-2}{n-1}\frac\pi2$ is
the stationary point of $y_n(\theta)$ in $[0,\pi]$. Based on the
above reasoning, we get that
\begin{eqnarray*}
\tau_n = y_n\Pa{\tfrac{n-2}{n-1}\tfrac\pi2}=\cos^n(\tfrac\pi
n)\sec^{n-1}\Pa{\tfrac\pi{2(n-1)}}.
\end{eqnarray*}
We have done the proof.
\end{proof}
The quantity $\tau_n$ predicts the upper bound for the imaginary
part of any $n$th-order Bargmann invariant. This immediately raises
the question: what is the complete set of $n$-tuples
$\Psi=(\ket{\psi_1},\ldots,\ket{\psi_n})$ in $\bbC^2$ that achieve
$\im\Tr{\psi_1\ldots\psi_n}=\tau_n$? We leave this characterization,
along with the operational interpretation of the condition
$z\in\cB_{n+1}\setminus\cB_n$, as interesting open problems for
future work.

\section{An alternative characterization of $\cB^\circ_n(d)$}
\label{sect:6}

In this section, we will present another approach to the
characterization of $\partial\cB^\circ_n(d)$, where $n\in\set{3,4}$.
Before formal proof, we make an preparation. First, we recall a
notion of envelope of family of plane curves.

\begin{definition}[Envelope, \cite{Bickel2020}]
Let $\cF$ denote a family of curves in the $xy$ plane (rectangular
coordinate system). We assume that $\cF$ is a family of curves given
by $F(x,y,t)=0$, where $F$ is smooth and $t$ lies in an open
interval. The envelope of $\cF$ is the set of points $(x,y)$ so that
there is a value of $t$ with both $F(x,y,t)=0$ and
$\partial_tF(x,y,t)=0$.
\end{definition}

In what follows, we will use the notion of envelope in a polar
coordinate system, and say that $\cF$ is a family of curves given by
$F(r,\theta,t)=0$. The envelope of $\cF$ is the set of points
$(r,\theta)$ so that there is a value of $t$ with both
$F(r,\theta,t)=0$ and $\partial_tF(r,\theta,t)=0$. Denote the
numerical range of $d\times d$ complex matrix $\bsA$ by
\begin{eqnarray}
W_d(\bsA):=\Set{\Innerm{\psi}{\bsA}{\psi}\mid
\ket{\psi}\in\bbC^d,\Inner{\psi}{\psi}=1}.
\end{eqnarray}
In particular, for $\bsA=\bsu\bsv^\dagger$, where
$\bsu,\bsv\in\bbC^d$, it holds \cite{Chien2001} that
\begin{eqnarray}
W_d(\bsu\bsv^\dagger)=\Set{z\in\bbC:
\abs{z}+\abs{z-\Inner{\bsv}{\bsu}}\leqslant \norm{\bsu}\norm{\bsv}}.
\end{eqnarray}
\begin{lem}[\cite{Zhang2025PRA1}]\label{lem:n=34}
For any integer $m\geqslant2$, via $\psi_{m+1}\equiv\psi_1$, it
holds that
\begin{eqnarray}
\cB^\circ_{2m-1}(d) &=&
\bigcup_{\psi_1,\ldots,\psi_m}\Inner{\psi_1}{\psi_2}\prod^m_{j=2}W_d(\out{\psi_{j+1}}{\psi_j}),\label{eq:3}\\
\cB^\circ_{2m}(d) &=&
\bigcup_{\psi_1,\ldots,\psi_m}\prod^m_{j=1}W_d(\out{\psi_{j+1}}{\psi_j}).\label{eq:4}
\end{eqnarray}
\end{lem}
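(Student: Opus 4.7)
My plan is to reduce both identities to a direct regrouping of the overlaps in the pure-state Bargmann invariant, relying on two elementary facts. First, since each $\psi_k=\proj{\psi_k}$ is a rank-one projector,
\begin{eqnarray*}
\Tr{\psi_1\psi_2\cdots\psi_n}=\prod_{k=1}^{n}\Inner{\psi_k}{\psi_{k+1}},\qquad \ket{\psi_{n+1}}\equiv\ket{\psi_1}.
\end{eqnarray*}
Second, for any unit vectors $\ket{u},\ket{v},\ket{\phi}\in\bbC^d$,
\begin{eqnarray*}
\Inner{v}{\phi}\Inner{\phi}{u}=\Innerm{\phi}{\,\ketbra{u}{v}\,}{\phi}\in W_d(\ketbra{u}{v}),
\end{eqnarray*}
so every consecutive pair of overlaps sharing a middle vector $\ket{\phi}$ sits inside the numerical range of the rank-one operator whose ket is the right outer vector and whose bra is the left outer vector. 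The entire lemma then becomes a matter of pairing the $n$ overlaps into consecutive pairs and reading off the numerical ranges.

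For $n=2m$, I would relabel the tuple as $(\ket{\psi_1},\ket{\phi_1},\ket{\psi_2},\ket{\phi_2},\ldots,\ket{\psi_m},\ket{\phi_m})$, so that
\begin{eqnarray*}
\Tr{\psi_1\phi_1\psi_2\phi_2\cdots\psi_m\phi_m}=\prod_{j=1}^{m}\Inner{\psi_j}{\phi_j}\Inner{\phi_j}{\psi_{j+1}}, \qquad \ket{\psi_{m+1}}\equiv\ket{\psi_1},
\end{eqnarray*}
with the $j$-th factor lying in $W_d(\ketbra{\psi_{j+1}}{\psi_j})$ by the identity above. For $n=2m-1$ the number of overlaps is odd, so one must be left unpaired; I isolate $\Inner{\psi_1}{\psi_2}$ and relabel the remaining states as $(\ket{\psi_1},\ket{\psi_2},\ket{\phi_2},\ket{\psi_3},\ket{\phi_3},\ldots,\ket{\psi_m},\ket{\phi_m})$, giving the factorization $\Inner{\psi_1}{\psi_2}\prod_{j=2}^{m}\Inner{\psi_j}{\phi_j}\Inner{\phi_j}{\psi_{j+1}}$, with each pair again in the advertised numerical range. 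This establishes the forward inclusions in \eqref{eq:3} and \eqref{eq:4}.

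The reverse inclusions come from the constructive converse: given anchors $\ket{\psi_1},\ldots,\ket{\psi_m}$ and arbitrary $w_j\in W_d(\ketbra{\psi_{j+1}}{\psi_j})$, pick unit vectors $\ket{\phi_j}\in\bbC^d$ witnessing $w_j=\Inner{\psi_j}{\phi_j}\Inner{\phi_j}{\psi_{j+1}}$ and interleave them with the anchors exactly as above to realize the prescribed product as a genuine Bargmann invariant on $\bbC^d$. No deeper tool is required; the only piece of real bookkeeping is the cyclic closure at $j=m$, where $\ket{\psi_{m+1}}$ must be identified with $\ket{\psi_1}$ so that the last pair $\Inner{\phi_m}{\psi_1}$ loops back and closes the trace. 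Once this indexing convention is fixed, both equalities follow at once from the definition of the numerical range of a rank-one operator.
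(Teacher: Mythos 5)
Your proof is correct and takes essentially the same route as the paper's: both arguments pair consecutive overlaps sharing a middle vector via the identity $\Inner{v}{\phi}\Inner{\phi}{u}=\Innerm{\phi}{\out{u}{v}}{\phi}$ and then absorb the union over each middle vector into the definition of the numerical range, with the anchors playing the role of the paper's odd-indexed $\varphi$'s. Your relabeling into interleaved $(\psi_j,\phi_j)$ pairs is just a cleaner presentation of the paper's index bookkeeping, and the reverse inclusion via explicit witnesses is the same observation the paper makes implicitly.
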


\begin{proof}
For any positive integer $n\geqslant3$, it can be written as
$n=2m-1$ or $2m$ for a positive integer $m\geqslant2$. Thus
$\cB^\circ_n(d)=\cB^\circ_{2m-1}(d)$ or $\cB^\circ_{2m}(d)$ for
$m\geqslant2$. \\
(1) Note that \small{\begin{eqnarray*} \cB^\circ_{2m-1}(d)
&=&\bigcup_{\varphi_1,\ldots,\varphi_{2(m-1)}}\Inner{\varphi_1}{\varphi_2}\Inner{\varphi_2}{\varphi_3}\Inner{\varphi_3}{\varphi_4}\cdots\Inner{\varphi_{2(m-2)}}{\varphi_{2m-3}}\Inner{\varphi_{2m-3}}{\varphi_{2(m-1)}}W_d(\out{\varphi_1}{\varphi_{2(m-1)}})\\
&=&
\bigcup_{\varphi_1,\varphi_{2j}:j=1,2,\ldots,m-1}\Inner{\varphi_1}{\varphi_2}\Pa{\prod^m_{j=3}\bigcup_{\varphi_{2j-3}}\Innerm{\varphi_{2j-3}}{\out{\varphi_{2(j-1)}}{\varphi_{2(j-2)}}}{\varphi_{2j-3}}}W_d(\out{\varphi_1}{\varphi_{2(m-1)}})\\
&=&
\bigcup_{\varphi_1,\varphi_{2j}:j=1,2,\ldots,m-1}\Inner{\varphi_1}{\varphi_2}\Pa{\prod^m_{j=3}W_d(\out{\varphi_{2(j-1)}}{\varphi_{2(j-2)}})}W_d(\out{\varphi_1}{\varphi_{2(m-1)}})
\end{eqnarray*}}
By setting
$(\varphi_1,\varphi_2,\varphi_4,\ldots,\varphi_{2(m-1)})=(\psi_1,\psi_2,\ldots,\psi_m)$
and $\psi_{m+1}\equiv\psi_1$, we get that
\begin{eqnarray*}
\cB^\circ_{2m-1}(d)=
\bigcup_{\psi_1,\ldots,\psi_m}\Inner{\psi_1}{\psi_2}\prod^m_{j=2}W_d(\out{\psi_{j+1}}{\psi_j}).
\end{eqnarray*}
(2) Again note that
\begin{eqnarray*}
\cB^\circ_{2m}(d)&=&\bigcup_{\varphi_1,\ldots,\varphi_{2m-1}}
\Inner{\varphi_1}{\varphi_2}\Inner{\varphi_2}{\varphi_3}\Inner{\varphi_3}{\varphi_4}\cdots\Inner{\varphi_{2(m-1)}}{\varphi_{2m-1}}W_d(\out{\varphi_1}{\varphi_{2m-1}})\\
&=&
\bigcup_{\varphi_j:j=1,3,\ldots,2m-1}\Pa{\prod^m_{j=2}\bigcup_{\varphi_{2(j-1)}}\Innerm{\varphi_{2(j-1)}}{\out{\varphi_{2j-1}}{\varphi_{2j-3}}}{\varphi_{2(j-1)}}}W_d(\out{\varphi_1}{\varphi_{2m-1}})\\
&=&
\bigcup_{\varphi_j:j=1,3,\ldots,2m-1}\Pa{\prod^m_{j=2}W_d(\out{\varphi_{2j-1}}{\varphi_{2j-3}})}W_d(\out{\varphi_1}{\varphi_{2m-1}})
\end{eqnarray*}
By setting
$(\varphi_1,\varphi_3,\varphi_5,\ldots,\varphi_{2m-1})=(\psi_1,\psi_2,\ldots,\psi_m)$
and $\psi_{m+1}\equiv\psi_1$, we get that
\begin{eqnarray*}
\cB^\circ_{2m}(d)=
\bigcup_{\psi_1,\ldots,\psi_m}\prod^m_{j=1}W_d(\out{\psi_{j+1}}{\psi_j}).
\end{eqnarray*}
We are done.
\end{proof}
\begin{itemize}
\item If $n=3$, then by Eq.~\eqref{eq:3} in Lemma~\ref{lem:n=34}, we get
that
\begin{eqnarray*}
\cB^\circ_3(d)
=\bigcup_{\psi_1,\psi_2}\Inner{\psi_1}{\psi_2}W_d(\out{\psi_1}{\psi_2})=\bigcup_{t\in[0,1]}
\cE_t,
\end{eqnarray*}
where we used the polar decomposition
$\Inner{\psi_1}{\psi_2}=te^{\mathrm{i}\theta}$, where $t\in[0,1]$
and $\theta\in[0,2\pi)$; and
$\cE_t:=\Set{z\in\complex:\abs{z}+\abs{z-t^2}\leqslant t}$ whose
boundary curve $\partial\cE_t$ can be put in polar form:
$r(\theta)=\frac{t(1-t^2)}{2(1-t\cos\theta)}$. Thus,
$\partial\cB^\circ_3(d)$ is the envelope of the family
$\cF=\set{\partial\cE_t}_t$ of ellipses, defined by
$F(r,\theta,t):=r(1-t\cos\theta)-\frac{t(1-t^2)}2=0$. By an envelope
algorithm, eliminating $t$ in both $F(r,\theta,t)=0$ and
$\partial_tF(r,\theta,t)=\frac12(3t^2-2r\cos\theta-1)=0$, we obtain
that the envelope of $\cF$ is implicitly determined by
$$
(8\cos^3\theta)r^3+(12\cos^2\theta-27)r^2+(6\cos\theta)r+1=0.
$$
But, only one root $r=\cos^3(\frac\pi3)\sec^3(\frac{\theta-\pi}3)$
is the desired one.
\item If $n=4$, then by Eq.~\eqref{eq:4} in Lemma~\ref{lem:n=34}, we get that
\begin{eqnarray*}
\cB^\circ_4(d) =\bigcup_{\psi_1,\psi_2}
W_d(\out{\psi_2}{\psi_1})W_d(\out{\psi_1}{\psi_2})=\bigcup_{t\in[0,1]}E^2_t
\end{eqnarray*}
where
$W_d(\out{\psi_2}{\psi_1})W_d(\out{\psi_1}{\psi_2})=E_tE_t\equiv
E^2_t$ is the Minkowski product of two subsets in $\bbC$. Here
$E_t:=\Set{z\in\bbC:\abs{z}+\abs{z-t}\leqslant1}$ and
$\Inner{\psi_1}{\psi_2}:=te^{\mathrm{i}\phi}$, where $t\in[0,1]$ and
$\phi\in[0,2\pi)$. For any fixed $t\in(0,1)$, it is easily seen that
$E^2_t =\bigcup_{z\in\partial E_t}zE_t$, where $z\in\partial E_t$
can be parameterized as $z=\frac{1-t^2}{2(1-t\cos
\alpha)}e^{\mathrm{i}\alpha}$ for $\alpha\in[0,2\pi)$. Then
\begin{eqnarray*}
E^2_t = \bigcup_{\alpha\in[0,2\pi)} \tfrac{1-t^2}{2(1-t\cos
\alpha)}e^{\mathrm{i}\alpha}E_t,
\end{eqnarray*}
where $\frac{1-t^2}{2(1-t\cos\alpha)}e^{\mathrm{i}\alpha}E_t$ is the
elliptical disk
$$
\Set{z\in\complex: \abs{z}+\abs{z-\tfrac{1-t^2}{2(1-t\cos
\alpha)}te^{\mathrm{i}\alpha}}\leqslant
\tfrac{1-t^2}{2(1-t\cos\alpha)}}
$$
whose boundary can parametrized in polar form
$z=re^{\mathrm{i}\theta}$, where $r$ and $\theta$ can be connected
as
$$
r=\frac{(1-t^2)^2}{4(1-t\cos\alpha)(1-t\cos(\alpha-\theta))}.
$$
This defines the family $\tilde\cF$ of curves resulted from $\tilde
F(r,\theta,t,\alpha):=(1-t\cos\alpha)(1-t\cos(\alpha-\theta))r-\frac{(1-t^2)^2}4=0$.
The boundary curve of $E^2_t$ is the envelope of the family
$\tilde\cF$. In fact, by envelope algorithm, eliminating $\alpha$ by
setting $\tilde F(r,\theta,t,\alpha)=0$ and
$$
\partial_\alpha\tilde
F(r,\theta,t,\alpha)=rt\Br{\sin(\alpha-\theta)+\sin\alpha-t
\sin(2\alpha-\theta)}=0,
$$
we get that $r=\frac{(1-t^2)^2}{4(1-t\cos\frac\theta2)^2}$ is the
polar equation of $\partial(E^2_t)$. Now
$\cB^\circ_4(d)=\bigcup_{t\in[0,1]}E^2_t$, where $\partial(E^2_t)$
is parameterized in polar form
$r=\frac{(1-t^2)^2}{4(1-t\cos\frac\theta2)^2}$. Once again, we
define the family $\cG$ of curves by
$G(r,\theta,t):=(1-t\cos\frac\theta2)\sqrt{r} - \frac{1-t^2}2=0$.
Then $\partial_tG(r,\theta,t)=t-\sqrt{r}\cos\frac\theta2$. Now the
boundary $\partial\cB^\circ_4(d)$ is the envelope of the family
$\cG$ of curves. It can be computed as by setting $G(r,\theta,t)=0$
and $\partial_tG(r,\theta,t)=0$ by envelope algorithm. Eliminating
$t$, we get that only
$r=\frac1{(\sin\frac\theta4+\cos\frac\theta4)^4}=\cos^4(\frac\pi4)\sec^4(\frac{\theta-\pi}4)$
is the desired envelope.
\end{itemize}
We can summarize the above discussion into the following theorem:
\begin{thrm}[See Figure~\ref{fig:bc34}]
\begin{enumerate}[(i)]
\item The boundary curve $\partial\cB^\circ_3(d)$ is the envelope of
a family of curves $r=f_3(\theta,t)$ with polar coordinate
$(r,\theta)$, defined by
\begin{eqnarray}
F_3(r,\theta,t):=r(1-t\cos\theta)-\frac{t(1-t^2)}2=0,\quad
t\in[0,1],\theta\in[0,2\pi].
\end{eqnarray}
\item The boundary curve $\partial\cB^\circ_4(d)$ is the envelope of
a family of curves $r=f_4(\theta,t)$ with polar coordinate
$(r,\theta)$, defined by
\begin{eqnarray}
F_4(r,\theta,t):=r\Pa{1-t\cos\tfrac\theta2}^2-\frac{(1-t^2)^2}4=0,\quad
t\in[0,1],\theta\in[0,4\pi].
\end{eqnarray}
\end{enumerate}
\end{thrm}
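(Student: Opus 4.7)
The plan is to leverage Lemma~\ref{lem:n=34}, which expresses $\cB^\circ_n(d)$ as a union of products of numerical ranges of rank-one operators, and then translate "boundary of a union of compact sets parametrized by $t$" into an envelope condition in polar coordinates. The key observation is that the numerical range of $\out{\psi}{\phi}$ is an elliptical disk with foci at $0$ and $\Inner{\phi}{\psi}$, a fact we may quote from the formula $W_d(\bsu\bsv^\dagger)=\{z:\abs{z}+\abs{z-\Inner{\bsv}{\bsu}}\leqslant\norm{\bsu}\norm{\bsv}\}$ stated just before the lemma. Because all quantities depend only on $t=\abs{\Inner{\psi_1}{\psi_2}}\in[0,1]$ after a rotation, one obtains a one-parameter family of planar regions whose union is $\cB^\circ_n(d)$.

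For the case $n=3$, I would first apply \eqref{eq:3} with $m=2$ to write $\cB^\circ_3(d)=\bigcup_{\psi_1,\psi_2}\Inner{\psi_1}{\psi_2}W_d(\out{\psi_1}{\psi_2})$, absorb the overall phase by rotational symmetry, and parametrize the remaining region as the ellipse $\cE_t$ with boundary in polar form $r(\theta)=t(1-t^2)/[2(1-t\cos\theta)]$. Rearranged, this is exactly $F_3(r,\theta,t)=0$. I would then invoke the standard envelope characterization: for a monotone/continuous family of simply connected planar regions, the topological boundary of the union is (a subset of) the envelope of the boundary curves, obtained by eliminating $t$ between $F_3=0$ and $\partial_t F_3=0$. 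This yields (i) verbatim.

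For the case $n=4$, I would apply \eqref{eq:4} with $m=2$ to obtain $\cB^\circ_4(d)=\bigcup_{t\in[0,1]}E_t^2$, where $E_t^2$ is the Minkowski \emph{product} of the same elliptical disk with itself. Computing $\partial(E_t^2)$ requires one intermediate envelope argument: writing $E_t^2=\bigcup_{\alpha}\frac{1-t^2}{2(1-t\cos\alpha)}e^{\mathrm{i}\alpha}E_t$ and eliminating $\alpha$ between the resulting defining relation and its $\alpha$-derivative gives the clean polar form $r=(1-t^2)^2/[4(1-t\cos(\theta/2))^2]$, i.e.\ $F_4(r,\theta,t)=0$. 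A second envelope elimination, now in $t$, produces $\partial\cB^\circ_4(d)$. The $\theta$-range is $[0,4\pi]$ rather than $[0,2\pi]$ because a full rotation of $\alpha$ in the Minkowski product induces a double cover of the argument of the product, so the envelope naturally lives on the double cover.

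The main obstacle is justifying that the envelope obtained by Lagrange's elimination procedure coincides with the true topological boundary $\partial\cB^\circ_n(d)$, rather than being only a superset containing also singular or interior branches. Concretely, the cubic in $r$ arising for $n=3$ has three real roots and only one corresponds to the outer boundary; similarly for $n=4$, several spurious branches appear. I would handle this by (a) establishing that each $\cE_t$ (resp.\ $E_t^2$) is compact and moves continuously with $t$, so the union is compact with boundary contained in the envelope, and (b) selecting the correct branch by matching against the known extremal points attained by Oszmaniec--Brod--Galv\~ao states, namely the values $\cos^n(\pi/n)\sec^n((\theta-\pi)/n)$ on $\partial\cB_n=\partial\cB_n|_{\mathrm{circ}}$ guaranteed by Theorem~\ref{th:allproperties}. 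This consistency check pins down the correct envelope branch, completing the proof.
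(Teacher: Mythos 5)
Your proposal follows essentially the same route as the paper: both apply Lemma~\ref{lem:n=34} to write $\cB^\circ_3(d)=\bigcup_t\cE_t$ and $\cB^\circ_4(d)=\bigcup_t E_t^2$, derive the polar forms $F_3$ and $F_4$ (the latter via the same intermediate envelope elimination in $\alpha$ for the Minkowski square), and then take the envelope in $t$, discarding spurious branches by matching against the known boundary $\cos^n(\tfrac{\pi}{n})\sec^n(\tfrac{\theta-\pi}{n})$. Your added care in justifying that the eliminant coincides with the topological boundary of the union is a welcome refinement of a step the paper treats informally, but the argument is the same.
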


\begin{figure}[h!]
\subfigure[$n=3$]{\begin{minipage}[b]{0.5\linewidth}
\includegraphics[width=1\textwidth]{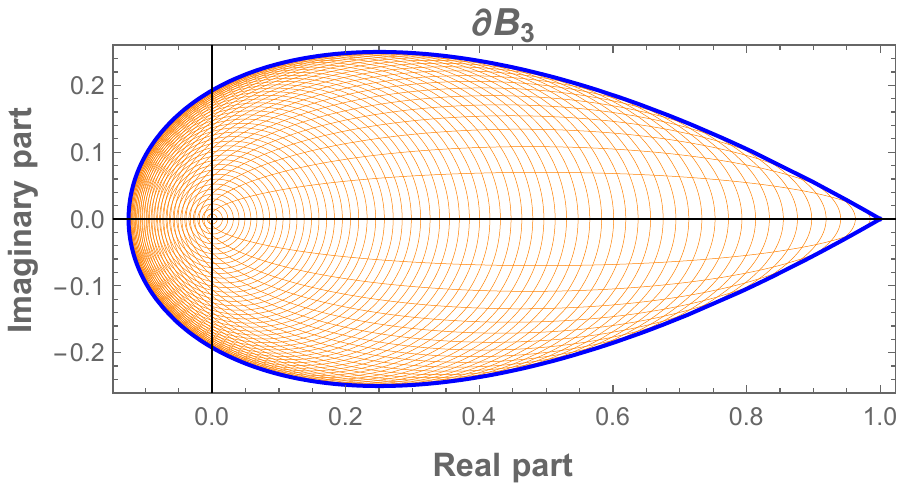}
\end{minipage}}
\subfigure[$n=4$]{\begin{minipage}[b]{0.4\linewidth}
\includegraphics[width=1\textwidth]{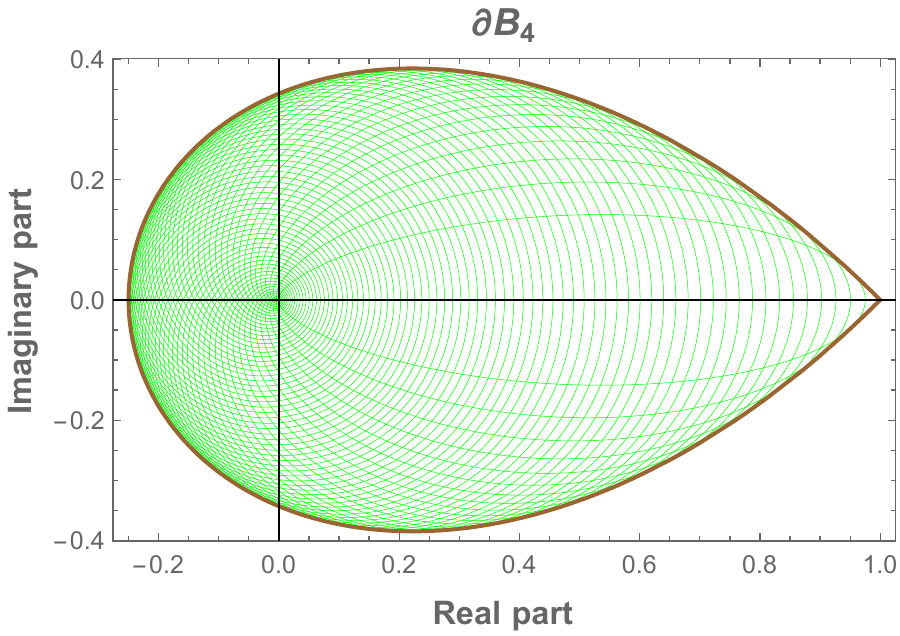}
\end{minipage}}
\caption{The boundary curve $\partial\cB^\circ_n(d)$ as an
envelope}\label{fig:bc34}
\end{figure}

Whether this envelope approach can be used to characterize
$\partial\cB^\circ_n$(d) for $n\geqslant5$ remains an open question.

We conclude this section with the following remarks. In relation to
the numerical range, the $n$th-order Bargmann invariant for the
$n$-tuple of wave functions
$\Psi=(\ket{\psi_1},\ldots,\ket{\psi_n})$ living in $\bbC^d$, given
by $\Tr{\psi_1\cdots\psi_n}$, can be rewritten as
\begin{eqnarray}
\Tr{\psi_1\cdots\psi_n} =
\Innerm{\psi_1\psi_2\cdots\psi_n}{\bsP_{d,n}(\pi_0)}{\psi_1\psi_2\cdots\psi_n},
\end{eqnarray}
where the meaning of the operator $\bsP_{d,n}(\pi_0)$ can be found
in Eq.~\eqref{eq:perrep} for the cyclic permutation
$\pi_0=(n,n-1,\ldots,2,1)\in S_n$. Based on this observation, we
find that $\cB^\circ_n(d)$ is essentially the separable numerical
range of the operator $\bsP_{d,n}(\pi_0)$ \cite{Simnacher2021PRA}.

\section{Bargmann invariant estimation in a quantum circuit}\label{sect:7}

Recently, a quantum circuit known as the cycle test
\cite{Oszmaniec2024NJP} was introduced, which enables the direct
measurement of complete sets of Bargmann invariants for both mixed
and pure quantum states. Before analyzing the measurement of
Bargmann invariants, we first define a key component of the cycle
test circuit--- the Fredkin gate.

\begin{definition}[Fredkin gate, aka Controlled-SWAP gate]
The \emph{Fredkin gate}, which is denoted by $\bsU_{\mathrm{Fred}}$,
is a unitary operator acting on $\bbC^8$, defined as follows:
\begin{eqnarray}
\bsU_{\mathrm{Fred}}\ket{0}\ket{\phi}\ket{\psi} =
\ket{0}\ket{\phi}\ket{\psi}\quad
\text{and}\quad\bsU_{\mathrm{Fred}}\ket{1}\ket{\phi}\ket{\psi} =
\ket{1}\ket{\psi}\ket{\phi}.
\end{eqnarray}
where the first qubit is the control qubit which determines whether
to swap the last two qubits. It can be also represented as
\begin{eqnarray}
\bsU_{\mathrm{Fred}}\ket{c,x,y} = \ket{c,\bar c x\oplus cy,cx\oplus
\bar cy},
\end{eqnarray}
where $\bar c:=1-c$ is the complementary bit of $c\in\set{0,1}$ and
$x,y\in\set{0,1}$. Under the computational basis,
$\bsU_{\mathrm{Fred}}$ can be represented as
\begin{eqnarray}
\bsU_{\mathrm{Fred}} = \proj{0}\ot\I_4 + \proj{1}\ot
\bsU_{\mathrm{SWAP}},
\end{eqnarray}
where the SWAP gate
$\bsU_{\mathrm{SWAP}}=\frac12\sum^3_{k=0}\sigma_k\ot\sigma_k$, where
$\sigma_0=\I_4$ and $(\sigma_1,\sigma_2,\sigma_3)$ the vector of
Pauli operators.
\end{definition}

The Fredkin gate (also called the controlled-SWAP gate) is a
three-qubit gate that swaps two target qubits conditional on the
state of a control qubit. Specifically, if the control qubit is in
state $\ket{1}$, it swaps the two target qubits; if the control is
in $\ket{0}$, it leaves them unchanged. This gate is essential in
the cycle test circuit, where multiple Fredkin gates are arranged in
a cascaded structure to measure Bargmann invariants of arbitrary
degree. Besides, we also need Hadamard gate, which is a fundamental
single-qubit gate in quantum computing. It is defined by
\begin{eqnarray}
\bsH\ket{0} =
\frac{\ket{0}+\ket{1}}{\sqrt{2}}=\ket{+}\quad\text{and}\quad
\bsH\ket{1} = \frac{\ket{0}-\ket{1}}{\sqrt{2}}=\ket{-}.
\end{eqnarray}
That is, $\bsH=\out{+}{0}+\out{-}{1}$.
\begin{itemize}
\item SWAP test to measure the two-state overlap
$\Delta_{12}=\abs{\Inner{\psi_1}{\psi_2}}^2$. The initial state is
$\ket{\Psi_{\mathrm{i}}}:=\ket{0}\ket{\psi_1}\ket{\psi_2}$, which is
transformed by $\bsH\ot\I^{\ot2}_2$ and then $\bsU_{\mathrm{Fred}}$,
followed by $\bsH\ot\I^{\ot2}_2$, thus the output state is given by
\begin{eqnarray*}
\ket{\Psi_{\mathrm{f}}}
&=&(\bsH\ot\I^{\ot2}_2)\bsU_{\mathrm{Fred}}(\bsH\ot\I^{\ot2}_2)\ket{\Psi_{\mathrm{i}}}
\\
&=&
\frac12\ket{0}\Pa{\ket{\psi_1\psi_2}+\ket{\psi_2\psi_1}}+\frac12\ket{1}\Pa{\ket{\psi_1\psi_2}-\ket{\psi_2\psi_1}}.
\end{eqnarray*}
The first qubit is measured and the probability of $0$ is denoted as
$P(0)$, given by
\begin{eqnarray*}
P(0) &=& \Tr{(\proj{0}\ot\I_4)\proj{\Psi_{\mathrm{f}}}} =
\frac{1+\abs{\Inner{\psi_1}{\psi_2}}^2}2,
\end{eqnarray*}
implying that  $\Delta_{12}=\abs{\Inner{\psi_1}{\psi_2}}^2=2P(0)-1$.
This implies that the probability of obtaining result $0$ in a
measurement allows one to indirectly obtain
$\Delta_{12}=\abs{\Inner{\psi_1}{\psi_2}}^2$.
\item CYCLE test (see Figure~\ref{fig:measurement}) to measure the $n$th Bargmann invariant $\Delta_{1\ldots n}=\Tr{\psi_1\cdots\psi_n} = \re\Tr{\psi_1\cdots\psi_n} +
\mathrm{i}\im\Tr{\psi_1\cdots\psi_n}$. The initial state
$\ket{\Psi_{\mathrm{i}}}=\ket{0}\ket{\psi_1\cdots\psi_n}$ is
transformed into the final state:
\begin{eqnarray*}
\ket{\Psi^{\bsU}_{\mathrm{f}}} &=&(\bsH\ot\I^{\ot
n}_2)\Pa{\bsU\ot\I^{\ot n}_2}\bsU_{\text{c-cycle}}(\bsH\ot\I^{\ot
n}_2)\ket{\Psi_{\mathrm{i}}},
\end{eqnarray*}
where $\bsU_{\text{c-cycle}}=\proj{0}\ot\I^{\ot n}_2 + \proj{1}\ot
\bsP_{2,n}((12\ldots n)$ for $\bsP_{2,n}((12\ldots n)$ is defined by
\begin{eqnarray}
\bsP_{2,n}((12\ldots n)\ket{i_1i_2\cdots i_n}=\ket{i_ni_1i_2\cdots
i_{n-1}},
\end{eqnarray}
where $\ket{i_k}\in\bbC^2$ with $i_k\in\set{0,1}$ for
$k=1,\ldots,n$.
\begin{figure}[ht]\centering
{\begin{minipage}[b]{0.6\linewidth}
\includegraphics[width=1\textwidth]{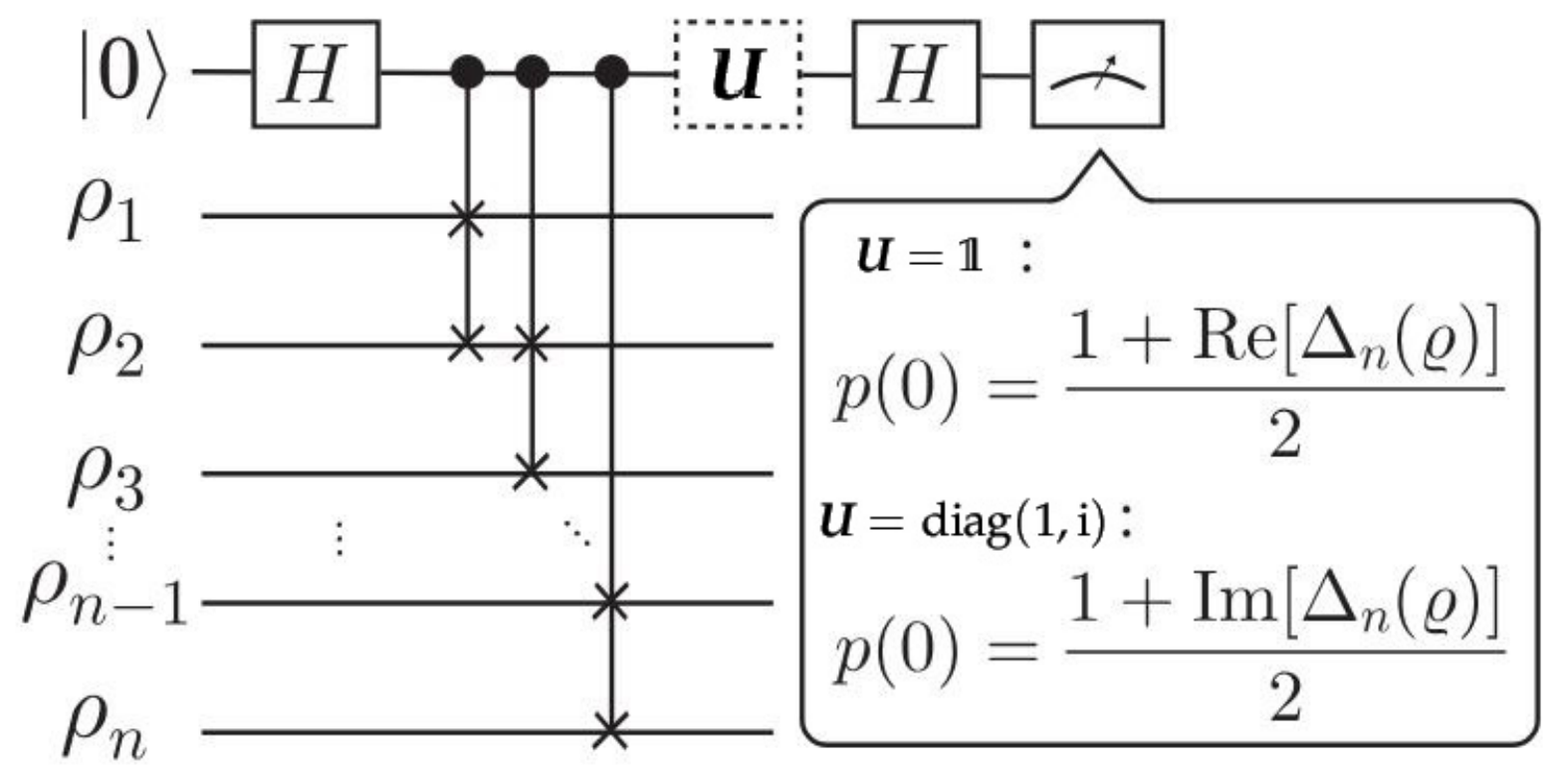}
\end{minipage}}
\caption{A quantum circuit, taken from \cite{Fernandes2024PRL}, for
measuring Bargmann invariants. Here
$\Delta_n(\varrho):=\Tr{\rho_1\cdots\rho_n}$ for
$\varrho=(\rho_1,\ldots,\rho_n)$. If $\bsU=\I$, the circuit
estimates $\re[\Delta_n(\varrho)]$ while $\bsU=\diag(1,\mathrm{i})$,
the circuit estimates
$\im[\Delta_n(\varrho)]$.}\label{fig:measurement}
\end{figure}
\begin{enumerate}[(a)]
\item If $\bsU=\I_2$, then
\begin{eqnarray*}
\ket{\Psi^{\bsU}_{\mathrm{f}}}&=&\frac12\ket{0}\Pa{\ket{\psi_1\psi_2\cdots\psi_n}+\ket{\psi_n\psi_1\psi_2\cdots\psi_{n-1}}}\\
&&+\frac12\ket{1}\Pa{\ket{\psi_1\psi_2\cdots\psi_n}-\ket{\psi_n\psi_1\psi_2\cdots\psi_{n-1}}}
\end{eqnarray*}
The first qubit is measured and the probability of $0$ is given by
\begin{eqnarray*}
P_{\bsU}(0) &=& \Tr{(\proj{0}\ot\I^{\ot
n}_2)\proj{\Psi_{\mathrm{f}}}} =
\frac{1+\re\Tr{\psi_1\cdots\psi_n}}2,
\end{eqnarray*}
implying that $\re\Tr{\psi_1\cdots\psi_n}=2P_{\bsU}(0)-1$. This
implies that the probability of obtaining result $0$ in a
measurement allows one to indirectly obtain $\Delta_{12\ldots
n}=\re\Tr{\psi_1\cdots\psi_n}$.
\item If $\bsU=\diag(1,\mathrm{i})$ the phase gate, then
\begin{eqnarray*}
\ket{\Psi^{\bsU}_{\mathrm{f}}}&=&\frac12\ket{0}\Pa{\ket{\psi_1\psi_2\cdots\psi_n}+\mathrm{i}\ket{\psi_n\psi_1\psi_2\cdots\psi_{n-1}}}\\
&&+\frac12\ket{1}\Pa{\ket{\psi_1\psi_2\cdots\psi_n}-\mathrm{i}\ket{\psi_n\psi_1\psi_2\cdots\psi_{n-1}}}
\end{eqnarray*}
The first qubit is measured and the probability of $0$ is given by
\begin{eqnarray*}
P_{\bsU}(0) &=& \Tr{(\proj{0}\ot\I^{\ot
n}_2)\proj{\Psi_{\mathrm{f}}}} =
\frac{1+\im\Tr{\psi_1\cdots\psi_n}}2,
\end{eqnarray*}
implying that $\im\Tr{\psi_1\cdots\psi_n}=2P_{\bsU}(0)-1$. This
implies that the probability of obtaining result $0$ in a
measurement allows one to indirectly obtain
$\im\Tr{\psi_1\cdots\psi_n}$.
\end{enumerate}
\end{itemize}
Thus $\Delta_{12\ldots
n}=\re\Tr{\psi_1\cdots\psi_n}+\mathrm{i}\im\Tr{\psi_1\cdots\psi_n}$
is obtained by measurements. In summary, we can list the following
algorithms about measuring the real and imaginary parts of
$n$th-order Bargmann invariants:

\begin{algorithm}[H]\caption{Estimate $\re\Tr{\rho_1\cdots\rho_n}$}
 Prepare a qubit in the $\ket{+}:=\bsH\ket{0}$ state and adjoin to it the state $\rho_1\ot\cdots\ot\rho_n$\;
 Perform a controlled cyclic permutation unitary gate, defined as
 $\bsU_{\text{c-cycle}}:=\proj{0}\ot\I^{\ot n}_2+\proj{1}\ot\bsP_{2,n}((12\ldots n))$\;
 Measure the 1st qubit in the basis
 $\set{\ket{\pm}}$, where $\ket{-}:=\bsH\ket{1}$, and record the
 outcome $X=+1$ if the 1st outcome $\ket{+}$ is observed and
 $X=-1$ if the 2nd outcome $\ket{-}$ is observed\;
 Repeat Steps 1 to 3 a number of times equal to
 $N:=O(\varepsilon^{-2}\log\delta^{-1})$ and return $\hat
 X:=\frac1N\sum^N_{i=1}X_i$, where $X_i$ is the outcome of the
 $i$-th repetition of Step 3.
\end{algorithm}

\begin{algorithm}[H]\caption{Estimate $\im\Tr{\rho_1\cdots\rho_m}$}
 Prepare a qubit in the $\ket{+}:=\bsH\ket{0}$ state and adjoin to it the state $\rho_1\ot\cdots\ot\rho_n$\;
 Perform a controlled cyclic permutation unitary gate, defined as
 $\proj{0}\ot\I^{\ot n}_2+\proj{1}\ot\bsP_{2,n}((12\ldots n))$\;
 Measure the 1st qubit in the basis
 $\set{\ket{\pm \mathrm{i}}}$, where $\ket{\pm\mathrm{i}}:=\frac{\ket{0}\pm\mathrm{i}\ket{1}}{\sqrt{2}}$, and record the
 outcome $Y=+1$ if the 1st outcome $\ket{+\mathrm{i}}$ is observed and
 $Y=-1$ if the 2nd outcome $\ket{-\mathrm{i}}$ is observed\;
 Repeat Steps 1 to 3 a number of times equal to
 $N:=O(\varepsilon^{-2}\log\delta^{-1})$ and return $\hat
 Y:=\frac1N\sum^N_{i=1}Y_i$, where $Y_i$ is the outcome of the
 $i$-th repetition of Step 3.
\end{algorithm}

\begin{prop}
It holds that $\bbE[X]=\re\Tr{\rho_1\cdots\rho_n}$ and
$\bbE[Y]=\im\Tr{\rho_1\cdots\rho_n}$.
\end{prop}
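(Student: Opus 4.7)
The plan is to reduce both expectations to trace formulas involving the reduced density matrix of the control qubit after the circuit, and then recognize that measuring in the $\set{\ket{\pm}}$ or $\set{\ket{\pm\mathrm{i}}}$ basis corresponds to measuring Pauli $\sigma_x$ or $\sigma_y$ with outcomes $\pm 1$. Thus $\bbE[X]=\Tr{\sigma_x\rho_{\mathrm{ctrl}}}$ and $\bbE[Y]=\Tr{\sigma_y\rho_{\mathrm{ctrl}}}$, so the task reduces to computing $\rho_{\mathrm{ctrl}}$ explicitly.

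First I would write the input state as $\proj{+}\ot\rho$ with $\rho:=\rho_1\ot\cdots\ot\rho_n$, expand $\bsU_{\text{c-cycle}}=\proj{0}\ot\I^{\ot n}_2+\proj{1}\ot\bsP$ (where $\bsP\equiv\bsP_{2,n}((12\ldots n))$), and apply it on both sides. After partial-tracing over the $n$ target qubits, I get
\begin{eqnarray*}
\rho_{\mathrm{ctrl}}=\frac12\Pa{\proj{0}\Tr{\rho}+\ketbra{0}{1}\Tr{\rho\bsP^\dagger}+\ketbra{1}{0}\Tr{\bsP\rho}+\proj{1}\Tr{\bsP\rho\bsP^\dagger}}.
\end{eqnarray*}
The diagonal traces are $1$ by cyclicity and normalization, while the off-diagonal trace is handled by the permutation identity $\Tr{\bsP_{2,n}((12\ldots n))(\rho_1\ot\cdots\ot\rho_n)}=\Tr{\rho_1\cdots\rho_n}=\Delta_n(\varrho)$, which is the standard consequence of the cycle structure of $\bsP_{2,n}(\pi_0)$ and corresponds exactly to Definition~\ref{def:nthBarg}. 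The conjugate trace then equals $\overline{\Delta_n(\varrho)}$.

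With $\rho_{\mathrm{ctrl}}=\tfrac12\Pa{\I_2+\re[\Delta_n(\varrho)]\sigma_x+\im[\Delta_n(\varrho)]\sigma_y}$ in hand, the two expectations follow instantly: $\bbE[X]=\Tr{\sigma_x\rho_{\mathrm{ctrl}}}=\re\Tr{\rho_1\cdots\rho_n}$ and $\bbE[Y]=\Tr{\sigma_y\rho_{\mathrm{ctrl}}}=\im\Tr{\rho_1\cdots\rho_n}$. The only subtlety--- and the one part worth writing out carefully rather than as a routine calculation --- is the permutation-trace identity $\Tr{\bsP_{2,n}((12\ldots n))(\rho_1\ot\cdots\ot\rho_n)}=\Tr{\rho_1\cdots\rho_n}$, since this is where the specific convention for $\bsP_{(12\ldots n)}$ matters and where one must ensure the cyclic ordering on the product is consistent with the definition in Eq.~\eqref{eq:nthBarg}; everything else is bookkeeping via linearity and cyclicity of the trace.
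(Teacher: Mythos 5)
Your proposal is correct and is essentially the paper's own argument: the paper likewise evaluates the Born probabilities $\mathrm{Pr}(X=\pm1)=\Tr{(\proj{\pm}\ot\I^{\ot n}_2)\,\bsU_{\text{c-cycle}}(\proj{+}\ot\rho_1\ot\cdots\ot\rho_n)\bsU^{\dagger}_{\text{c-cycle}}}$ and then forms $\bbE[X]=\mathrm{Pr}(X=+1)-\mathrm{Pr}(X=-1)$, which is the same computation you package via $\rho_{\mathrm{ctrl}}$ and $\Tr{\sigma_{x}\rho_{\mathrm{ctrl}}}$, $\Tr{\sigma_{y}\rho_{\mathrm{ctrl}}}$. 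The one point you flag --- that the identity $\Tr{\bsP_{2,n}((12\ldots n))(\rho_1\ot\cdots\ot\rho_n)}$ yields $\Tr{\rho_1\cdots\rho_n}$ versus its conjugate depending on the orientation of the cyclic shift, which fixes the sign of $\bbE[Y]$ --- is precisely the step the paper leaves implicit, so making it explicit is a genuine improvement rather than a deviation.
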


\begin{proof}
To this end, note that in the special case when all the states are
pure, i.e., $\rho_i=\proj{\psi_i}$, the input to the circuit is an
$n$-partite pure state $\rho_1\ot\cdots\ot\rho_n$, and so
\begin{eqnarray*}
\mathrm{Pr}(X=\pm1) &=& \Tr{(\proj{\pm}\ot\I^{\ot
n}_2)\bsU_{\text{c-cycle}}(\proj{+}\ot\rho_1\ot\cdots\ot\rho_n)\bsU^\dagger_{\text{c-cycle}}}\\
&=& \frac12(1\pm \re\Tr{\rho_1\cdots\rho_n}),\\
\mathrm{Pr}(Y=\pm1) &=& \Tr{(\proj{\pm\mathrm{i}}\ot\I^{\ot
n}_2)\bsU_{\text{c-cycle}}(\proj{+}\ot\rho_1\ot\cdots\ot\rho_n)\bsU^\dagger_{\text{c-cycle}}}\\
&=& \frac12(1\pm \im\Tr{\rho_1\cdots\rho_n}).
\end{eqnarray*}
Based on the above observations, we get that
\begin{eqnarray*}
\bbE[X] &=&(+1)\mathrm{Pr}(X=+1) + (-1)\mathrm{Pr}(X=-1) =
\re\Tr{\rho_1\cdots\rho_n},\\
\bbE[Y] &=&(+1)\mathrm{Pr}(Y=+1) + (-1)\mathrm{Pr}(Y=-1) =
\im\Tr{\rho_1\cdots\rho_n}.
\end{eqnarray*}
This completes the proof.
\end{proof}
Furthermore, $\hat X$ and $\hat Y$ are empirical estimates. To
guarantee that these sample averages are within $\varepsilon$ of the
true means with high probability, the sample size must satisfy
conditions provided by the Hoeffding inequality, see
Theorem~\ref{th:Hoeffding}. Specifically, for any $\varepsilon > 0$
and confidence parameter $\delta \in (0,1)$, the well-known
Hoeffding inequality provides the requisite sample complexity.
\begin{eqnarray*}
\mathrm{Pr}(\abs{\hat X - \re\Tr{\rho_1\cdots\rho_n}})\geqslant
1-\delta,\\
\mathrm{Pr}(\abs{\hat Y - \im\Tr{\rho_1\cdots\rho_n}})\geqslant
1-\delta.
\end{eqnarray*}

\begin{thrm}[Hoeffding \cite{Hoeffding}]\label{th:Hoeffding}
Suppose that we are given $n$ independent samples $R_1,\ldots,R_n$
of a bounded random variable $R$ taking values in $[a,b]$ and having
mean $\mu$. Denote the sample mean by $\bar
R_n:=\frac1n\sum^n_{k=1}R_k$. Let $\varepsilon>0$ be the desired
accuracy, and let $1-\delta$ be the desired probability, where
$\delta\in(0,1)$. Then
\begin{eqnarray*}
\mathrm{Pr}(\abs{\bar R_n-\mu}\leqslant
\varepsilon)\geqslant1-\delta
\end{eqnarray*}
provided that
$$
n\geqslant \frac{M^2}{2\varepsilon^2}\ln\Pa{\frac2\delta}
$$
where $M:=b-a$.
\end{thrm}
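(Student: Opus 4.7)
The plan is to establish Hoeffding's inequality in its standard exponential-concentration form and then invert it to extract the sample-size bound. The overall strategy is the Chernoff-Cramér technique: bound the tail probability of the sum by the moment generating function, use independence to factor it, control each factor via Hoeffding's lemma, and optimize over the auxiliary parameter.

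First I would introduce the centered variables $S_k := R_k - \mu$, so that $S_k$ takes values in $[a-\mu, b-\mu] \subseteq [-M, M]$ and has mean zero. For any $t > 0$, by Markov's inequality applied to $e^{t \sum_k S_k}$, we have
\begin{eqnarray*}
\mathrm{Pr}(\bar R_n - \mu \geqslant \varepsilon) = \mathrm{Pr}\Pa{\textstyle\sum^n_{k=1} S_k \geqslant n\varepsilon} \leqslant e^{-tn\varepsilon}\, \bbE\Br{e^{t\sum^n_{k=1} S_k}} = e^{-tn\varepsilon}\prod^n_{k=1}\bbE[e^{tS_k}],
\end{eqnarray*}
where the factorization uses independence of the $R_k$'s. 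The key ingredient to bound each factor is \emph{Hoeffding's lemma}: if $S$ is a random variable with $\bbE[S]=0$ and $S \in [\alpha,\beta]$, then $\bbE[e^{tS}] \leqslant e^{t^2(\beta-\alpha)^2/8}$. I would prove this lemma by writing, via convexity of $s\mapsto e^{ts}$, the bound $e^{ts} \leqslant \tfrac{\beta-s}{\beta-\alpha}e^{t\alpha} + \tfrac{s-\alpha}{\beta-\alpha}e^{t\beta}$ for $s\in[\alpha,\beta]$, taking expectations (using $\bbE[S]=0$), and then showing that the logarithm $\varphi(u) := -pu + \log(1 - p + pe^u)$ (with $p = -\alpha/(\beta-\alpha)$ and $u = t(\beta-\alpha)$) satisfies $\varphi''(u) \leqslant 1/4$ everywhere, so Taylor expansion around $u=0$ yields $\varphi(u) \leqslant u^2/8$.

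Applying Hoeffding's lemma to each $S_k$ with $\beta-\alpha \leqslant M$ gives $\bbE[e^{tS_k}] \leqslant e^{t^2 M^2/8}$, hence
\begin{eqnarray*}
\mathrm{Pr}(\bar R_n - \mu \geqslant \varepsilon) \leqslant \exp\Pa{-tn\varepsilon + \tfrac{nt^2 M^2}{8}}.
\end{eqnarray*}
Optimizing the right-hand side over $t > 0$ by setting $t = 4\varepsilon/M^2$ yields the one-sided bound $\mathrm{Pr}(\bar R_n - \mu \geqslant \varepsilon) \leqslant \exp(-2n\varepsilon^2/M^2)$. Repeating the argument for $-S_k$ gives the symmetric inequality, and a union bound produces
\begin{eqnarray*}
\mathrm{Pr}(\abs{\bar R_n - \mu} \geqslant \varepsilon) \leqslant 2\exp\Pa{-\tfrac{2n\varepsilon^2}{M^2}}.
\end{eqnarray*}

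Finally, to extract the sample-size bound, I would require the right-hand side to be at most $\delta$, i.e. $2\exp(-2n\varepsilon^2/M^2) \leqslant \delta$, which upon taking logarithms rearranges exactly to $n \geqslant \frac{M^2}{2\varepsilon^2}\ln(2/\delta)$, as claimed. The main technical obstacle is the proof of Hoeffding's lemma, specifically the uniform bound $\varphi''(u) \leqslant 1/4$; this is the only nontrivial analytic step, since the rest of the argument is a direct application of Markov's inequality, independence, and optimization in $t$.
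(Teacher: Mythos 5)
Your proposal is correct. Note that the paper itself gives no proof of this theorem (it states ``The proof is omitted here'' and cites Hoeffding's original article), so there is nothing to compare against; what you supply is the standard Chernoff--Cram\'er argument, and every step checks out: the centering, the factorization of the moment generating function by independence, Hoeffding's lemma with interval length $\beta-\alpha=b-a=M$ (your passing remark that $S_k\in[-M,M]$ is harmless since you correctly use the length $M$ of the actual support interval, not $2M$), the optimal choice $t=4\varepsilon/M^2$ yielding the exponent $-2n\varepsilon^2/M^2$, the union bound for the two-sided tail, and the inversion $2\exp(-2n\varepsilon^2/M^2)\leqslant\delta$ giving exactly the stated sample-size threshold $n\geqslant \frac{M^2}{2\varepsilon^2}\ln(2/\delta)$. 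The only substantive technical point, as you say, is the bound $\varphi''(u)\leqslant 1/4$ in Hoeffding's lemma, which follows from writing $\varphi''$ as a variance-type expression $q(1-q)$ with $q=pe^u/(1-p+pe^u)\in[0,1]$; including that one line would make the argument fully self-contained.
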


\begin{proof}
The proof is omitted here.
\end{proof}

\section{Applications of Bargmann invariants}\label{sect:8}

Bargmann invariants are powerful tools because they extract core,
unchanging information from quantum systems, which is useful for
both fundamental understanding (geometric phase, classification) and
practical tasks (detection, benchmarking) in quantum physics and
information science. Here we outline selected aspects of interest,
omitting full citations for brevity.

\begin{definition}[Frame graph, \cite{Chien2016}]
The so-called \emph{frame graph} of a sequence of vectors
$\set{\bsv_j}$ (or the indices $j$ themselves) to be the
(undirected) graph with
\begin{enumerate}[(i)]
\item vertices $\set{\bsv_j}$, and
\item and edge between $\bsv_i$ and $\bsv_j$, where $i\neq j$, if
and only if $\Inner{\bsv_i}{\bsv_j}\neq0$.
\end{enumerate}
A finite spanning sequence of vectors for an inner product space is
also called a \emph{finite frame}.
\end{definition}

\begin{definition}[Spanning tree]
Given a sequence of vectors $(\bsv_1,\ldots,\bsv_N)$. Let $\Gamma$
be the frame graph. The so-called \emph{spanning tree} $\cT$ of the
frame graph $\Gamma$ is a subgraph of $\Gamma$ satisfying the
following conditions:
\begin{enumerate}[(i)]
\item $\cT$ contains all the vertices in $(\bsv_1,\ldots,\bsv_N)$,
\item $\cT$ is connected, and
\item no cycle in $\cT$.
\end{enumerate}
\end{definition}

\begin{definition}[\cite{Chien2016}]
A sequence of $n(\geqslant d)$ unit vectors $\set{\bsv_j}$ in
$\bbC^d$ is \emph{equiangular} if for some $C\geqslant0$,
$$
\abs{\Inner{\bsv_i}{\bsv_j}}=C,\quad (i\neq j).
$$
The angles of a sequence of vectors $\set{\bsv_j}$ are the
$\theta_{ij}\in\bbR/(2\pi\bbZ)$ defined by
$\Inner{\bsv_i}{\bsv_j}=\abs{\Inner{\bsv_i}{\bsv_j}}e^{\mathrm{i}\theta_{ij}}$
where $\Inner{\bsv_i}{\bsv_j}\neq0$.
\end{definition}
Note that $\theta_{ji}=-\theta_{ij}$ due to
$\Inner{\bsv_j}{\bsv_i}=\overline{\Inner{\bsv_i}{\bsv_j}}$. Now we
have the following result:
\begin{prop}[\cite{Chien2016}]\label{prop:JPU}
Let $\Psi=(\ket{\psi_1},\ldots,\ket{\psi_n})$ and
$\Phi=(\ket{\phi_1},\ldots,\ket{\phi_n})$ be two $N$-tuples of
vectors in $\bbC^d$, with angles $\alpha_{ij}$ and $\beta_{ij}$.
Then $\Psi$ and $\Phi$ are joint projective unitary equivalent if
and only if the following two statements are true:
\begin{enumerate}[(i)]
\item Their Gram matrices have entries with equal moduli:
$$
\abs{\Inner{\psi_i}{\psi_j}}=\abs{\Inner{\phi_i}{\phi_j}}
$$
for all $i,j\in\set{1,\ldots,N}$.
\item Their angles are gauge equivalent in the sense: There exist
$\theta_j\in\bbR/(2\pi\bbZ)$ with
$$
\alpha_{ij}=\beta_{ij}+\theta_i-\theta_j
$$
for all $i,j\in\set{1,\ldots,N}$.
\end{enumerate}
\end{prop}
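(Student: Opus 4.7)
The plan is to deduce this statement directly from Proposition~\ref{prop:jequiv}(ii), which already characterises joint projective unitary equivalence as the existence of a diagonal unitary $\bsT\in\sfU(1)^{\times N}$ with $G(\Phi)=\bsT^\dagger G(\Psi)\bsT$. Writing $\bsT=\diag(e^{\mathrm{i}\theta_1},\ldots,e^{\mathrm{i}\theta_N})$, this matrix identity is nothing but the family of entrywise equations $\Inner{\phi_i}{\phi_j}=e^{\mathrm{i}(\theta_j-\theta_i)}\Inner{\psi_i}{\psi_j}$ for all $i,j$. So the task reduces to showing that this system is equivalent to the conjunction of (i) and (ii), and the argument splits cleanly into a modulus part and a phase part.

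For the forward direction, I would assume such a $\bsT$ exists and take absolute values of the entrywise equation; this immediately gives $\abs{\Inner{\phi_i}{\phi_j}}=\abs{\Inner{\psi_i}{\psi_j}}$, which is (i). On every pair $(i,j)$ for which this common modulus is nonzero, the angles $\alpha_{ij},\beta_{ij}$ are defined, and stripping off the (equal) moduli leaves $e^{\mathrm{i}\beta_{ij}}=e^{\mathrm{i}(\alpha_{ij}+\theta_j-\theta_i)}$, which is precisely $\alpha_{ij}=\beta_{ij}+\theta_i-\theta_j$ in $\bbR/(2\pi\bbZ)$, namely (ii).

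For the converse, given (i) and (ii), I would use the phases provided by (ii) to set $\bsT:=\diag(e^{\mathrm{i}\theta_1},\ldots,e^{\mathrm{i}\theta_N})\in\sfU(1)^{\times N}$ and then verify the entrywise identity $\Inner{\phi_i}{\phi_j}=e^{\mathrm{i}(\theta_j-\theta_i)}\Inner{\psi_i}{\psi_j}$ case by case: for pairs with $\Inner{\psi_i}{\psi_j}\neq 0$ the moduli match by (i) and the arguments match by (ii); for pairs with $\Inner{\psi_i}{\psi_j}=0$, condition (i) forces $\Inner{\phi_i}{\phi_j}=0$ and the identity is vacuous. This yields $G(\Phi)=\bsT^\dagger G(\Psi)\bsT$, and a single invocation of Proposition~\ref{prop:jequiv}(ii) completes the proof.

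The main subtlety—and the reason the statement is split into two items rather than a single polar-form equality—is the bookkeeping for zero inner products: the angles $\alpha_{ij},\beta_{ij}$ are only defined on the edges of the respective frame graphs, so (ii) only makes sense as a condition on the support of $G(\Psi)$ (equivalently of $G(\Phi)$, by (i)). Condition (i) plays the dual role of matching frame graphs and of trivialising the off-graph entries. Beyond this, no genuine obstruction arises, since the existence of a globally consistent tuple $(\theta_1,\ldots,\theta_N)$ is assumed in (ii) rather than constructed; the proof would become substantially harder if (ii) were replaced by a purely local, edge-by-edge phase condition, because one would then need to solve a discrete cocycle problem on the frame graph to produce the $\theta_j$'s.
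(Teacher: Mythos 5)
Your proposal is correct and follows essentially the same route as the paper: both reduce the statement to the entrywise polar decomposition of the relation $G(\Phi)=\bsT^\dagger G(\Psi)\bsT$ from Proposition~\ref{prop:jequiv}(ii), with your diagonal unitary $\bsT$ playing exactly the role of the rephased tuple $\widetilde\Phi=(e^{-\mathrm{i}\theta_j}\ket{\phi_j})_j$ that the paper constructs in its converse direction. Your explicit handling of the zero-inner-product entries is in fact slightly more careful than the paper's, which writes the Gram entries in polar form without remarking that the angles are undefined off the frame graph.
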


\begin{proof}
\begin{itemize}
\item[($\Longrightarrow$)] Assume that $\Psi$ and $\Phi$ are joint projective unitary equivalent.
Then $\ket{\psi_j}=c_j\bsU\ket{\phi_j}$, where $\bsU\in\sfU(d)$ is
unitary and $c_j=e^{-\mathrm{i}\theta_j}$. Then
\begin{eqnarray*}
&&e^{\mathrm{i}\alpha_{ij}}\abs{\Inner{\psi_i}{\psi_j}} =
\Inner{\psi_i}{\psi_j} = \Inner{c_i\bsU\phi_i}{c_j\bsU\phi_j} \\
&&=\bar c_ic_j\Inner{\phi_i}{\phi_j} =
e^{\mathrm{i}(\theta_i-\theta_j)}e^{\mathrm{i}\beta_{ij}}\abs{\Inner{\phi_i}{\phi_j}},
\end{eqnarray*}
which implies that
$\abs{\Inner{\psi_i}{\psi_j}}=\abs{\Inner{\phi_i}{\phi_j}}$ and
$$
\alpha_{ij}=\beta_{ij}+\theta_i-\theta_j.
$$
\item[($\Longleftarrow$)] Conversely, suppose that the above two statements are true. Let
$\ket{\tilde\phi_j}=e^{-\mathrm{i}\theta_j}\ket{\phi_j}$. Then
\begin{eqnarray*}
&&\Inner{\tilde\phi_i}{\tilde\phi_j} =
\Inner{e^{-\mathrm{i}\theta_i}\phi_i}{e^{-\mathrm{i}\theta_j}\phi_j}
=e^{\mathrm{i}(\theta_i-\theta_j)}\Inner{\phi_i}{\phi_j}\\
&&=e^{\mathrm{i}(\theta_i-\theta_j)}e^{\mathrm{i}\beta_{ij}}\abs{\Inner{\phi_i}{\phi_j}}
=
e^{\mathrm{i}\alpha_{ij}}\abs{\Inner{\psi_i}{\psi_j}}=\Inner{\psi_i}{\psi_j}.
\end{eqnarray*}
Thus $\Psi$ is joint unitary equivalent to
$\widetilde\Phi=\set{\ket{\tilde\phi_j}}$, which is joint projective
unitary equivalent to $\Phi$. Therefore $\Psi$ is joint projective
unitary equivalent to $\Phi$.
\end{itemize}
We have done the proof.
\end{proof}
We define the $n$-products of a $N$-tuple
$\Psi=(\ket{\psi_1},\ldots,\ket{\psi_N})$ to be
\begin{eqnarray}
\Delta_{i_1,\ldots,i_n}(\Psi)=\Tr{\psi_{i_1}\cdots\psi_{i_n}}
=\Inner{\psi_{i_1}}{\psi_{i_2}}\Inner{\psi_{i_2}}{\psi_{i_3}}\cdots
\Inner{\psi_{i_n}}{\psi_{i_1}},
\end{eqnarray}
where $1\leqslant i_1,\ldots,i_n\leqslant N$ for $1\leqslant
n\leqslant N$. Formally, the number of $n$-products for a $N$-tuple
is at most $\binom{N}{n}\cdot n!$.

Using this Proposition~\ref{prop:JPU}, we can derive the following
result:
\begin{thrm}[\cite{Chien2016}]\label{th:tripleprod}
For given $N$-tuples of vectors $\Psi=\Set{\ket{\psi_j}}^N_{j=1}$
and $\Phi=\Set{\ket{\phi_j}}^N_{j=1}$ in $\bbC^d$, when the frame
graphs of both $\Psi$ and $\Phi$ are complete in the sense that all
inner products are not vanished, both $\Psi$ and $\Phi$ are joint
projective unitary equivalent if and only if
\begin{eqnarray}
\Delta_{ijk}(\Psi)=\Delta_{ijk}(\Phi)
\end{eqnarray}
for all $i,j,k\in\set{1,\ldots,N}$.
\end{thrm}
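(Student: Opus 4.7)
The plan is to reduce everything to Proposition~\ref{prop:JPU} by extracting both the modulus condition (i) and the angle condition (ii) from the hypothesis on triple Bargmann invariants. The forward direction is immediate: if $\ket{\psi_j}=e^{-\mathrm{i}\theta_j}\bsU\ket{\phi_j}$ for some $\bsU\in\sfU(d)$ and phases $\theta_j\in\bbR$, then in the expansion $\Delta_{ijk}(\Psi)=\Inner{\psi_i}{\psi_j}\Inner{\psi_j}{\psi_k}\Inner{\psi_k}{\psi_i}$ each occurrence of $\bsU$ cancels between adjacent inner products and each gauge phase cancels cyclically, leaving $\Delta_{ijk}(\Phi)$.

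For the reverse direction, I would first recover the modulus condition by specializing the triple indices. Since $\psi_i^2=\psi_i$ for a rank-one projector, taking $j=i$ yields
\begin{eqnarray*}
\Delta_{iik}(\Psi)=\Tr{\psi_i\psi_i\psi_k}=\Tr{\psi_i\psi_k}=\abs{\Inner{\psi_i}{\psi_k}}^2,
\end{eqnarray*}
and similarly for $\Phi$; the hypothesis $\Delta_{iik}(\Psi)=\Delta_{iik}(\Phi)$ therefore gives Proposition~\ref{prop:JPU}(i). Next, the completeness of the frame graph guarantees that every angle $\alpha_{ij},\beta_{ij}$ is well-defined, and the triple product factorizes as $\Delta_{ijk}(\Psi)=\abs{\Inner{\psi_i}{\psi_j}\Inner{\psi_j}{\psi_k}\Inner{\psi_k}{\psi_i}}\,e^{\mathrm{i}(\alpha_{ij}+\alpha_{jk}+\alpha_{ki})}$. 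Combining with the modulus equality just obtained, the hypothesis $\Delta_{ijk}(\Psi)=\Delta_{ijk}(\Phi)$ collapses to the triangle phase identity $\alpha_{ij}+\alpha_{jk}+\alpha_{ki}\equiv\beta_{ij}+\beta_{jk}+\beta_{ki}\pmod{2\pi}$ for every triple $(i,j,k)$.

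Setting $\gamma_{ij}:=\alpha_{ij}-\beta_{ij}$, the triangle identity becomes the cocycle relation $\gamma_{ij}+\gamma_{jk}+\gamma_{ki}\equiv 0\pmod{2\pi}$, together with the antisymmetry $\gamma_{ji}=-\gamma_{ij}$ inherited from the angles. I would then construct an explicit primitive: fix the base vertex $1$, set $\theta_1:=0$ and $\theta_j:=-\gamma_{1j}$ for $j\geqslant 2$; the special triangle $(i,j,1)$ gives $\gamma_{ij}+\gamma_{j1}+\gamma_{1i}\equiv 0$, which rearranges to $\gamma_{ij}\equiv\theta_i-\theta_j\pmod{2\pi}$, i.e., $\alpha_{ij}=\beta_{ij}+\theta_i-\theta_j$. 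This is exactly Proposition~\ref{prop:JPU}(ii), and invoking that proposition closes the proof.

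The main obstacle is the passage from the triangle-wise phase identity to the existence of global gauge phases $\theta_j$, i.e., showing that the $1$-cochain $\gamma_{ij}$ is a coboundary. The completeness assumption on the frame graph trivializes this step because every triangle through vertex $1$ is available and all $\gamma_{1j}$ are defined; without completeness one would instead have to choose a spanning tree of the frame graph to define the $\theta_j$'s, and then separately verify that every independent cycle in $\Gamma$ closes consistently, which is precisely why the hypothesis of a complete frame graph is imposed here.
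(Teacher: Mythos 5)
Your proposal is correct and follows essentially the same route as the paper's proof: recover the moduli $\abs{\Inner{\psi_i}{\psi_j}}$ from the degenerate triples $\Delta_{iij}$, deduce the triangle phase identity $\alpha_{ij}+\alpha_{jk}+\alpha_{ki}=\beta_{ij}+\beta_{jk}+\beta_{ki}$, solve it by anchoring the gauge phases at a fixed base vertex, and invoke Proposition~\ref{prop:JPU}. The only cosmetic differences are that you assume unit vectors (so $\Delta_{iik}=\abs{\Inner{\psi_i}{\psi_k}}^2$ directly, whereas the paper normalizes by $\Delta_{iii}^{1/3}$), you phrase the phase step in cocycle/coboundary language, and you write out the forward direction that the paper leaves implicit.
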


\begin{proof}
$(\Longleftarrow)$ Suppose that $\Psi$ and $\Phi$ have the same
triple products, and their common frame graph is complete, then all
the triple products are nonzero. Assume that
\begin{eqnarray*}
\Inner{\psi_i}{\psi_j}\Inner{\psi_j}{\psi_k}\Inner{\psi_k}{\psi_i} =
\Inner{\phi_i}{\phi_j}\Inner{\phi_j}{\phi_k}\Inner{\phi_k}{\phi_i}\quad
(\forall i,j,k).
\end{eqnarray*}
Thus from the above equation, we infer that
\begin{itemize}
\item $\Delta_{iii}(\Psi)=\Inner{\psi_i}{\psi_i}^3=\norm{\psi_i}^6$ for all $i$;
\item $\Delta_{iij}(\Psi) =
\Inner{\psi_i}{\psi_i}\abs{\Inner{\psi_i}{\psi_j}}^2$ for all $i,j$.
\end{itemize}
We see that
\begin{eqnarray*}
\abs{\Inner{\psi_i}{\psi_j}}=\sqrt{\Delta_{iij}(\Psi)\Delta^{-\frac13}_{iii}(\Psi)}=\sqrt{\Delta_{iij}(\Phi)\Delta^{-\frac13}_{iii}(\Phi)}=\abs{\Inner{\phi_i}{\phi_j}},\quad
\forall i,j.
\end{eqnarray*}
Let $\alpha_{ij}$ be the angles of $\Psi$ (and $\beta_{ij}$ for
$\Phi$). Since the triple products have the polar form
\begin{eqnarray*}
&&\Delta_{ijk}(\Psi) =
\Inner{\psi_i}{\psi_j}\Inner{\psi_j}{\psi_k}\Inner{\psi_k}{\psi_i}\\
&&=
e^{\mathrm{i}(\alpha_{ij}+\alpha_{jk}+\alpha_{ki})}\abs{\Inner{\psi_i}{\psi_j}\Inner{\psi_j}{\psi_k}\Inner{\psi_k}{\psi_i}}.
\end{eqnarray*}
We obtain from $\Delta_{ijk}(\Psi)=\Delta_{ijk}(\Phi)$ and
$\abs{\Inner{\psi_i}{\psi_j}}=\abs{\Inner{\phi_i}{\phi_j}}$ that
$$
\alpha_{ij}+\alpha_{jk}+\alpha_{ki}=\beta_{ij}+\beta_{jk}+\beta_{ki}.
$$
Fix $k$, and rearrange this, using $\alpha_{jk}=-\alpha_{kj}$ and
$\beta_{jk}=-\beta_{kj}$, to get that
\begin{eqnarray*}
\alpha_{ij} &=& \beta_{ij} + (\beta_{ki}-\alpha_{ki}) +
(\beta_{jk}-\alpha_{jk})\\
&=& \beta_{ij} + (\beta_{ki}-\alpha_{ki})-
(\beta_{kj}-\alpha_{kj})=\beta_{ij} + \theta_i - \theta_j
\end{eqnarray*}
where $\theta_i:=\beta_{ki}-\alpha_{ki}$, i.e., the angles of $\Psi$
and $\Phi$ are gauge equivalent by Proposition~\ref{prop:JPU}.
Therefore $\Psi$ is joint unitary equivalent to $\Phi$.
\end{proof}

\begin{thrm}[\cite{Oszmaniec2024NJP}]
For given two $N$-tuples $\Psi=(\ket{\psi_1},\ldots,\ket{\psi_N})$
and $\Phi=(\ket{\phi_1},\ldots,\ket{\phi_N})$ in $\bbC^d$, both
$\Psi$ and $\Phi$ are joint projective unitary equivalent if and
only if their $n$-products are equal, i.e.,
\begin{eqnarray}
\Delta_{i_1,\ldots,i_n}(\Psi)=\Delta_{i_1,\ldots,i_n}(\Phi),
\end{eqnarray}
where $1\leqslant i_1,\ldots,i_n\leqslant N$ for $1\leqslant
n\leqslant N$.
\end{thrm}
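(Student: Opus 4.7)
The forward direction is a direct computation. If $\ket{\phi_k} = e^{\mathrm{i}\theta_k} \bsU \ket{\psi_k}$ for a common unitary $\bsU$ and phases $\theta_k$, then each overlap transforms as $\Inner{\phi_i}{\phi_j} = e^{\mathrm{i}(\theta_j - \theta_i)}\Inner{\psi_i}{\psi_j}$. Around the cyclic product defining $\Delta_{i_1,\ldots,i_n}(\Phi)$, the phase contribution at each index $i_l$ appears exactly once with sign $+\theta_{i_l}$ and once with sign $-\theta_{i_l}$, so the phases telescope to $1$, giving $\Delta_{i_1,\ldots,i_n}(\Phi) = \Delta_{i_1,\ldots,i_n}(\Psi)$.

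For the reverse direction, my plan is first to upgrade the hypothesis from $n \leqslant N$ to arbitrary $n \in \bbN$, and then to invoke the Procesi/Oszmaniec proposition stated earlier in the paper. The upgrade rests on the following cyclic splitting identity: whenever $i_j = i_k$ with $1 \leqslant j < k \leqslant n$, one has
\begin{equation*}
\Delta_{i_1, \ldots, i_n}(\Psi) = \Delta_{i_1, \ldots, i_j, i_{k+1}, \ldots, i_n}(\Psi) \cdot \Delta_{i_j, i_{j+1}, \ldots, i_{k-1}}(\Psi).
\end{equation*}
This identity is verified by expanding both sides as products of inner products along the respective closed walks and using $i_j = i_k$ to re-splice the factors, so that every edge of the original cyclic product appears exactly once on the right-hand side. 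When $n > N$ the pigeonhole principle forces a repeated index, so applying the identity reduces every such $n$-product to a product of invariants of strictly smaller length. By induction on $n$, every Bargmann invariant factors into a product of invariants of length at most $N$, and hence equality of all $m$-products with $1 \leqslant m \leqslant N$ forces $\Delta_{i_1,\ldots,i_n}(\Psi) = \Delta_{i_1,\ldots,i_n}(\Phi)$ for every $n \in \bbN$.

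Having promoted the hypothesis to all $n$, I would apply the Procesi/Oszmaniec proposition to the pure-state density operators $\psi_k := \proj{\psi_k}$ and $\phi_k := \proj{\phi_k}$, obtaining a unitary $\bsU \in \sfU(d)$ with $\bsU \psi_k \bsU^\dagger = \phi_k$ for each $k$. The earlier proposition relating unitary and projective unitary equivalence then translates $\bsU \proj{\psi_k} \bsU^\dagger = \proj{\phi_k}$ into $\ket{\phi_k} = e^{\mathrm{i}\theta_k}\bsU \ket{\psi_k}$ for some $\theta_k \in \bbR$, establishing joint projective unitary equivalence of $\Psi$ and $\Phi$. The main obstacle is verifying the cyclic splitting identity: the two sub-cycles share the pivot vertex $i_j = i_k$, and one must carefully track which edges incident to this vertex are consumed by each sub-product so that their concatenation reproduces every factor of the original cyclic product exactly once, without introducing spurious overlaps. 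Once this bookkeeping is in hand, the pigeonhole induction is routine and the remainder of the argument is a direct appeal to results already proved in the excerpt.
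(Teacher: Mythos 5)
Your proof is correct, but it follows a genuinely different route from the paper. The paper's argument is constructive and graph-theoretic: it extracts the moduli $\abs{\Inner{\phi_i}{\phi_j}}$ and the frame graph from the $3$-products, gauge-fixes the phases along a spanning tree, and then observes that each non-tree edge closes a unique cycle of length at most $N$ whose cycle-product determines the one remaining phase; this produces a canonical Gram matrix built entirely from $n$-products with $n\leqslant N$, and Proposition~\ref{prop:jequiv}(ii) finishes the job. You instead bootstrap the hypothesis to all degrees via the rank-one splitting identity $\Delta_{i_1,\ldots,i_n}=\Delta_{i_1,\ldots,i_j,i_{k+1},\ldots,i_n}\cdot\Delta_{i_j,\ldots,i_{k-1}}$ (which is valid for pure states, since $\proj{\psi}\bsX\proj{\psi}=\Innerm{\psi}{\bsX}{\psi}\proj{\psi}$, and your edge-by-edge verification goes through including the boundary cases $k=n$ and $k=j+1$) together with pigeonhole, and then invoke the Procesi/Oszmaniec proposition from Section~\ref{sect:2}. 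Both arguments are sound; the trade-off is that your route leans on the heavier Schur--Weyl/invariant-theory machinery already proved in the paper and is correspondingly shorter, whereas the paper's spanning-tree construction is elementary, identifies exactly which invariants are needed (it underlies the subsequent $(N-1)^2$ counting theorem), and makes transparent why cycles with distinct indices of length at most $N$ suffice. One small presentational point: your splitting identity is precisely the "bookkeeping" you flag as the main obstacle, and since it is a one-line computation with inner products, it deserves to be written out rather than deferred.
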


\begin{proof}
It suffices to find a Gram matrix $G(\Psi) = \bsT^\dagger
G(\Phi)\bsT$ by using only the $n$-products of $\Phi$, where
$1\leqslant n\leqslant N$. In particular, using $n$-products for
$n=3$, we can get the modulus $\abs{\Inner{\phi_i}{\phi_j}}$ of each
entry of $G(\Phi)$ and the frame graph of $\Phi$. In fact, from the
proof of Theorem~\ref{th:tripleprod}, we see that
\begin{eqnarray*}
\abs{\Inner{\psi_i}{\psi_j}}=\sqrt{\Delta_{iij}(\Psi)\Delta^{-\frac13}_{iii}(\Psi)}=\sqrt{\Delta_{iij}(\Phi)\Delta^{-\frac13}_{iii}(\Phi)}=\abs{\Inner{\phi_i}{\phi_j}},\quad
\forall i,j.
\end{eqnarray*}
We therefore need only determine the arguments $\alpha_{ij}$ of the
(nonzero) inner products
$\Inner{\phi_i}{\phi_j}=\abs{\Inner{\phi_i}{\phi_j}}e^{\mathrm{i}\alpha_{ij}}$,
which correspond to edges of the frame graph $\Gamma$. This can be
done on each connected component $\Gamma$ of the frame graph of
$\Phi$.
\begin{itemize}
\item Find a spanning tree $\cT$ of
the frame graph $\Gamma$ with root vertex $\bsr$ (here every other
vertex has a unique parent on the path back to the root.
Conceptually, edges can be thought of as directed away from the root
or towards it, depending on the context). This can be done because a
spanning tree of a connected graph without cycles always exists!
Moreover in such situation, the spanning tree is not uniquely
determined. Starting from the root vertex $\bsr$, we can multiply
the vertices $\phi\in\Gamma\backslash\set{\bsr}$ by unit scalars so
that the arguments of the inner products
$\Inner{c_i\phi_i}{c_j\phi_j}=\abs{\Inner{\phi_i}{\phi_j}}\Pa{\bar
c_i c_je^{\mathrm{i}\alpha_{ij}}}$, where $\abs{c_i}=\abs{c_j}=1$,
corresponding to the edges of $\Gamma$ take arbitrarily assigned
values.
\item The only entries of the Gram matrix
$G(\Phi)$ which are not yet defined are those given by the edges of
the frame graph $\Gamma$ which are not in the spanning tree $\cT$.
Since $\cT$ is a spanning tree, adding each such edge to $\cT$ gives
an $n$-cycle. The corresponding nonzero $n$-product has all inner
products already determined, except the one corresponding to the
added edge, which is therefore uniquely determined by the
$n$-product.
\end{itemize}
This completes the proof.
\end{proof}

The following result gives a complete characterization of projective
unitary invariant properties of a tuple of $N$ pure states in terms
of Bargmann invariants.

\begin{thrm}[\cite{Oszmaniec2024NJP}]
Let $\Psi=(\proj{\psi_1},\ldots,\proj{\psi_N})$ be an $N$-tuple of
pure quantum states on $\bbC^d$. Then the unitary orbit (i.e., the
joint unitary similar class) of $\Psi$ is uniquely specified by
values of at most $(N-1)^2$ Bargmann invariants. The invariants are
of degree $n\leqslant N$ and their choice depends on $\Psi$.
\end{thrm}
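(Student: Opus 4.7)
The plan is to reduce the problem to the characterization of joint projective unitary equivalence via Gram matrices (Proposition~\ref{prop:jequiv}(ii)) and the modulus/angle decomposition in Proposition~\ref{prop:JPU}, then count the minimal number of Bargmann invariants needed to reconstruct the equivalence class of $G(\Psi)$ up to a diagonal unitary $\bsT \in \sfU(1)^{\times N}$.

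First I would extract the moduli of all Gram matrix entries. Since each $\rho_i = \proj{\psi_i}$ is a projector, $\rho_i^2 = \rho_i$ and hence $\Delta_{ij}(\Psi) = \Tr{\psi_i\psi_j} = \abs{\Inner{\psi_i}{\psi_j}}^2$ for $i\neq j$. This determines $\abs{\Inner{\psi_i}{\psi_j}}$ for all $i\neq j$ using $\binom{N}{2}$ second-order Bargmann invariants, and in particular reconstructs the frame graph $\Gamma(\Psi)$.

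Next I would fix the phases. By Proposition~\ref{prop:JPU}, projective unitary equivalence only requires the angles $\alpha_{ij}=\arg\Inner{\psi_i}{\psi_j}$ to be determined up to a per-vertex gauge $\theta_i$. On each connected component of $\Gamma(\Psi)$, I pick a spanning tree (which exists and has $N-c$ edges in total, where $c$ denotes the number of connected components). Using the $N$ gauge phases $\theta_i$, I can set the angles $\alpha_{ij}$ along all spanning-tree edges to any preselected values---this step consumes zero Bargmann invariants. What remains are the "chord" edges not in the spanning forest, and each chord $(i,j)$, when added to the tree, closes a unique cycle $C_{ij}$ of length $\ell\leq N$. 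The corresponding Bargmann invariant $\Delta_{C_{ij}}(\Psi)$ has a modulus already determined in the first step and an argument equal to the signed sum of angles along $C_{ij}$. Since all but the chord angle $\alpha_{ij}$ were fixed by gauge, $\arg\Delta_{C_{ij}}(\Psi)$ uniquely determines $\alpha_{ij}$.

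Counting yields the bound: the number of chord edges is at most $\binom{N}{2}-(N-c)$, maximized at $c=1$, giving $\binom{N-1}{2} = (N-1)(N-2)/2$ cycle invariants of degree $\leq N$. Together with the $\binom{N}{2}=N(N-1)/2$ modulus invariants of degree~$2$, the total is
\begin{equation*}
\tfrac{N(N-1)}{2}+\tfrac{(N-1)(N-2)}{2} = (N-1)^2.
\end{equation*}
The dependence of the chosen invariants on $\Psi$ enters through the structure of $\Gamma(\Psi)$ and the choice of spanning forest. The main obstacle I anticipate is the bookkeeping on components where $\Gamma(\Psi)$ is disconnected: the angles between components are not merely undetermined but genuinely irrelevant to projective unitary equivalence (disconnected components can be gauged independently), and one must verify that this reduction truly costs no extra invariants. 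A minor technical point is ensuring that the cycle invariant $\Delta_{C_{ij}}(\Psi)$ is nonzero so that $\arg\Delta_{C_{ij}}(\Psi)$ is well-defined; this follows automatically since every edge in $C_{ij}$ lies in $\Gamma(\Psi)$ by construction, so the corresponding product of inner products has nonzero modulus.
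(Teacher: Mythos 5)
Your proposal is correct and follows essentially the same route as the paper's proof: extract the moduli from the $\binom{N}{2}$ second-order invariants, fix tree-edge phases by gauge, recover each chord phase from the Bargmann invariant of the unique cycle it closes, and sum $\binom{N}{2}+\binom{N-1}{2}=(N-1)^2$. Your explicit treatment of disconnected frame graphs via a spanning forest is a small refinement of the paper's ``without loss of generality connected'' step, but the argument is otherwise identical.
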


\begin{proof}
Our strategy is based on encoding complete joint projective unitary
invariants in a single Gram matrix in a way that depends on
orthogonality relations of states in $\Psi$. We start with the
connection between joint unitary similarity of two tuples of pure
states $\Psi=(\proj{\psi_1},\ldots,\proj{\psi_N})$ and
$\Phi=(\proj{\phi_1},\ldots,\proj{\phi_N})$ and unitary equivalence
between the associated tuples of wave-functions. Namely, $\Psi$ is
joint unitary similar to $\Phi$ if and only if it is possible to
find representing wave functions $\tilde
\Psi=(\ket{\tilde\psi_1},\ldots,\ket{\tilde\psi_N}),\tilde\Phi=(\ket{\tilde\phi_1},\ldots,\ket{\tilde\phi_N})$
that are joint unitary equivalent, where
\begin{eqnarray}
\proj{\psi_j}=\proj{\tilde\psi_j}\quad\text{and}\quad\proj{\phi_j}=\proj{\tilde\phi_j},\quad\forall
j=1,\ldots,N.
\end{eqnarray}
That is, there exists an unitary operator $\bsU\in\sfU(d)$ such that
$\ket{\tilde\phi_i}=\bsU\ket{\tilde\psi_i}$ for $i=1,\ldots,N$. The
problem of joint unitary equivalence of tuples of vectors is
equivalent to equality of the corresponding Gram matrices, i.e.,
$$
\tilde\Phi=\bsU\tilde\Psi\Longleftrightarrow
G(\tilde\Psi)=G(\tilde\Phi),
$$
where
$[G(\tilde\Psi)]_{ij}=\inner{\tilde\psi_i}{\tilde\psi_j}=\inner{\tilde\phi_i}{\tilde\phi_j}=[G(\tilde\Phi)]_{ij}$.
In summary,
\begin{eqnarray}
\Phi=\bsU\Psi\bsU^\dagger \Longleftrightarrow
\tilde\Phi=\bsU\tilde\Psi\Longleftrightarrow
G(\tilde\Psi)=G(\tilde\Phi).
\end{eqnarray}
In what follows, we construct $(\tilde\Psi,\tilde\Phi)$ from
$(\Psi,\Phi)$. Since the phase of individual wave function is not an
observable, therefore the Gram matrix of a collection of pure states
$\Psi$ is uniquely defined only up to conjugation via a diagonal
unitary matrix $\bsT\in\sfU(1)^{\times N}$. Assume now that for
every tuple of quantum states $\Psi$, we have a construction of a
valid Gram matrix $G(\tilde \Psi)$ (to be specified later) whose
entries can be expressed solely in terms of projective
unitary-invariants of states from $\Psi$. It then follows from the
above considerations that $\Psi$ is projective unitary equivalent to
$\Phi$ if and only if $G(\tilde \Psi)=G(\tilde \Phi)$.

The construction of $G(\tilde \Psi)$ proceed as follows. Without
loss of generality, we assume that the frame graph $\Gamma(\Psi)$ is
connected, i.e., every pair of vertices in $\Gamma(\Psi)$ can be
connected via a path in $\Gamma(\Psi)$. We can choose a spanning
tree $\cT(\Psi)$ of $\Gamma(\Psi)$.

We now choose vector representative $\ket{\tilde\psi_i}$ of states
in $\Psi$ in such a way that, for $\set{i,j}$ an edge in
$\cT(\Psi)$,
$\Inner{\tilde\psi_i}{\tilde\psi_j}=\abs{\Inner{\tilde\psi_i}{\tilde\psi_j}}>0$.
Every other inner product $\Inner{\tilde\psi_i}{\tilde\psi_j}$ will
be either $0$, or its phase will be fixed as follows.
\begin{enumerate}[(i)]
\item Because $\cT(\Psi)$ is a spanning tree, there exists a \emph{unique} path from $j$ to $i$ within $\cT(\Psi)$.
Suppose this path has $k$ vertices
$(\alpha_1=j,\alpha_2,\ldots,\alpha_{k-1},\alpha_k=i)$. Consider now
the $k$-cycle that would be formed by adding the vertex $j$ at the
end of this path and denote it by $C_{ij}$. By construction, every
edge in $C_{ij}$ except for $\set{i,j}$ is in $\cT(\Psi)$, and
therefore all the inner products
$\inner{\tilde\psi_{\alpha_l}}{\tilde\psi_{\alpha_{l+1}}}>0$ for
$l\in\set{1,2,\ldots,k-1}$.
\item Hence, if we denote the $k$th-order Bargmann invariant
associated to $C_{ij}$ as
$\Delta(C_{ij}):=\Delta_{\alpha_1\alpha_2\ldots\alpha_{k-1}\alpha_k}$,
we can write
$$
\Delta(C_{ij})
=\iinner{\tilde\psi_{\alpha_k}}{\tilde\psi_{\alpha_1}}\prod^{k-1}_{l=1}\iinner{\tilde\psi_{\alpha_l}}{\tilde\psi_{\alpha_{l+1}}}=
\iinner{\tilde\psi_i}{\tilde\psi_j}\prod^{k-1}_{l=1}\iinner{\tilde\psi_{\alpha_l}}{\tilde\psi_{\alpha_{l+1}}}.
$$
Therefore, we can fix the phase of every nonzero inner product that
is not in $\cT(\Psi)$ as
\begin{eqnarray}
\iinner{\tilde\psi_i}{\tilde\psi_j} =
\frac{\Delta(C_{ij})}{\prod^{k-1}_{l=1}\iinner{\tilde\psi_{\alpha_l}}{\tilde\psi_{\alpha_{l+1}}}}=\kappa\cdot
\Delta(C_{ij})\quad(\kappa>0).
\end{eqnarray}
\item Thus, all matrix elements of the so-constructed Gram matrix $
[G(\tilde\Psi)]_{ij}=\Inner{\tilde\psi_i}{\tilde\psi_j}$ are
expressed via Bargmann invariants of degree at most $N$. Since this
Gram matrix $G(\tilde
\Psi)=(\Inner{\tilde\psi_i}{\tilde\psi_j})_{N\times N}$ is positive
semidefinite, it suffices to determine $\binom{N}{2}$ matrix
elements above the main diagonal (Here the diagonal part is $\I_N$).
For each element
$\Inner{\tilde\psi_i}{\tilde\psi_j}=e^{\mathrm{i}\theta_{ij}}\abs{\Inner{\tilde\psi_i}{\tilde\psi_j}}$,
where $1\leqslant i<j\leqslant N$. Every inner product
$\Inner{\tilde\psi_i}{\tilde\psi_j}$ can be determined by measure
$\abs{\Inner{\tilde\psi_i}{\tilde\psi_j}}$ and then phase factor
$e^{\mathrm{i}\theta_{ij}}$:
\begin{enumerate}[(1)]
\item First, perform measurements for the $\binom{N}{2}$ second-order Bargmann invariants
$$
\Delta_{ij}(\tilde\Psi)=\abs{\Inner{\tilde\psi_i}{\tilde\psi_j}}^2\quad(1\leqslant
i<j\leqslant N),
$$
which yields the moduli of the matrix elements:
$\abs{\Inner{\tilde\psi_i}{\tilde\psi_j}}=\Delta^\frac12_{ij}(\tilde\Psi)$.
\item To fully specify the matrix element
$\Inner{\tilde\psi_i}{\tilde\psi_j}$, the phase factor
$e^{\mathrm{i}\theta_{ij}}$ in
$\Inner{\tilde\psi_i}{\tilde\psi_j}=e^{\mathrm{i}\theta_{ij}}\abs{\Inner{\tilde\psi_i}{\tilde\psi_j}}$
must also be determined. Observe that a spanning tree $\cT(\Psi)$
constructed from the $N$ vectors
$(\ket{\tilde\psi_1},\ldots,\ket{\tilde\psi_N})$ contains \emph{at
least} $N-1$ edges (to ensure connectivity). The corresponding inner
products can be chosen to be positive and thus carry the trivial
phase factor $1$. Consequently, among the $\binom{N}{2}$ matrix
elements, at least $N-1$ ones are already real and positive, leaving
\emph{at most} $\binom{N}{2}-(N-1)=\binom{N-1}{2}$ matrix elements
with nontrivial phase factors. For an edge
$\set{i,j}\in\Gamma(\Psi)\backslash\cT(\Psi)$ with
$\Inner{\tilde\psi_i}{\tilde\psi_j}\neq0$, the phase factor
$e^{\mathrm{i}\theta_{ij}}$ is identified with
\begin{eqnarray}
e^{\mathrm{i}\theta_{ij}} =
\frac{\Inner{\tilde\psi_i}{\tilde\psi_j}}{\abs{\Inner{\tilde\psi_i}{\tilde\psi_j}}}
= \kappa
\frac{\Delta_{\alpha_1,\ldots,\alpha_k}(\tilde\Psi)}{\Delta^{\frac12}_{ij}(\tilde\Psi)}
\end{eqnarray}
is itself a Bargmann invariant (where the numerator involves
higher-order invariants along the path connecting $i$ and $j$ in the
tree). Determining these phases therefore requires at most
$\binom{N-1}{2}$ additional Bargmann invariants.
\end{enumerate}
\end{enumerate}
In summary, at most $\binom{N}{2} + \binom{N-1}{2} = (N-1)^2$
invariants are needed to determine the unitary orbit of $\Psi$. We
have done the proof.
\end{proof}

The Cayley-Hamilton Theorem \cite{Horn2012} for a single matrix
$\bsX \in \bbC^{d\times d}$ provides the starting point for
understanding trace identities. It states that every matrix
satisfies its own characteristic polynomial:
\begin{eqnarray*}
\bsX^d+c_1(\bsX)\bsX^{d-1}+\cdots+c_d(\bsX)\I_d=\zero,
\end{eqnarray*}
where the coefficient $c_k(\bsX)$ is (up to sign) the $k$-th
elementary symmetric polynomial in the eigenvalues. Crucially, each
$c_k(\bsX)$ can be expressed as a polynomial in the power traces
$\tr{\bsX^k}$ for $k=1,\ldots,d$. Consequently, $\bsX^d$ can be
written as a linear combination of $\I_d, \bsX, \ldots, \bsX^{d-1}$
with coefficients in $\bbC[\tr{\bsX},\ldots,\tr{\bsX^d}]$. By
induction, any power $\bsX^n$ for $n \geqslant d$ can be expressed
in the same basis, with coefficients that are polynomials in these
$d$ traces. This result extends powerfully to several matrices
$\bsX_1, \ldots, \bsX_N$. A core idea is to consider a generic
linear combination $\bsY = \sum_{k=1}^N t_k \bsX_k$, where the
$t_k$'s are formal variables. Applying the Cayley-Hamilton theorem
to $\bsY$ yields a polynomial identity whose coefficients are
themselves polynomials in $\bbC[\Tr{\bsY},\ldots,\tr{\bsY^d}]$:
\begin{eqnarray*}
\bsY^d +
p_1(\Tr{\bsY},\ldots,\tr{\bsY^d})\bsY^{d-1}+\cdots+p_d(\Tr{\bsY},\ldots,\tr{\bsY^d})\I_d=\zero.
\end{eqnarray*}
Expanding these traces,
\begin{eqnarray*}
\tr{\bsY^k} = \sum^N_{i_1,\ldots,i_k=1} t_{i_1}\cdots
t_{i_k}\Tr{\bsX_{i_1}\cdots\bsX_{i_k}},
\end{eqnarray*}
expresses them in terms of traces of arbitrary words (monomials) in
the $\bsX_i$'s. An important combinatorial fact is that all
polynomial relations among these traces (trace identities) are
generated by equating coefficients of the various monomials in the
$t_i$'s obtained from the Cayley-Hamilton identity for $\bsY$.

This leads directly to the First Fundamental Theorem of matrix
invariants \cite{Procesi1976}. It states that the ring of polynomial
invariants for $N$ matrices under simultaneous conjugation, $\bsX_k
\mapsto \bsU \bsX_k \bsU^\dagger$ with $\bsU \in \sfU(d)$, is
generated by the traces of all words, $\Tr{\bsX_{i_1}\cdots
\bsX_{i_k}}$ for $k \geqslant 1$. In quantum theory, where states
are represented by Hermitian matrices, these invariant traces are
precisely the Bargmann invariants, which therefore generate the
invariant ring for Hermitian tuples.

A natural question is: what is a finite generating set for this
ring? Procesi's deep result \cite{Procesi1976} provides the answer:
traces of words of length at most $d^2$ suffice. The reasoning
involves the associative algebra $\cA$ generated by $\bsX_1, \ldots,
\bsX_N$ inside $\bbC^{d\times d}$. By Burnside's theorem, if these
matrices generate the full matrix algebra, then $\cA = \bbC^{d\times
d}$, which has dimension $d^2$. In this case, the Cayley-Hamilton
theorem applied to the regular representation of $\cA$ implies that
any word of length $\geqslant d^2$ can be expressed as a linear
combination of shorter words, with coefficients that are polynomials
in traces of words of length $\leqslant d^2$. Consequently, the
trace of any longer word can be reduced to a polynomial in traces of
shorter words. If the matrices do not generate the full algebra, the
dimension of $\cA$ is smaller, potentially leading to a lower bound,
but $d^2$ remains the universal worst-case bound.

\begin{thrm}[\cite{Oszmaniec2024NJP}]
Let $\Psi=(\rho_1,\ldots,\rho_N)$ be an $N$-tuple of mixed quantum
states on $\bbC^d$. Bargmann invariants of degree $n\leqslant d^2$
form a complete set of invariants characterizing the unitary
invariants of $\Psi$. Moreover, the number of independent invariants
can be chosen to be $(N-1)(d^2-1)$.
\end{thrm}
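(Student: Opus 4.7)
The plan is to combine the Procesi-type proposition stated earlier (joint unitary similarity is equivalent to equality of all Bargmann invariants) with the algebraic degree bound sketched immediately before the theorem, followed by a dimension count on the quotient space to pin down the number $(N-1)(d^2-1)$.

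First I would note that completeness without a degree restriction is already in place: by the earlier proposition, $\Psi$ and $\Psi'$ are joint unitary similar iff $\Tr{\rho_{i_1}\cdots\rho_{i_n}}=\Tr{\rho'_{i_1}\cdots\rho'_{i_n}}$ for every sequence and every $n$. The task is therefore to truncate the degree at $d^2$. For this I would invoke the reduction recalled just above the statement: the associative algebra $\cA\subseteq\bbC^{d\times d}$ generated by $\rho_1,\ldots,\rho_N$ has $\dim\cA\leqslant d^2$, so by Burnside combined with Cayley--Hamilton applied to the generic linear combination $\bsY=\sum_k t_k\rho_k$, every power $\bsY^n$ with $n\geqslant d$ is a $\bbC[\Tr{\bsY},\ldots,\Tr{\bsY^d}]$-linear combination of $\I,\bsY,\ldots,\bsY^{d-1}$, and iterating plus matching coefficients in the formal variables $t_k$ expresses every word in the $\rho_k$'s of length $>d^2$ as a polynomial in Bargmann invariants of degree $\leqslant d^2$. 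Taking traces of both sides then reduces all higher Bargmann invariants to polynomials in degree-$\leqslant d^2$ Bargmann invariants, establishing the first half of the theorem.

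Second, for the count $(N-1)(d^2-1)$, I would perform a dimension count on the orbit space $\density{\bbC^d}^{\times N}/\sfU(d)$. The source has real dimension $N(d^2-1)$, since each density matrix contributes $d^2-1$ free real parameters. For $N\geqslant 2$, the joint commutant of a generic pair $(\rho_1,\rho_2)$ is scalar (by Schur's lemma applied to the irreducible algebra they generate), so the joint stabilizer in $\sfU(d)$ collapses to the center $\sfU(1)$ and generic orbits have real dimension $d^2-1$. Subtracting yields a quotient of real dimension $(N-1)(d^2-1)$. Since the degree-$\leqslant d^2$ Bargmann invariants already separate orbits by the first step, a Jacobian-rank computation at a generic point lets one extract $(N-1)(d^2-1)$ among them whose real differentials are linearly independent on the tangent space of the quotient.

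The main obstacle lies in the counting step: one must ensure that the complex Bargmann invariants contribute the correct real-dimensional count (they come in conjugate pairs via $\Tr{\rho_{i_1}\cdots\rho_{i_n}}=\overline{\Tr{\rho_{i_n}\cdots\rho_{i_1}}}$) and that the selected functions are genuinely algebraically independent rather than constrained by some hidden polynomial relation. The latter is controlled by Procesi's second fundamental theorem, which asserts that all relations among matrix traces are generated by the generalized Cayley--Hamilton identities; these are trivially satisfied on a generic tuple in degrees $\leqslant d^2$, so no hidden syzygies obstruct attaining the full quotient dimension, and the extracted $(N-1)(d^2-1)$ invariants are indeed independent.
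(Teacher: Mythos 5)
Your proposal is correct and follows essentially the same route as the paper: the degree truncation at $d^2$ via the Procesi--Razmyslov reduction of trace identities to Cayley--Hamilton, orbit separation from the completeness of the full set of Bargmann invariants, and the count $(N-1)(d^2-1)$ obtained by subtracting the generic orbit dimension $\dim(\sfU(d)/\sfU(1))=d^2-1$ from the source dimension $N(d^2-1)$ and identifying the result with the transcendence degree of the invariant field. Your added remarks on conjugate pairs and the second fundamental theorem only make explicit points the paper leaves implicit; no substantive difference in approach.
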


\begin{proof}
{\bf (1) Bargmann invariants of degree at most $d^2$ form a complete
set of unitary invariants:}
\begin{itemize}
\item \textbf{Sufficiency of degree} $\leqslant d^2$: By the Cayley-Hamilton theorem, any $d \times d$ matrix satisfies its characteristic polynomial of degree $d$. For several matrices, a theorem of Procesi and Razmyslov implies that all trace identities follow from the Cayley-Hamilton theorem, and traces of products of length greater than $d^2$ can be expressed as polynomials in traces of products of length at most $d^2$. Hence, Bargmann invariants of degree $n \leqslant d^2$ suffice to generate the full invariant ring.
\item \textbf{Separation of orbits:} If two tuples $\Psi$ and $\Psi'$
  have the same Bargmann invariants for all sequences of length up to $d^2$, then they have the same invariants for all lengths (by the sufficiency argument). By the aforementioned invariant theory result, this implies $\Psi$ and $\Psi'$
  are in the same orbit under $\GL(d, \bbC)$. Since the matrices are Hermitian, $\GL(d, \bbC)$ orbits intersect the Hermitian matrices precisely in $\sfU(d)$ orbits. Hence, $\Psi$ and $\Psi'$
  are unitarily equivalent.
\end{itemize}
{\bf (2) The number of algebraically independent invariants is
$(N-1)(d^2-1)$.} The space of $N$-tuples of density matrices has
real dimension $N(d^2-1)$. The effective group acting is
$\sfU(d)/\sfU(1)$, which has dimension $d^2-1$. For a generic tuple
(e.g., one with no nontrivial common stabilizer), the stabilizer in
$\sfU(d)/\sfU(1)$ is trivial, so the orbit dimension is $d^2-1$.
Thus, the quotient space has dimension
\begin{eqnarray}
N(d^2-1) - (d^2-1)=(N-1)(d^2-1).
\end{eqnarray}
This equals the transcendence degree of the field of rational
invariants, meaning there exist $(N-1)(d^2-1)$ algebraically
independent Bargmann invariants, and any other Bargmann invariant is
algebraically dependent on these.
\end{proof}

\subsection{Witnessing quantum imaginarity}

For a tuple of quantum states $\Psi = (\rho_1, \ldots, \rho_n)$, the
imaginary part of the $n$th-order Bargmann invariant
$\Delta_{12\ldots n}(\Psi) = \Tr{\rho_1 \cdots \rho_n}$ can witness
the \emph{set imaginarity} of $\Psi$, as defined in
\cite{Fernandes2024PRL}. For convenience, let us focus on the qubit
state case. Using Bloch representation of a qubit state,
$\rho_i=\frac12(\I_2+\bsr_i\cdot\bssigma)$, where $i=1,\ldots,n$.
Denote
\begin{eqnarray}
\bsP_n:=\rho_1\cdots\rho_n=2^{-n}\Pa{p^{(n)}_0\I_2+\bsp^{(n)}\cdot\bssigma},
\end{eqnarray}
where $p^{(1)}_0=1$ and $\bsp^{(1)}=\bsr_1$. Moreover,
$\bsP_{n+1}=\bsP_n\rho_{n+1}$. We have the following update rules:
\begin{eqnarray}
p^{(n+1)}_0 &=& p^{(n)}_0 + \Inner{\bsp^{(n)}}{\bsr_{n+1}},\\
\bsp^{(n+1)}&=& p^{(n)}_0\bsr_{n+1} +\bsp^{(n)}
+\mathrm{i}\bsp^{(n)}\times \bsr_{n+1}.
\end{eqnarray}
Let $p^{(n)}_0=a^{(n)}_0+\mathrm{i}b^{(n)}_0$ and
$\bsp^{(n)}=\bsa^{(n)}+\mathrm{i}\bsb^{(n)}$. Thus
\begin{eqnarray}
\Delta_{1\ldots
n}=\Tr{\bsP_n}=\Tr{\rho_1\cdots\rho_n}=2^{1-n}p^{(n)}_0.
\end{eqnarray}
In addition,
\begin{eqnarray}
\sR_n&:=&\spn_{\bbZ}\Set{\Inner{\bsr_i}{\bsr_j}:1\leqslant
i,j\leqslant
n},\\
\sI_n&:=&\spn_{\sR_n}\Set{\det(\bsr_i,\bsr_j,\bsr_k):1\leqslant
i<j<k\leqslant n},\\
\sA_n&:=& \spn_{\sR_n}\Set{\bsr_k:1\leqslant k\leqslant n} +
\spn_{\sI_n}\Set{\bsr_i\times \bsr_j:1\leqslant i<j\leqslant n},\\
\sB_n&:=& \spn_{\sI_n}\Set{\bsr_k:1\leqslant k\leqslant n} +
\spn_{\sR_n}\Set{\bsr_i\times \bsr_j:1\leqslant i<j\leqslant n}.
\end{eqnarray}

Denote $\Delta_{ij}=\Tr{\rho_i\rho_j}$. We know from
\cite{Zhang2025PRA1} that
\begin{itemize}
\item for $n=2$, $\Delta_{12}=\Tr{\rho_1\rho_2}=\frac{1+\Inner{\bsr_1}{\bsr_2}}2\in\sR_2$.
\item for $n=3$, $\Tr{\rho_1\rho_2\rho_3}=\frac14(a^{(3)}_0+\mathrm{i}b^{(3)}_0)$, where
\begin{eqnarray*}
\begin{cases}
a^{(3)}_0= 1+\sum_{1\leqslant i<j\leqslant 3}\Inner{\bsr_i}{\bsr_j}\in\sR_3,\\
b^{(3)}_0= \det(\bsr_1,\bsr_2,\bsr_3)\in\sI_3.
\end{cases}
\end{eqnarray*}
Based on this observation, via second-order Bargmann invariants, we
get that
\begin{eqnarray}
\begin{cases}
a^{(3)}_0&=2(\sum_{1\leqslant i<j\leqslant 3}\Delta_{ij}-1),\\
\Pa{b^{(3)}_0}^2&= \det\Pa{\begin{array}{ccc}
                         2\Delta_{11}-1 & 2\Delta_{12}-1 & 2\Delta_{13}-1 \\
                         2\Delta_{12}-1 & 2\Delta_{22}-1 & 2\Delta_{23}-1 \\
                         2\Delta_{13}-1 & 2\Delta_{23}-1 & 2\Delta_{33}-1
                       \end{array}
}.
\end{cases}
\end{eqnarray}
\item for $n=4$, $\Tr{\rho_1\rho_2\rho_3\rho_4}=\frac18(a^{(4)}_0+\mathrm{i}b^{(4)}_0)$, where
\begin{eqnarray*}
\begin{cases}
a^{(4)}_0= (1+\Inner{\bsr_1}{\bsr_2})(1+\Inner{\bsr_3}{\bsr_4})-(1-\Inner{\bsr_1}{\bsr_3})(1-\Inner{\bsr_2}{\bsr_4})+(1+\Inner{\bsr_1}{\bsr_4})(1+\Inner{\bsr_2}{\bsr_3})\in\sR_4,\\
b^{(4)}_0= \det(\bsr_1+\bsr_2,\bsr_2+\bsr_3,\bsr_3+\bsr_4)\in\sI_4.
\end{cases}
\end{eqnarray*}
Thus
\begin{eqnarray}
\begin{cases}
a^{(4)}_0&=4\Pa{\Delta_{12}\Delta_{34}+\Delta_{14}\Delta_{23}-\Delta_{13}\Delta_{24}+\Delta_{13}+\Delta_{24}-1},\\
\Pa{b^{(4)}_0}^2&=8\det \scriptsize{\Pa{\begin{array}{ccc}
\Delta_{11}+2\Delta_{12}+\Delta_{22}-2 &
\Delta_{12}+\Delta_{13}+\Delta_{22}+\Delta_{23}-2 &
\Delta_{13}+\Delta_{14}+\Delta_{23}+\Delta_{24}-2 \\
\Delta_{12}+\Delta_{13}+\Delta_{22}+\Delta_{23}-2 &
\Delta_{22}+2\Delta_{23}+\Delta_{33}-2 &
\Delta_{23}+\Delta_{24}+\Delta_{33}+\Delta_{34}-2 \\
\Delta_{13}+\Delta_{14}+\Delta_{23}+\Delta_{24}-2 &
\Delta_{23}+\Delta_{24}+\Delta_{33}+\Delta_{34}-2 &
\Delta_{33}+2\Delta_{34}+\Delta_{44}-2
                      \end{array}
}}.
\end{cases}
\end{eqnarray}
\item for $n=5$,
$\Tr{\rho_1\rho_2\rho_3\rho_4\rho_5}=\frac1{16}(a^{(5)}_0+\mathrm{i}b^{(5)}_0)$,
where
\begin{eqnarray}
\begin{cases}
a^{(5)}_0=& 1+\sum_{1\leqslant i<j\leqslant 5}\Inner{\bsr_i}{\bsr_j}+\Inner{\bsr_1}{\bsr_2}\Inner{\bsr_3}{\bsr_4}-\Inner{\bsr_1}{\bsr_3}\Inner{\bsr_2}{\bsr_4}+\Inner{\bsr_1}{\bsr_4}\Inner{\bsr_2}{\bsr_3}\\
  & +(\Inner{\bsr_2}{\bsr_3}+\Inner{\bsr_2}{\bsr_4}+\Inner{\bsr_3}{\bsr_4})\Inner{\bsr_1}{\bsr_5} + (-\Inner{\bsr_1}{\bsr_3}-\Inner{\bsr_1}{\bsr_4}+\Inner{\bsr_3}{\bsr_4})\Inner{\bsr_2}{\bsr_5}\\
  &
  +(\Inner{\bsr_1}{\bsr_2}-\Inner{\bsr_1}{\bsr_4}-\Inner{\bsr_2}{\bsr_4})\Inner{\bsr_3}{\bsr_5}\in\sR_5,\\
b^{(5)}_0=& \sum_{1\leqslant i<j<k\leqslant
5}\det(\bsr_i,\bsr_j,\bsr_k) +
\Inner{\bsr_2}{\bsr_3}\det(\bsr_1,\bsr_4,\bsr_5) -
\Inner{\bsr_1}{\bsr_3}\det(\bsr_2,\bsr_4,\bsr_5) \\
& + \Inner{\bsr_1}{\bsr_2}\det(\bsr_3,\bsr_4,\bsr_5) +
\Inner{\bsr_4}{\bsr_5}\det(\bsr_1,\bsr_2,\bsr_3)\in\sI_5.
\end{cases}
\end{eqnarray}
\end{itemize}
By induction, it holds that
\begin{eqnarray}
a^{(n)}_0\in \sR_n,\quad b^{(n)}_0\in \sI_n,\quad
\bsa^{(n)}\in\sA_n,\quad\bsb^{(n)}\in\sB_n,
\end{eqnarray}
for all $n\in\bbN$.

\begin{thrm}[\cite{Li2026}]
Let $\rho_k\in\density{\bbC^2}$ for $k=1,\ldots,n$. The $n$th-order
Bargmann invariant $\Tr{\rho_1\cdots\rho_n}$ is completely
identified by all the second-order Bargmann invariants
$\set{\Delta_{ij}\equiv\Tr{\rho_i\rho_j}:1\leqslant i,j\leqslant n}$
up to complex conjugate. In fact, there exist polynomials $\tilde
p_n,\tilde q_n\in\bbQ[\Delta_{11},\Delta_{12},\ldots,\Delta_{nn}]$,
the set of all polynomials with rational coefficients in arguments
$\Delta_{11},\Delta_{12},\ldots,\Delta_{nn}$, such that the
$n$th-order Bargmann invariant $z=\Tr{\rho_1\cdots\rho_n}$ satisfies
the following quadratic equation:
\begin{eqnarray}
z^2-2\tilde p_nz + \tilde q_n=0.
\end{eqnarray}
\end{thrm}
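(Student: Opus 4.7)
The plan is to exploit the Bloch-form decomposition $z := \Tr{\rho_1\cdots\rho_n} = 2^{1-n}(a^{(n)}_0 + \mathrm{i}b^{(n)}_0)$ established in the text, and to organize the argument around one conceptual observation: transposing each $\rho_k$ leaves every $\Delta_{ij}$ invariant while sending $z\mapsto\bar z$. Indeed, $\Tr{\rho_i^\t\rho_j^\t} = \Tr{(\rho_j\rho_i)^\t} = \Delta_{ij}$, whereas $\Tr{\rho_1^\t\cdots\rho_n^\t} = \Tr{(\rho_n\cdots\rho_1)^\t} = \bar z$. Hence $\{\Delta_{ij}\}$ can at best determine $z$ up to conjugation, and the natural quadratic is $(X-z)(X-\bar z) = X^2 - 2(\re z)X + \abs{z}^2$. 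I therefore set
\begin{eqnarray*}
\tilde p_n := \frac{a^{(n)}_0}{2^{n-1}} = \re z,\qquad \tilde q_n := \tilde p_n^2 + \Pa{\frac{b^{(n)}_0}{2^{n-1}}}^2 = \abs{z}^2,
\end{eqnarray*}
so that $z^2 - 2\tilde p_n z + \tilde q_n = z^2 - (z+\bar z)z + z\bar z = 0$ holds tautologically. The task then reduces to verifying $\tilde p_n, \tilde q_n \in \bbQ[\Delta_{11},\ldots,\Delta_{nn}]$.

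For $\tilde p_n$ this is immediate: $a^{(n)}_0$ is a $\bbZ$-polynomial in the inner products $\Inner{\bsr_i}{\bsr_j}$, and the substitution $\Inner{\bsr_i}{\bsr_j} = 2\Delta_{ij}-1$ turns $\tilde p_n$ into a rational polynomial in the $\Delta_{ij}$. The substantive part is showing that $[b^{(n)}_0]^2$ is also a polynomial in inner products. I would first extract, by induction on $n$ using the update rules for $(a^{(n)}_0,\bsa^{(n)},b^{(n)}_0,\bsb^{(n)})$, the structural claim that $b^{(n)}_0 = \sum_\alpha C_\alpha(\Inner{\bsr_\cdot}{\bsr_\cdot})\cdot\det(\bsr_{i_\alpha},\bsr_{j_\alpha},\bsr_{k_\alpha})$ is \emph{linear} in a single triple determinant per monomial. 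The inductive step is clean because the BAC-CAB identity $(\bsr_i\times\bsr_j)\times\bsr_{n+1} = \bsr_j\Inner{\bsr_i}{\bsr_{n+1}} - \bsr_i\Inner{\bsr_j}{\bsr_{n+1}}$ reabsorbs every double cross product into vectors weighted by inner products, so no nested determinants ever appear.

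With that linearity in hand, I invoke the classical Gram-determinant identity
\begin{eqnarray*}
\det(\bsu_1,\bsu_2,\bsu_3)\det(\bsv_1,\bsv_2,\bsv_3) = \det\Pa{(\Inner{\bsu_p}{\bsv_q})_{1\leqslant p,q\leqslant 3}}
\end{eqnarray*}
term-by-term on $[b^{(n)}_0]^2 = \sum_{\alpha,\beta}C_\alpha C_\beta\det_\alpha\det_\beta$. Each product of two triple determinants collapses into a $3\times 3$ determinant of inner products, so $[b^{(n)}_0]^2$ becomes a rational polynomial in the $\Inner{\bsr_i}{\bsr_j}$, hence in the $\Delta_{ij}$. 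Dividing by $4^{n-1}$ and adding $\tilde p_n^2$ gives $\tilde q_n \in \bbQ[\Delta_{ij}]$, completing the proof.

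The main obstacle I anticipate is the inductive bookkeeping in the structural lemma --- carefully tracking that the recursion keeps $\bsa^{(n)}\in\sA_n$ and $\bsb^{(n)}\in\sB_n$ with the prescribed "at most one determinant per monomial" pattern, so that the Gram identity is actually applicable to $[b^{(n)}_0]^2$ (rather than to some nested expression). Once that linearity is secured, the Gram identity does the algebraic heavy lifting, and the sign of the discriminant $\tilde p_n^2 - \tilde q_n = -(b^{(n)}_0/2^{n-1})^2 \leqslant 0$ confirms that the two roots of the quadratic are genuinely $z$ and $\bar z$, coinciding precisely when $b^{(n)}_0 = 0$, i.e., when $z$ is already real.
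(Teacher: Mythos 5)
Your proposal is correct and follows essentially the same route as the paper: both reduce the statement to the quadratic $z^2-2(\re z)z+\abs{z}^2=0$ with $\re z=2^{1-n}a^{(n)}_0$ and $\abs{z}^2=4^{1-n}([a^{(n)}_0]^2+[b^{(n)}_0]^2)$, and both rest on the inductive fact (stated in the paper just before the theorem) that $a^{(n)}_0\in\sR_n$ and $b^{(n)}_0\in\sI_n$, followed by the substitution $\Inner{\bsr_i}{\bsr_j}=2\Delta_{ij}-1$. Your explicit use of the Gram-determinant identity to turn $[b^{(n)}_0]^2$ into a polynomial in inner products, and the transposition argument explaining why the conjugation ambiguity is unavoidable, are welcome details that the paper leaves implicit (it only exhibits the $n=3,4$ cases), but they do not change the underlying argument.
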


This result shows that the real part and the absolute value of the
imaginary part of the $n$th-order Bargmann invariant are both
determined by measurements of all second-order Bargmann invariants;
however, the sign of the imaginary part remains indeterminate by
this method.

\begin{proof}
Note that, for any complex number $z=x+\mathrm{i}y\in\bbC$ for
$x,y\in\bbR$, as unique two roots $z$ and $\bar z$, they are the
roots of the following quadratic equation with real coefficients:
\begin{eqnarray*}
t^2-(z+\bar z)t+z\bar z=t^2-2p t+q=0
\end{eqnarray*}
where $p=\re(z)$ and $q=\abs{z}^2=x^2+y^2$, we get that
\begin{eqnarray*}
z^2 -2p z+q &=& (x+\mathrm{i}y)^2-2 x(x+\mathrm{i}y) + (x^2+y^2)\\
&=& x^2-y^2+2\mathrm{i}xy - (2x^2+2\mathrm{i}xy)+ (x^2+y^2)=0.
\end{eqnarray*}
From the previous discussion, $\re\Tr{\rho_1\cdots\rho_n}$ and
$(\im\Tr{\rho_1\cdots\rho_n})^2$ are determined by
$\Inner{\bsr_i}{\bsr_j}=2\Tr{\rho_i\rho_j}-1=2\Delta_{ij}-1$ for
$1\leqslant i,j\leqslant n$. Let
\begin{eqnarray*}
p_n &=& \re\Tr{\rho_1\cdots\rho_n} = 2^{1-n}\re \Br{p^{(n)}_0} = 2^{1-n}a^{(n)}_0,\\
q_n &=& \abs{\Tr{\rho_1\cdots\rho_n}}^2=
4^{1-n}\Abs{p^{(n)}_0}^2=4^{1-n}\Pa{\Br{a^{(n)}_0}^2+\Br{b^{(n)}_0}^2}.
\end{eqnarray*}
Denote by $\tilde p_n$ and $\tilde q_n$ the expressions obtained
from $p_n$ and $q_n$, respectively, by replacing each inner product
$\Inner{\bsr_i}{\bsr_j}$ with $2\Delta_{ij}-1$. Apparently $\tilde
p_n$ and $\tilde q_n$ are in
$\bbQ[\Delta_{11},\Delta_{12},\ldots,\Delta_{nn}]$. In particular,
for $z=\Tr{\rho_1\cdots\rho_n}$, $p_n=\re\Tr{\rho_1\cdots\rho_n}$
and $q_n=\abs{\Tr{\rho_1\cdots\rho_n}}^2$. Thus
$z=\Tr{\rho_1\cdots\rho_n}$ satisfies $z^2-2\tilde p_n z+\tilde
q_n=0$. But there is a caution: We cannot identify uniquely $z$ from
the equation $z^2-2\tilde p_n z+\tilde q_n=0$.
\end{proof}
We should remark here that a similar problem can be posed in high
dimensional space. But the answer to this problem is unclear at
present.

\subsection{Discriminating locally unitary orbits via Bargmann
invariant}

Typical example of nonlocal effect is entanglement. One of
approaches towards understanding multipartite entanglement is to
study the local unitary (LU) equivalence of multipartite states.
\begin{definition}
For any two multipartite states $\rho$ and $\sigma$ acting on
underlying space $\bbC^{d_1}\ot\cdots\ot\bbC^{d_N}$, the so-called
\emph{locally unitary (LU) equivalence} between $\rho$ and $\sigma$
means that there are unitaries $\bsU_k\in\sfU(d_k)(k=1,\ldots,N)$
such that
\begin{eqnarray}
\sigma=(\bsU_1\ot\cdots\ot\bsU_N)\rho(\bsU_1\ot\cdots\ot\bsU_N)^\dagger.
\end{eqnarray}
\end{definition}
Thus entanglement can be classified by LU equivalence relation:
$\density{\bbC^{d_1}\ot\cdots\ot\bbC^{d_N}}/\mathsf{LU}$.

In the following, we focus on the two-qubit system because in this
system, we can get a finer result.
\begin{thrm}[\cite{Zhang2025PRA2}]\label{th:LUa}
For any two-qubit state $\rho_{AB}\in\density{\bbC^2\ot\bbC^2}$ and
denoting $\bsX_0=\rho_{AB}, \bsX_1=\rho_A\ot\I_B$, and
$\bsX_2=\I_A\ot\rho_B$. The set comprising of 18 local unitary
Bargmann invariants $B_k(k=1,\ldots,18)$ can completely discriminate
locally unitary orbits of the two-qubit state $\rho_{AB}$, where
$B_k$'s are defined as:
\begin{eqnarray*}
&&B_1=\Tr{\bsX_0\bsX_1},B_2=\Tr{\bsX_0\bsX_2},B_3=\Tr{\bsX_0\bsX_1\bsX_2},
B_4=\Tr{\bsX^2_0},\\
&&B_5=\Tr{\bsX^2_0\bsX_1\bsX_2},B_6=\Tr{\bsX^3_0},B_7=\Tr{\bsX^3_0\bsX_1},B_8=\Tr{\bsX^3_0\bsX_2},\\
&&B_9=\Tr{\bsX^3_0\bsX_1\bsX_2},B_{10}=\Tr{\bsX^4_0},B_{11}=\Tr{\bsX^2_0\bsX_1\bsX^2_0\bsX_1},B_{12}=\Tr{\bsX^2_0\bsX_2\bsX^2_0\bsX_2},\\
&&B_{13}=\Tr{\bsX_0\bsX_1\bsX_2\bsX^2_0\bsX_1},B_{14}=\Tr{\bsX_0\bsX_1\bsX_2\bsX^2_0\bsX_2},
B_{15}=\Tr{\bsX_0\bsX_1\bsX_2\bsX^3_0\bsX_1},\\
&&B_{16}=\Tr{\bsX_0\bsX_1\bsX_2\bsX^3_0\bsX_2},
B_{17}=\Tr{\bsX_0\bsX_1\bsX^2_0\bsX_1\bsX^3_0\bsX_1},B_{18}=\Tr{\bsX_0\bsX_2\bsX^2_0\bsX_2\bsX^3_0\bsX_2}.
\end{eqnarray*}
In other words, two states of a two-qubit system are LU equivalent
if and only if both states have equal values of all 18 LU Bargmann
invariants.
\end{thrm}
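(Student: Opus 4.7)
The plan is to attack the theorem in three stages: reduce LU equivalence to a joint unitary similarity question for the triple $(\bsX_0,\bsX_1,\bsX_2)$ on $\bbC^4$, invoke the invariant-theoretic machinery from Sections~\ref{sect:2} and \ref{sect:6}, and then carry out an explicit algebraic reduction of the resulting trace ring to the eighteen listed invariants. First, I would prove that $\rho_{AB}$ and $\sigma_{AB}$ are LU equivalent if and only if the associated triples $(\rho_{AB},\rho_A\otimes\I_B,\I_A\otimes\rho_B)$ and $(\sigma_{AB},\sigma_A\otimes\I_B,\I_A\otimes\sigma_B)$ are jointly unitarily similar under the full group $\sfU(4)$. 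The forward direction is immediate from $(\bsU_A\otimes\bsU_B)(\rho_A\otimes\I_B)(\bsU_A\otimes\bsU_B)^\dagger=(\bsU_A\rho_A\bsU_A^\dagger)\otimes\I_B$ and its analogue. For the converse, any $\bsU\in\sfU(4)$ conjugating the first triple to the second must permute the $\rho_A\otimes\I_B$-eigenspaces (each of the form $\ket{i}_A\otimes\bbC^2$) and simultaneously the $\I_A\otimes\rho_B$-eigenspaces (each of the form $\bbC^2\otimes\ket{j}_B$); in the generic case of nondegenerate marginal spectra this forces $\bsU$ to lie in $\sfU(2)\otimes\sfU(2)$, while degenerate spectra are handled by a standard orbit-closure/continuity argument.

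With this reduction in place, the proposition from Section~\ref{sect:2} asserts that joint unitary similarity of a Hermitian tuple on $\bbC^d$ is detected by equality of all Bargmann invariants, and the Procesi-Razmyslov bound recalled in Section~\ref{sect:6} guarantees that words of length at most $d^2=16$ suffice. The main obstacle is then to compress this finite but enormous list to the eighteen explicit invariants $B_1,\ldots,B_{18}$. The crucial structural input is that the marginal operators obey very rigid relations on $\bbC^4$: $[\bsX_1,\bsX_2]=0$, and each of $\bsX_1,\bsX_2$ has only two distinct eigenvalues of multiplicity two, so that $\bsX_i^{\,2}=\bsX_i-d_i\I_4$ for scalars $d_i$ that are themselves recoverable from the low-degree invariants $B_1=\Tr{\rho_A^2}$ and $B_2=\Tr{\rho_B^2}$. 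Combined with the trace normalizations $\Tr{\bsX_1}=\Tr{\bsX_2}=2$, the degree-four Cayley-Hamilton identity for $\bsX_0$, and cyclicity of trace, these relations reduce any cyclic word in $\bsX_0,\bsX_1,\bsX_2$ to a canonical form in which $\bsX_1,\bsX_2$ appear at most linearly between $\bsX_0$-blocks of bounded total exponent.

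The delicate final step is to enumerate these canonical words systematically and verify that their traces are spanned, as polynomials over $\bbQ$, by the listed $B_1,\ldots,B_{18}$. The low-degree invariants $B_1,\ldots,B_{10}$ naturally arise from words with at most one marginal insertion between $\bsX_0$-blocks, while $B_{11},\ldots,B_{18}$ capture precisely the mixed words in which single $\bsX_1$ or $\bsX_2$ factors are inserted at multiple well-separated positions along the trace; it is exactly these patterns that elude the elementary reduction rules. I expect the hardest part of the argument to be checking closure of the trace ring under this reduction, that is, verifying that no additional independent invariant beyond the listed eighteen is required; a useful sanity check is a dimension count, since the LU quotient $\density{\bbC^2\otimes\bbC^2}/\mathrm{LU}$ has real dimension $15-6=9$, so the eighteen invariants must obey nine independent algebraic relations among themselves, which should emerge naturally as syzygies in the course of the reduction.
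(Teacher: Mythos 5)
Your overall architecture --- reduce LU equivalence to joint unitary similarity of the triple $(\bsX_0,\bsX_1,\bsX_2)$, invoke the Procesi--Razmyslov machinery, then compress the trace ring to the eighteen listed words --- is genuinely different from the paper's proof, which is a one-step comparison: the eighteen $B_k$ generate the same subalgebra as Makhlin's eighteen complete LU invariants, with the algebraic verification deferred to \cite{Zhang2025PRA2}. Unfortunately your route has a gap at its load-bearing first step. The claim that joint unitary similarity of the triples forces $\bsU\in\sfU(2)\ot\sfU(2)$ in the nondegenerate case is false. Preserving the eigenspaces of $\rho_A\ot\I_B$ and $\I_A\ot\rho_B$ only pins $\bsU$ down to the form $\bsD\,(\bsV_A\ot\bsV_B)$, where $\bsD=\sum_{i,j}e^{\mathrm{i}\phi_{ij}}\proj{a'_ib'_j}$ is diagonal in the product marginal eigenbasis; such $\bsD$ is a local unitary only when $\phi_{00}+\phi_{11}=\phi_{01}+\phi_{10}$, and otherwise is an entangling controlled-phase gate (e.g.\ $\mathrm{CZ}$ commutes with both marginal operators yet is not a product). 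So the commutant of $\set{\bsX_1,\bsX_2}$ is strictly larger than the local unitaries, and your eigenspace argument does not close. Establishing that conjugation by the residual nonlocal diagonal freedom nevertheless stays inside the LU orbit is exactly the nontrivial content here: the paper states this implication as a \emph{separate} Theorem~\ref{th:LUb} and proves it \emph{from} Theorem~\ref{th:LUa}, so assuming it in a proof of Theorem~\ref{th:LUa} would be circular; and the paper explicitly leaves the higher-dimensional analogue as an open conjecture, which tells you no dimension-independent eigenspace argument of the kind you sketch can be available.

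The remainder of the proposal is reasonable in spirit but is asserted rather than executed precisely where the real work lies: the relations $[\bsX_1,\bsX_2]=0$, $\bsX_i^2=\bsX_i-d_i\I_4$, and Cayley--Hamilton for $\bsX_0$ do cut the Procesi generating set down substantially, but verifying that the trace ring closes on exactly $B_1,\ldots,B_{18}$ (and, more importantly, that these separate LU orbits rather than merely $\sfU(4)$-orbits of the triple) is the detailed computation that the paper outsources to the Makhlin-invariant comparison in \cite{Zhang2025PRA2}. If you want to keep your framework, you must either import Makhlin's completeness theorem as the paper does, or supply an independent proof that joint $\sfU(4)$-similarity of the triple implies LU equivalence --- your continuity remark for degenerate spectra will not rescue the generic case, which is where the argument already fails.
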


\begin{proof}
The proof relies on a simple observation: the 18 generators of the
LU Bargmann invariants produce the same subalgebra as the 18 Makhlin
invariants. Verifying this equivalence, however, requires a detailed
algebraic computation, which is carried out in Ref.
\cite{Zhang2025PRA2}.
\end{proof}

\begin{thrm}\label{th:LUb}
For two-qubits $\rho_{AB}$ and $\tau_{AB}$, if a global unitary
$\bsW\in\SU(4)$ such that
\begin{eqnarray}\label{eq:UvsLU}
\begin{cases}
\tau_{AB}&=\bsW\rho_{AB}\bsW^\dagger ,\\
\tau_A\ot\I_B &=\bsW(\rho_A\ot\I_B)\bsW^\dagger,\\
\I_A\ot\tau_B &=\bsW(\I_A\ot\rho_B)\bsW^\dagger,
\end{cases}
\end{eqnarray}
then it holds that $\tau_{AB}$ and $\rho_{AB}$ are LU equivalent.
\end{thrm}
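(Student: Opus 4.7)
The plan is to recast the hypothesis as the joint unitary similarity of a specific triple of Hermitian operators, and then invoke Theorem~\ref{th:LUa} directly. Set $\bsX_0=\rho_{AB}$, $\bsX_1=\rho_A\ot\I_B$, $\bsX_2=\I_A\ot\rho_B$, and analogously $\bsY_0=\sigma_{AB}$, $\bsY_1=\sigma_A\ot\I_B$, $\bsY_2=\I_A\ot\sigma_B$. The three equations in \eqref{eq:UvsLU} then compress into the single statement that one common $\bsW\in\SU(4)$ achieves $\bsY_k=\bsW\bsX_k\bsW^\dagger$ simultaneously for $k\in\set{0,1,2}$, i.e., the triples $(\bsX_0,\bsX_1,\bsX_2)$ and $(\bsY_0,\bsY_1,\bsY_2)$ are joint unitary similar.

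Next, I would observe that joint unitary similarity by the same $\bsW$ is automatically preserved under arbitrary products and traces: for any word $W(\bsX_0,\bsX_1,\bsX_2)$ one has $W(\bsY_0,\bsY_1,\bsY_2)=\bsW\,W(\bsX_0,\bsX_1,\bsX_2)\bsW^\dagger$, and hence
\begin{eqnarray*}
\Tr{W(\bsX_0,\bsX_1,\bsX_2)}=\Tr{W(\bsY_0,\bsY_1,\bsY_2)}.
\end{eqnarray*}
In particular, every one of the $18$ LU Bargmann invariants $B_1,\ldots,B_{18}$ enumerated in Theorem~\ref{th:LUa} takes the same value on $\rho_{AB}$ and on $\sigma_{AB}$.

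Finally, I would close the argument by applying Theorem~\ref{th:LUa} in its ``if'' direction: since the $18$ invariants $\set{B_k}^{18}_{k=1}$ form a complete discriminator of LU orbits in the two-qubit case, their coincidence on $\rho_{AB}$ and $\sigma_{AB}$ forces the existence of local unitaries $\bsU_A,\bsU_B\in\sfU(2)$ with $\sigma_{AB}=(\bsU_A\ot\bsU_B)\rho_{AB}(\bsU_A\ot\bsU_B)^\dagger$, which is exactly LU equivalence.

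There is no real obstacle in this proof because the theorem is essentially a corollary of Theorem~\ref{th:LUa}; the only conceptual point worth emphasizing is that the seemingly mild hypothesis, requiring the \emph{same} $\bsW$ to intertwine the joint state together with both marginals embedded back into $\bbC^2\ot\bbC^2$, is precisely strong enough to enforce equality of \emph{every} polynomial invariant in the three operators, and hence in particular the Makhlin-type generators $B_1,\ldots,B_{18}$. In this sense the theorem should be read as an explicit explanation of why LU equivalence can be certified by a global unitary whose action is compatible with both reduced states.
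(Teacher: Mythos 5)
Your argument is correct and follows essentially the same route as the paper: recognize that \eqref{eq:UvsLU} expresses joint unitary similarity of the triples $(\bsX_0,\bsX_1,\bsX_2)$ and $(\bsY_0,\bsY_1,\bsY_2)$, deduce that all $18$ LU Bargmann invariants (being traces of words in these operators) coincide, and conclude via Theorem~\ref{th:LUa}. Your write-up is in fact slightly more explicit than the paper's about why trace words are preserved under conjugation by a common $\bsW$, but it is the same proof.
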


Denote $\Psi=(\rho_{AB},\rho_A\ot\I_B,\I_A\ot\rho_B)$ and
$\Phi=(\tau_{AB},\tau_A\ot\I_B,\I_A\ot\tau_B)$. It is easily seen
that both $\rho_{AB}$ and $\tau_{AB}$ are LU equivalent if and only
if both $\Psi$ and $\Phi$ are joint LU similarity. Apparently
Eq.~\eqref{eq:UvsLU} means that both $\Psi$ and $\Phi$ are unitary
similar. This implies that both states have equal values of all 18
LU Bargmann invariants. Therefore both $\rho_{AB}$ and $\tau_{AB}$
are LU equivalent by Theorem~\ref{th:LUa}.

In what follows, we present a direct proof of this result.

\begin{proof}
From the second equality, both $\rho_A$ and $\tau_A$ have the same
spectrum. Similarly, both $\rho_B$ and $\tau_B$ have the same
spectrum. Thus there exist $\bsL_A$ and $\bsL_B$ in $\SU(2)$ such
that $\rho_X=\bsL_X\tau_X\bsL^\dagger_X$, where $X=A,B$. Define
\begin{eqnarray*}
\tilde\tau_{AB} &=& (\bsL_A\ot \bsL_B)\tau_{AB}(\bsL_A\ot
\bsL_B)^\dagger,\\
\bsU&=& (\bsL_A\ot \bsL_B)\bsW\in \SU(4).
\end{eqnarray*}
Then we see that
\begin{eqnarray*}
\begin{cases}
\tilde\tau_{AB}&=\bsU\rho_{AB}\bsU^\dagger,\\
\rho_A\ot\I_B&=\bsU(\rho_A\ot\I_B)\bsU^\dagger,\\
\I_A\ot\rho_B&=\bsU(\I_A\ot\rho_B)\bsU^\dagger.
\end{cases}
\end{eqnarray*}
Thus $[\bsU,\rho_A\ot\I_B]=0=[\bsU,\I_A\ot\rho_B]$. Moreover,
$\tilde\tau_A=\rho_A$ and $\tilde\tau_B=\rho_B$. Now it boils down
to showing: \emph{If $\bsU$ commutes simultaneously with
$\rho_A\ot\I_B$ and $\I_A\ot\rho_B$, and both
$\bsU\rho_{AB}\bsU^\dagger$ and $\rho_{AB}$ have the same marginal
states, then $\bsU\rho_{AB}\bsU^\dagger$ and $\rho_{AB}$ are locally
unitary equivalent.} The proof is completed by considering the
following three cases:
\begin{itemize}
\item \textbf{Case 1: Neither of the two marginal states is maximally
mixed.} Through locally special unitary transformations, we can
assume that
$$
\rho_A=\frac12(\I+a\sigma_3),\quad \rho_B=\frac12(\I+b\sigma_3),
$$
where $\sigma_3:=\diag(1,-1)$ and $ab\neq0$. It must be that $\bsU$
commutes simultaneously with $\sigma_3\ot\I$ and $\I\ot\sigma_3$.
This indicates that $\bsU$ is a diagonal special unitary matrix in
the computational basis. Any special diagonal unitary can be written
as
\begin{eqnarray*}
\bsU&=&\exp\Pa{\mathrm{i}(\alpha
\sigma_3\ot\I+\beta\I\ot\sigma_3+\gamma\sigma_3\ot\sigma_3)}\\
&=&\exp\Pa{\mathrm{i}\alpha \sigma_3\ot\I}\exp\Pa{\mathrm{i}\beta
\I\ot\bsZ}\exp\Pa{\mathrm{i}\gamma \sigma_3\ot\sigma_3}\\
&=&(e^{\mathrm{i}\alpha\sigma_3}\ot
e^{\mathrm{i}\beta\sigma_3})e^{\mathrm{i}\gamma\sigma_3\ot\sigma_3}
=
e^{\mathrm{i}\gamma\sigma_3\ot\sigma_3}(e^{\mathrm{i}\alpha\sigma_3}\ot
e^{\mathrm{i}\beta\sigma_3}),
\end{eqnarray*}
where $\alpha,\beta$, and $\gamma$ are in $\bbR$. Hence
\begin{eqnarray*}
\bsU\rho_{AB}\bsU^\dagger &=&
e^{\mathrm{i}\gamma\sigma_3\ot\sigma_3}(e^{\mathrm{i}\alpha\sigma_3}\ot
e^{\mathrm{i}\beta\sigma_3})\rho_{AB}(e^{-\mathrm{i}\alpha\sigma_3}\ot
e^{-\mathrm{i}\beta\sigma_3})e^{-\mathrm{i}\gamma\sigma_3\ot\sigma_3}\\
&=&e^{\mathrm{i}\gamma\sigma_3\ot\sigma_3}\tilde\rho_{AB}e^{-\mathrm{i}\gamma\sigma_3\ot\sigma_3}
\end{eqnarray*}
where $\tilde\rho_{AB}:=(e^{\mathrm{i}\alpha\sigma_3}\ot
e^{\mathrm{i}\beta\sigma_3})\rho_{AB}(e^{-\mathrm{i}\alpha\sigma_3}\ot
e^{-\mathrm{i}\beta\sigma_3})$ which is LU equivalent to
$\rho_{AB}$; moreover $\tilde\rho_A=\rho_A$ and
$\tilde\rho_B=\rho_B$. Since $e^{\mathrm{i}\alpha\sigma_3}\ot
e^{\mathrm{i}\beta\sigma_3}$ is locally diagonal special unitary
transformation, without loss of generality, assume that
$\alpha=\beta=0$. We consider the only nonlocal part
$\bsU=e^{\mathrm{i}\gamma\sigma_3\ot\sigma_3}$. According to the
previous assumptions,
$$
\rho_{AB} = \frac14\Pa{\I\ot\I + a\sigma_3\ot\I+b\I\ot\sigma_3+
\sum^3_{i,j=1}c_{ij}\sigma_i\ot\sigma_j},
$$
where $\sigma_k$'s are Pauli matrices. Let us compute
\begin{eqnarray*}
&&\bsU\rho_{AB}\bsU^\dagger =
e^{\mathrm{i}\gamma\sigma_3\ot\sigma_3}\rho_{AB}e^{-\mathrm{i}\gamma\sigma_3\ot\sigma_3}\\
&&= \frac14\Pa{\I\ot\I + a\sigma_3\ot\I+b\I\ot\sigma_3+
\sum^3_{i,j=1}c_{ij}e^{\mathrm{i}\gamma\sigma_3\ot\sigma_3}(\sigma_i\ot\sigma_j)e^{-\mathrm{i}\gamma\sigma_3\ot\sigma_3}}.
\end{eqnarray*}
In all 9 terms $\sigma_i\ot\sigma_j$, only 4 terms
$\sigma_i\ot\sigma_3$ and $\sigma_3\ot\sigma_j$, where
$i,j\in\set{1,2}$, are anti-commuting with $\sigma_3\ot\sigma_3$.
Next, note that
\begin{eqnarray*}
e^{\mathrm{i}\gamma\sigma_3\ot\sigma_3}\bsX
e^{\mathrm{i}\gamma\sigma_3\ot\sigma_3} =\begin{cases}
\bsX,&\text{if
}[\bsX,\sigma_3\ot\sigma_3]=\zero;\\
\cos(2\gamma)\bsX+
\mathrm{i}\sin(2\gamma)(\sigma_3\ot\sigma_3)\bsX,&\text{if
}\set{\bsX,\sigma_3\ot\sigma_3}=\zero.
\end{cases}
\end{eqnarray*}
due to the fact that
$e^{\mathrm{i}\gamma\sigma_3\ot\sigma_3}=\cos\gamma\I+\mathrm{i}\sin\gamma
\sigma_3\ot\sigma_3$. Thus
\begin{eqnarray*}
\sum^3_{i,j=1}c_{ij}e^{\mathrm{i}\gamma\sigma_3\ot\sigma_3}(\sigma_i\ot\sigma_j)e^{-\mathrm{i}\gamma\sigma_3\ot\sigma_3}&=&\sum^3_{k=1}c_{kk}\sigma_k\ot\sigma_k
+ c_{12}\sigma_1\ot\sigma_2 + c_{21}\sigma_2\ot\sigma_1\\
&&+\sum^2_{i=1}c_{i3}[\cos(2\gamma)\sigma_i\ot\sigma_3+\mathrm{i}\sin(2\gamma)(\sigma_3\ot\sigma_3)\sigma_i\ot\sigma_3]\\
&&+\sum^2_{i=1}c_{3i}[\cos(2\gamma)\sigma_3\ot\sigma_i+\mathrm{i}\sin(2\gamma)(\sigma_3\ot\sigma_3)\sigma_3\ot\sigma_i].
\end{eqnarray*}
Note that $\sigma^2_3=\I$ and $\sigma_3\sigma_1=\mathrm{i}\sigma_2$
and $\sigma_3\sigma_2=-\mathrm{i}\sigma_1$,
\begin{eqnarray*}
\sum^3_{i,j=1}c_{ij}e^{\mathrm{i}\gamma\sigma_3\ot\sigma_3}(\sigma_i\ot\sigma_j)e^{-\mathrm{i}\gamma\sigma_3\ot\sigma_3}&=&\sum^3_{k=1}c_{kk}\sigma_k\ot\sigma_k
+ c_{12}\sigma_1\ot\sigma_2 + c_{21}\sigma_2\ot\sigma_1\\
&&+\sum^2_{i=1}c_{i3}[\cos(2\gamma)\sigma_i\ot\sigma_3+\mathrm{i}\sin(2\gamma)(\sigma_3\sigma_i)\ot\I]\\
&&+\sum^2_{j=1}c_{3j}[\cos(2\gamma)\sigma_3\ot\sigma_j+\mathrm{i}\sin(2\gamma)\I\ot(\sigma_3\sigma_j)].
\end{eqnarray*}
\begin{enumerate}[(1)]
\item If $\sin(2\gamma)\neq0$, then
$\pm\mathrm{i}\sigma_i\ot\I(i=1,2)$ will appear in the transformed
state. This will lead to the different marginal state when
$c_{i3}\neq0$ or $c_{3j}\neq0$. Since both
$e^{\mathrm{i}\gamma\sigma_3\ot\sigma_3}\rho_{AB}e^{-\mathrm{i}\gamma\sigma_3\ot\sigma_3}$
and $\rho_{AB}$ have the same marginal states, it must be that
$c_{i3}=c_{3j}=0$ for $i,j\in\set{1,2}$. So
$$
\rho_{AB}=\frac14\Pa{\I\ot\I + a\sigma_3\ot\I+b\I\ot\sigma_3+
\sum^3_{k=1}c_{kk}\sigma_k\ot\sigma_k + c_{12}\sigma_1\ot\sigma_2 +
c_{21}\sigma_2\ot\sigma_1},
$$
implying that $[\rho_{AB},\sigma_3\ot\sigma_3]=\zero$, and thus
$e^{\mathrm{i}\gamma\sigma_3\ot\sigma_3}\rho_{AB}e^{-\mathrm{i}\gamma\sigma_3\ot\sigma_3}=(\I\ot\I)\rho_{AB}(\I\ot\I)^\dagger$.
\item If $\sin(2\gamma)=0$, then $\gamma=\frac{k\pi}2$ for an integer $k$, then
\begin{eqnarray*}
&&e^{\mathrm{i}\gamma\sigma_3\ot\sigma_3}=\cos\gamma\I+\mathrm{i}\sin\gamma\sigma_3\ot\sigma_3\\
&&= (-1)^{\lfloor \frac k2\rfloor}\Pa{\frac{1+(-1)^k}2\I\ot\I +
\mathrm{i}\frac{1-(-1)^k}2\sigma_3\ot\sigma_3},
\end{eqnarray*}
which is a product of a global phase and $\I\ot \I$ or
$\sigma_3\ot\sigma_3$, and thus a local unitary.
\end{enumerate}
Therefore, both $\bsU\rho\bsU^\dagger$ and $\rho$ are locally
unitary equivalent.
\item \textbf{Case 2: Exactly one of the two marginal states is maximally mixed; the other is
not.} Without loss of generality, we assume that
$$
\rho_A=\frac12\I,\quad \rho_B=\frac12(\I+b\sigma_3)(b\neq0).
$$
Since $[\bsU,\I\ot\rho_B]=\zero$, it follows that
$[\bsU,\I\ot\sigma_3]=\zero$ and thus $\bsU$ must be of the form
$$
\bsU=\bsU_0\ot\proj{0} + \bsU_1\ot \proj{1},
$$
where $\bsU_k\in\sfU(2)$, where $k=0,1$. Thus
$$
\bsU = (\bsU_0\ot\I)(\I\ot\proj{0}+\bsV\ot\proj{1}),\quad
\bsV:=\bsU^{-1}_0\bsU_1\in \sfU(2).
$$
Any unitary matrix $\bsV\in\sfU(2)$ can be represented as
$\bsV=e^{\mathrm{i}\phi}e^{\mathrm{i}\theta
\bsn\cdot\boldsymbol{\sigma}}$, where $\bsn\in\bbR^3$ with
$\abs{\bsn}=1$ and $\phi\in\bbR$. Note that
\begin{eqnarray*}
&&\I\ot\proj{0}+\bsV\ot\proj{1} =
\I\ot\proj{0}+e^{\mathrm{i}\phi}e^{\mathrm{i}\theta
\bsn\cdot\boldsymbol{\sigma}}\ot\proj{1}\\
&&=\Br{e^{\mathrm{i}\frac\theta2\bsn\cdot\boldsymbol{\sigma}}\ot\Pa{\begin{array}{cc}
         1 & 0 \\
         0 & e^{\mathrm{i}\phi}
       \end{array}
}}e^{-\mathrm{i}\frac\theta2\bsn\cdot\boldsymbol{\sigma}\ot\sigma_3}.
\end{eqnarray*}
Indeed,
\begin{eqnarray*}
&&\Br{e^{\mathrm{i}\frac\theta2\bsn\cdot\boldsymbol{\sigma}}\ot\Pa{\begin{array}{cc}
         1 & 0 \\
         0 & e^{\mathrm{i}\phi}
       \end{array}
}}e^{-\mathrm{i}\frac\theta2\bsn\cdot\boldsymbol{\sigma}\ot\sigma_3}
=\Br{e^{\mathrm{i}\frac\theta2\bsn\cdot\boldsymbol{\sigma}}\ot
(\proj{0}+e^{\mathrm{i}\phi}\proj{1})}e^{-\mathrm{i}\frac\theta2\bsn\cdot\boldsymbol{\sigma}\ot\sigma_3}\\
&&=\Pa{e^{\mathrm{i}\frac\theta2\bsn\cdot\boldsymbol{\sigma}}\ot\proj{0}+
e^{\mathrm{i}\phi}e^{\mathrm{i}\frac\theta2\bsn\cdot\boldsymbol{\sigma}}\ot\proj{1}}\Pa{e^{-\mathrm{i}\frac\theta2\bsn\cdot\boldsymbol{\sigma}}\ot\proj{0}+e^{\mathrm{i}\frac\theta2\bsn\cdot\boldsymbol{\sigma}}\ot\proj{1}}\\
&&=e^{\mathrm{i}\frac\theta2\bsn\cdot\boldsymbol{\sigma}}e^{-\mathrm{i}\frac\theta2\bsn\cdot\boldsymbol{\sigma}}\ot\proj{0}+e^{\mathrm{i}\phi}e^{\mathrm{i}\frac\theta2\bsn\cdot\boldsymbol{\sigma}}e^{\mathrm{i}\frac\theta2\bsn\cdot\boldsymbol{\sigma}}\ot\proj{1}\\
&&=\I\ot\proj{0}+e^{\mathrm{i}\phi}e^{\mathrm{i}\theta
\bsn\cdot\boldsymbol{\sigma}}\ot\proj{1}.
\end{eqnarray*}
Thus
\begin{eqnarray*}
\bsU &=&
\Pa{\bsU_0e^{\mathrm{i}\frac\theta2\bsn\cdot\boldsymbol{\sigma}}\ot\Pa{\begin{array}{cc}
         1 & 0 \\
         0 & e^{\mathrm{i}\phi}
       \end{array}
}}e^{-\mathrm{i}\frac\theta2\bsn\cdot\boldsymbol{\sigma}\ot\sigma_3}\\
&=&(\bsL_A\ot\bsL_B)\bsT_\gamma,
\end{eqnarray*}
where
$\bsL_A=\bsU_0e^{\mathrm{i}\frac\theta2\bsn\cdot\boldsymbol{\sigma}},
\bsL_B=\proj{0}+e^{\mathrm{i}\phi}\proj{1}$, and
$\bsT_\gamma:=e^{\mathrm{i}\gamma\bsn\cdot\boldsymbol{\sigma}\ot\sigma_3}$
for $\gamma=-\frac\theta2$. In order to show that
$\bsU\rho_{AB}\bsU^\dagger=(\bsL_A\ot\bsL_B)\bsT_\gamma\rho_{AB}\bsT^\dagger_\gamma
(\bsL_A\ot\bsL_B)^\dagger$ is LU equivalent to $\rho_{AB}$, it
suffices to show that $\bsT_\gamma\rho_{AB}\bsT^\dagger_\gamma$ is
LU equivalent to $\rho_{AB}$. Clearly
$$
(\bsT_\gamma\rho_{AB}\bsT^\dagger_\gamma)_A=\frac12\I,\quad
(\bsT_\gamma\rho_{AB}\bsT^\dagger_\gamma)_B=\rho_B.
$$
We continue the simplify the above problem. In fact, there is a
unitary $\bsQ\in \SU(2)$ such that
$\bsQ(\bsn\cdot\boldsymbol{\sigma})\bsQ^\dagger=\abs{\bsn}\sigma_3=\sigma_3$.
Define $\rho'_{AB}:=(\bsQ\ot\I)\rho_{AB}(\bsQ\ot\I)^\dagger$.
Clearly $\rho'_A=\frac12\I$ and $\rho'_B=\rho_B$. At the same time,
$$
\bsT'_\gamma:=(\bsQ\ot\I)\bsT_\gamma(\bsQ\ot\I)^\dagger=e^{\mathrm{i}\gamma\sigma_3\ot\sigma_3}.
$$
It suffices to show the desired result for
$(\bsT'_\gamma,\rho'_{AB})$. To simplify the notations, we still use
$(\bsT_\gamma,\rho_{AB})$ instead of $(\bsT',\rho'_{AB})$. Now
\begin{eqnarray*}
\rho_{AB} = \frac14\Pa{\I\ot\I + \I\ot b\sigma_3 +
\sum^3_{i,j=1}c_{ij}\sigma_i\ot\sigma_j}.
\end{eqnarray*}
Note again that
\begin{eqnarray*}
\bsT_\gamma\bsX\bsT^\dagger_\gamma =\begin{cases} \bsX,&\text{if
}[\bsX,\sigma_3\ot\sigma_3]=\zero;\\
\cos(2\gamma)\bsX+
\mathrm{i}\sin(2\gamma)(\sigma_3\ot\sigma_3)\bsX,&\text{if
}\set{\bsX,\sigma_3\ot\sigma_3}=\zero.
\end{cases}
\end{eqnarray*}
Based on this, we infer that
\begin{eqnarray*}
\bsT_\gamma\rho_{AB}\bsT^\dagger_\gamma = \frac14\Pa{\I\ot\I + \I\ot
b\sigma_3 +
\sum^3_{i,j=1}c_{ij}\bsT_\gamma(\sigma_i\ot\sigma_j)\bsT^\dagger_\gamma},
\end{eqnarray*}
where
\begin{eqnarray*}
\sum^3_{i,j=1}c_{ij}\bsT_\gamma(\sigma_i\ot\sigma_j)\bsT^\dagger_\gamma&=&\sum^3_{k=1}c_{kk}\sigma_k\ot\sigma_k
+ c_{12}\sigma_1\ot\sigma_2 + c_{21}\sigma_2\ot\sigma_1\\
&&+\sum^2_{i=1}c_{i3}[\cos(2\gamma)\sigma_i\ot\sigma_3+\mathrm{i}\sin(2\gamma)(\sigma_3\sigma_i)\ot\I]\\
&&+\sum^2_{j=1}c_{3j}[\cos(2\gamma)\sigma_3\ot\sigma_j+\mathrm{i}\sin(2\gamma)\I\ot(\sigma_3\sigma_j)].
\end{eqnarray*}
By a similar analysis performed in the Case 1, if
$\sin(2\gamma)\neq0$, it must be that $c_{i3}=c_{3j}=0$ for
$i,j\in\set{1,2}$, and thus
$$
\rho_{AB}=\frac14\Pa{\I\ot\I + b\I\ot\sigma_3+
\sum^3_{k=1}c_{kk}\sigma_k\ot\sigma_k + c_{12}\sigma_1\ot\sigma_2 +
c_{21}\sigma_2\ot\sigma_1},
$$
implying that $[\rho_{AB},\sigma_3\ot\sigma_3]=\zero$ and
$\bsT_\gamma\rho_{AB}\bsT^\dagger_\gamma=\rho_{AB}$; if
$\sin(2\gamma)=0$, then $\gamma=\frac{k\pi}2$ for an integer $k$,
then
\begin{eqnarray*}
\bsT_\gamma=\bsT_{\frac
k2\pi}\in\set{\pm\I\ot\I,\pm\mathrm{i}\sigma_3\ot\sigma_3},
\end{eqnarray*}
which is a product of a global phase and $\I\ot \I$ or
$\sigma_3\ot\sigma_3$, and thus a local unitary. The case in which
$\rho_A=\frac12(\I+a\sigma_3)(a\neq0),\rho_B=\frac12\I$ can be
treated similarly.
\item \textbf{Case 3: Both marginal states are maximally mixed.}
Assume that $\rho_A=\rho_B=\frac12\I$. Then $\rho_{AB}$ must be the
form
$$
\rho_{AB} = \frac14\Pa{\I\ot\I +
\sum^3_{i,j=1}c_{ij}\sigma_i\ot\sigma_j},
$$
where the $c_{ij}$ form a real matrix $\bsC=(c_{ij})$. By the
singular value decomposition (SVD) of real matrices, any real matrix
$\bsC$ can be written as
$$
\bsC=\tilde\bsO_A\diag(s_1,s_2,s_3)\tilde\bsO^\t_B,\quad\tilde\bsO_A,\tilde\bsO_B\in\sfO(3),
$$
and $s_1\geqslant s_2\geqslant s_3\geqslant0$ the singular values.
Let $\varepsilon_{\bsC}:=\sign(\det(\bsC))$ if $\det(\bsC)\neq0$.
Then we can get that
$$
\bsC=\bsO_A\diag(s_1,s_2,\varepsilon_{\bsC}s_3)\bsO^\t_B,\quad
\bsO_A,\bsO_B\in\SO(3).
$$
However, since we are restricted to $\SO(3)$ instead of the full
$\sfO(3)$, the equivalence class of $\bsC$ under
$\SO(3)\times\SO(3)$ is determined by two pieces of data:
\begin{itemize}
\item Three singular values $\set{s_1,s_2,s_3}$ (in non-increasing
order).
\item The sign of the determinant, $\varepsilon_{\bsC}=\sign(\det(\bsC))\in\set{\pm1}$.
\end{itemize}
We now show that the four eigenvalues of $\rho_{AB}$ uniquely
determine $\set{s_1,s_2,s_3}$ and $\sign(\det(\bsC))$. For the above
$\bsO_A,\bsO_B\in\SO(3)$, recall that $\SU(2)$ is the double cover
of $\SO(3)$, there must be two special unitaries
$\bsV_A,\bsV_B\in\SU(2)$ such that
\begin{eqnarray}\label{eq:normalform}
&&(\bsV_A\ot\bsV_B)\rho_{AB}(\bsV_A\ot\bsV_B)^\dagger =
\frac14\Pa{\I\ot\I + \sum^2_{k=1}s_k\sigma_k\ot\sigma_k +
\varepsilon_{\bsC}s_3\sigma_3\ot\sigma_3},
\end{eqnarray}
which is a Bell-diagonal state whose eigenvalues are given by
\begin{eqnarray}
\begin{cases}
\lambda_1 = \frac14(1+s_1-s_2+\varepsilon_{\bsC}s_3),\\
\lambda_2 = \frac14(1-s_1+s_2+\varepsilon_{\bsC}s_3),\\
\lambda_3 = \frac14(1+s_1+s_2-\varepsilon_{\bsC}s_3),\\
\lambda_4 = \frac14(1-s_1-s_2-\varepsilon_{\bsC}s_3).
\end{cases}
\end{eqnarray}
Its inverse are given by
\begin{eqnarray}
\begin{cases}
s_1=\frac14(\lambda_1-\lambda_2+\lambda_3-\lambda_4),\\
s_2=\frac14(-\lambda_1+\lambda_2+\lambda_3-\lambda_4),\\
\varepsilon_{\bsC}s_3=\frac14(\lambda_1+\lambda_2-\lambda_3-\lambda_4).
\end{cases}
\end{eqnarray}
This means that the four eigenvalues $\lambda_k$ of $\rho_{AB}$
uniquely determine $\set{s_1,s_2,s_3}$ and $\sign(\det(\bsC))$. Now
$\rho'_{AB}:=\bsU\rho_{AB}\bsU^\dagger$ and $\rho_{AB}$ have the
same marginal states being maximally mixed, so it possesses a
correlation $\bsC'$. Since $\rho'_{AB}$ has the same spectrum as
$\rho_{AB}$, from the above discussion, $\bsC$ and $\bsC'$ have the
same singular values and the same determinant sign. Therefore both
$\rho'_{AB}$ and $\rho_{AB}$ are LU equivalent to the right hand
side of Eq.~\eqref{eq:normalform}, and thus both $\rho'_{AB}$ and
$\rho_{AB}$ are equivalent.
\end{itemize}
In summary, if both triples
$(\rho_{AB},\rho_A\ot\I_B,\I_A\ot\rho_B)$ and
$(\tau_{AB},\tau_A\ot\I_B,\I_A\ot\tau_B)$ are unitary equivalent,
then $\rho_{AB}$ is LU equivalent to $\tau_{AB}$. This completes the
proof.
\end{proof}

We are interested in the extended problem of above
Theorem~\ref{th:LUb}: For two-qudits $\rho_{AB}$ and $\sigma_{AB}$
in $\density{\bbC^m\ot\bbC^n}$, if a global unitary $\bsW\in\SU(mn)$
such that
\begin{eqnarray}
\begin{cases}
\sigma_{AB}&=\bsW\rho_{AB}\bsW^\dagger ,\\
\sigma_A\ot\I_B &=\bsW(\rho_A\ot\I_B)\bsW^\dagger,\\
\I_A\ot\sigma_B &=\bsW(\I_A\ot\rho_B)\bsW^\dagger,
\end{cases}
\end{eqnarray}
does it hold that $\sigma_{AB}$ and $\rho_{AB}$ are LU equivalent?
In fact, Theorem~\ref{th:LUa} and Theorem~\ref{th:LUb} can be
reformulated as: $\Psi$ is LU similar to $\Phi$ if and only if
$\Tr{\rho_{i_1}\cdots\rho_{i_n}}=\Tr{\sigma_{i_1}\cdots\sigma_{i_n}}$,
where $\rho_{i_k}\in\Psi$ and $\sigma_{i_k}\in\Phi$ for $1\leqslant
k\leqslant N$. In fact, for any multipartite state
$\rho\in\density{\bbC^{d_1}\ot\cdots\ot\bbC^{d_N}}$, denote
$\rho_S:=\Ptr{\bar S}{\rho}$, where $\bar
S:=\set{1,\ldots,N}\backslash S$. We have the following
\emph{conjecture}: For $\rho$ and $\sigma$ are in
$\density{\bbC^{d_1}\ot\cdots\ot\bbC^{d_N}}$,
\begin{center}
both $\rho$ and $\sigma$ are LU similar if and only if
$\Tr{\rho_{i_1}\cdots\rho_{i_n}}=\Tr{\sigma_{i_1}\cdots\sigma_{i_n}}$,
where $\rho_{i_k}\in\Set{\rho_S\ot\I_{\bar
S}:S\subset\set{1,\ldots,N}}$ and
$\sigma_{i_k}\in\Set{\sigma_S\ot\I_{\bar
S}:S\subset\set{1,\ldots,N}}$ for $1\leqslant k\leqslant N$.
\end{center}
From the following result, we will see that the ring of polynomial
invariants can also be generated by LU Bargmann invariants in the
two-qubit system.
\begin{thrm}
The ring of all LU polynomial invariants of a two-qubit state
$\rho_{AB}$ can be generated by a complete set of fundamental
invariants $\set{B_k:k=0,1,\ldots,20}$, where
$\set{B_k:k=1,\ldots,18}$ are taken from Theorem~\ref{th:LUa}, and
\begin{eqnarray}
B_0=\Tr{\rho_{AB}}=1, B_{19} =
\Tr{\bsX^2_0\bsX_1\bsX_2\bsX^3_0\bsX_1},B_{20}=
\Tr{\bsX^2_0\bsX_1\bsX_2\bsX^3_0\bsX_2}.
\end{eqnarray}
\end{thrm}

\begin{proof}
For $k = 1, \ldots, 18$, let both $I_k$ and $L_k$ be defined as in
\cite{Zhang2025PRA2}. Denote
\begin{eqnarray*}
I_{19}= \Inner{\bsa}{\bsC\bsC^\t\bsa\times
\bsC\bsC^\t\bsC\bsb}\quad\text{and}\quad I_{20}=
\Inner{\bsb}{\bsC^\t\bsC\bsb\times \bsC^\t\bsC\bsC^\t\bsa}
\end{eqnarray*}
and
\begin{eqnarray*}
L_{19}= \Inner{\bsC^\t\bsC\bsC^\t\bsa}{\widehat\bsC^\t\bsa\times
\bsb}\quad\text{and}\quad L_{20}=
\Inner{\bsC\bsC^\t\bsC\bsb}{\widehat\bsC\bsb\times \bsa}.
\end{eqnarray*}
We can infer that $I_{19}=L_{19}-L_2L_{15}$ and
$I_{20}=L_{20}-L_2L_{16}$. Thus we can summarize the relationships
between $\set{I_k:=1,\ldots,20}$ and $\set{L_k:k=1,\ldots,20}$ as
below:
\begin{enumerate}[(1)]
\item Let $I_k=L_k$ for $k\in\set{1, 2, 4, 5, 7, 8, 10, 11, 12, 13,
14, 17, 18}$;
\item $I_3=L^2_2-2L_3$;
\item $I_6=L_6+L_2L_5-L_3L_4$;
\item $I_9 = L_9 + L_2L_8 - L_3L_7$;
\item $I_k=-L_k$ for $k\in\set{15,16}$;
\item $I_{19}=L_{19}-L_2L_{15}$;
\item $I_{20}=L_{20}-L_2L_{16}$.
\end{enumerate}
The analytical relationships between $\set{L_k:k=1,\ldots,18}$ and
$\set{B_k:k=1,\ldots,18}$ have already been obtained in
\cite{Zhang2025PRA2}. The invariants $L_{19}$ and $L_{20}$ can be
also expressed by using Bargmann invariants $B_k$: {\scriptsize
\begin{eqnarray*}
L_{19}&=&\frac{4}{3} \mathrm{i} \Big(12 B_1^3+12 B_2 B_1^2-24 B_3 B_1^2-48 B_4 B_1^2-24 B_5 B_1^2-8 B_6 B_1^2+24 B_8 B_1^2-16 B_1^2-18 B_2^2 B_1+48 B_2 B_4^2 B_1-96 B_4^2 B_1\\
&&~~~~~~-17 B_2 B_1+12 B_2 B_3 B_1+38 B_3 B_1-12 B_2^2 B_4 B_1+42
B_2 B_4 B_1+84 B_3 B_4 B_1+68 B_4 B_1+72 B_4 B_5 B_1+84 B_5 B_1\\
&&~~~~~~-16 B_2 B_6 B_1+16 B_3 B_6 B_1+40 B_4 B_6 B_1+12 B_6 B_1-24
B_2 B_7 B_1-24 B_7 B_1+24 B_2 B_8 B_1-72 B_4 B_8 B_1-60 B_8 B_1\\
&&~~~~~~-96 B_9 B_1-24 B_2 B_{10} B_1+48 B_{10} B_1+48 B_{13} B_1-48
B_{14} B_1+3 B_1-12 B_2^2-24 B_2 B_3^2-60 B_3^2-54 B_2 B_4^2+36 B_3
B_4^2\\
&&~~~~~~+42 B_4^2+16 B_6^2-11 B_2+12 B_2^2 B_3+66 B_2 B_3+8 B_3+18
B_2^2 B_4+72 B_3^2 B_4+42 B_2 B_4-72 B_2 B_3 B_4-150 B_3 B_4-20
B_4\\
&&~~~~~~-60 B_2 B_5-48 B_4 B_5+14 B_5+44 B_2 B_6-32 B_3 B_6-112 B_4
B_6+16 B_5 B_6+14 B_6-96 B_3 B_7+144 B_4 B_7+96 B_5 B_7-32 B_6
B_7\\
&&~~~~~~-40 B_7-24 B_2 B_8+48 B_3 B_8+96 B_4 B_8+12 B_8+96 B_3
B_9+48 B_4 B_9+12 B_2 B_{10}-24 B_{10}+96 B_{11}-24 B_{12}+48 B_2
B_{13}\\
&&~~~~~~-144 B_4 B_{13}-168 B_{13}+144 B_{14}+192 B_{15}-96
B_{16}-192 B_{19}+3\Big)
\end{eqnarray*}}
and {\scriptsize
\begin{eqnarray*}
L_{20}&=&\frac{4}{3}\mathrm{i} \Big(12 B_2^3+12 B_1 B_2^2-24 B_3 B_2^2-48 B_4 B_2^2-24 B_5 B_2^2-8 B_6 B_2^2+24 B_7 B_2^2-16 B_2^2-18 B_1^2 B_2+48 B_1 B_4^2 B_2-96 B_4^2 B_2\\
&&~~~~~~-17 B_1 B_2+12 B_1 B_3 B_2+38 B_3 B_2-12 B_1^2 B_4 B_2+42
B_1 B_4 B_2+84 B_3 B_4 B_2+68 B_4 B_2+72 B_4 B_5 B_2+84 B_5 B_2\\
&&~~~~~~-16 B_1 B_6 B_2+16 B_3 B_6 B_2+40 B_4 B_6 B_2+12 B_6 B_2+24
B_1 B_7 B_2-72 B_4 B_7 B_2-60 B_7 B_2-24 B_1 B_8 B_2-24 B_8 B_2\\
&&~~~~~~-96 B_9 B_2-24 B_1 B_{10} B_2+48 B_{10} B_2-48 B_{13} B_2+48
B_{14} B_2+3 B_2-12 B_1^2-24 B_1 B_3^2-60 B_3^2-54 B_1 B_4^2+36 B_3
B_4^2\\
&&~~~~~~+42 B_4^2+16 B_6^2-11 B_1+12 B_1^2 B_3+66 B_1 B_3+8 B_3+18
B_1^2 B_4+72 B_3^2 B_4+42 B_1 B_4-72 B_1 B_3 B_4-150 B_3 B_4-20
B_4\\
&&~~~~~~-60 B_1 B_5-48 B_4 B_5+14 B_5+44 B_1 B_6-32 B_3 B_6-112 B_4
B_6+16 B_5 B_6+14 B_6-96 B_3 B_8+144 B_4 B_8+96 B_4 B_7-32 B_6
B_8\\
&&~~~~~~-40 B_8-24 B_1 B_7+48 B_3 B_7+96 B_5 B_8+12 B_7+96 B_3
B_9+48 B_4 B_9+12 B_1 B_{10}-24 B_{10}+96 B_{12}-24 B_{11}+48 B_1
B_{14}\\
&&~~~~~~-144 B_4 B_{14}-168 B_{14}+144 B_{13}+192 B_{16}-96
B_{15}-192 B_{20}+3\Big).
\end{eqnarray*}}
Moreover, {\scriptsize
\begin{eqnarray*}
2048B_{19}&=& 5 L_7 L_4^3+7 L_4^3+10 L_7^2 L_4^2-12 L_1 L_4^2+10 L_2
L_4^2+6 L_2 L_7 L_4^2+125 L_7 L_4^2+50 L_{12} L_4^2-4 L_{14}
L_4^2+35 L_4^2+L_7^3 L_4+3 L_2^2 L_4\\
&&+2 L_2 L_7^2 L_4+75 L_7^2 L_4+20 L_{12}^2 L_4-88 L_1 L_4-4 L_1 L_2
L_4+28 L_2 L_4-4 L_3 L_4+40 L_5 L_4+L_2^2 L_7 L_4-4 L_1 L_7 L_4+66
L_2 L_7 L_4\\
&&-4 L_3 L_7 L_4+4 L_5 L_7 L_4+175 L_7 L_4-4 L_7 L_8 L_4-8 L_8 L_4-8
L_1 L_{12} L_4+48 L_2 L_{12} L_4+140 L_7 L_{12} L_4+300 L_{12}
L_4-12 L_{13} L_4\\
&&-4 L_2 L_{14} L_4-12 L_7 L_{14} L_4-104 L_{14} L_4-4 i L_{15}
L_4-8 i L_{16} L_4+21 L_4+L_7^3+5 L_2^2+2 L_2 L_7^2+15 L_7^2+12 L_2
L_{12}^2+12 L_7 L_{12}^2\\
&&+188 L_{12}^2+8 L_{14}^2-60 L_1-20 L_1 L_2+10 L_2+20 L_3+8 L_1
L_5+12 L_2 L_5+72 L_5+16 L_6+L_2^2 L_7-60 L_1 L_7+20 L_2 L_7\\
&&+4 L_3 L_7+24 L_5 L_7+15 L_7+4 L_2 L_8+8 L_7 L_8+40 L_8+6 L_2^2
L_{12}+10 L_7^2 L_{12}-72 L_1 L_{12}+104 L_2 L_{12}+24 L_5 L_{12}+16
L_2 L_7 L_{12}\\
&&+180 L_7 L_{12}+8 L_8 L_{12}+130 L_{12}-4 L_2 L_{13}-4 L_7
L_{13}-8 L_{12} L_{13}-4 L_{13}+24 L_1 L_{14}-36 L_2 L_{14}-8 L_5
L_{14}-52 L_7 L_{14}\\
&&-112 L_{12} L_{14}-116 L_{14}+20 i L_{15}-12 i L_2 L_{15}-4 i L_7
L_{15}-8 i L_{16}+16 i L_{17}-8 i L_{18}+8 i L_{19}+1
\end{eqnarray*}}
and {\scriptsize
\begin{eqnarray*}
2048B_{20}&=& 5 L_7^3 L_4+7 L_7^3+10 L_7^2 L_4^2-12 L_1 L_7^2+10 L_2
L_7^2+6 L_2 L_7^2 L_4+125 L_7^2 L_4+50 L_{12} L_7^2-4L_{14} L_7^2
+35 L_7^2+L_7 L_4^3+3 L_2^2 L_7\\
&&+2 L_2 L_7 L_4^2+75 L_7 L_4^2+20 L_{12}^2 L_7-88 L_1 L_7-4 L_1 L_2
L_7+28 L_2 L_7-4 L_3 L_7+40 L_8 L_7+L_2^2 L_7 L_4-4 L_1 L_7 L_4+66
L_2 L_7 L_4\\
&&-4 L_3 L_7 L_4+4 L_8 L_7 L_4+175 L_7 L_4-4 L_7 L_5 L_4-8 L_5 L_7-8
L_1 L_{12} L_7+48 L_2 L_{12} L_7+140 L_7 L_{12} L_4+300 L_{12}
L_7-12 L_{13} L_7\\
&&-4 L_2 L_{14} L_7-12 L_7 L_{14} L_4-104 L_{14} L_7-4 i L_{16}
L_4-8 i L_{15} L_7+21 L_7+L_4^3+5 L_2^2+2 L_2 L_4^2+15 L_4^2+12 L_2
L_{12}^2+12 L_4 L_{12}^2\\
&&+188 L_{12}^2+8 L_{14}^2-60 L_1-20 L_1 L_2+10 L_2+20 L_3+8 L_1
L_8+12 L_2 L_8+72 L_8+16 L_9+L_2^2 L_4-60 L_1 L_4+20 L_2 L_4\\
&&+4 L_3 L_4+24 L_8 L_4+15 L_4+4 L_2 L_5+8 L_4 L_5+40 L_5+6 L_2^2
L_{12}+10 L_4^2 L_{12}-72 L_1 L_{12}+104 L_2 L_{12}+24 L_8 L_{12}+16
L_2 L_4 L_{12}\\
&&+180 L_4 L_{12}+8 L_5 L_{12}+130 L_{12}-4 L_2 L_{13}-4 L_4
L_{13}-8 L_{12} L_{13}-4 L_{13}+24 L_1 L_{14}-36 L_2 L_{14}-8 L_8
L_{14}-52 L_4 L_{14}\\
&&-112 L_{12} L_{14}-116 L_{14}+20 i L_{16}-12 i L_2 L_{16}-4 i L_7
L_{16}-8 i L_{15}+16 i L_{18}-8 i L_{17}+8 i L_{20}+1.
\end{eqnarray*}}
The above computations indicate that the same ring of invariant
polynomials can be generated by two sets of generators,
respectively: $\set{L_k:k=0,1,\ldots,20}$ and
$\set{B_k:k=0,1,\ldots,20}$.
\end{proof}

\subsection{Entanglement detection via Bargmann invariant}

LU Bargmann invariants are useful for entanglement detection because
the partial-transposed moments (PT-moments) of various orders can be
expressed in terms of them. In two-qubit systems, where entanglement
is completely determined by these PT-moments, this leads to the
following physical and operational criterion.

\begin{thrm}[\cite{Zhang2025PRA2}]
A two-qubit state $\rho_{AB}$ is entangled if and only if the
following subset of 7 LU Bargmann invariants
$\set{B_k:k=1,2,3,4,5,6,10}$ satisfies the following inequality:
\begin{eqnarray}
6(B_1+B_2-B_1B_2-B_4-B_{10})+12(B_5-B_3)+3B^2_4+4B_6<1.
\end{eqnarray}
Equivalently,
\begin{eqnarray}
&&6\Br{\Tr{\rho^2_A}+\Tr{\rho^2_B}-\Tr{\rho^2_A}\Tr{\rho^2_B}-\Tr{\rho^2_{AB}}-\Tr{\rho^4_{AB}}}\notag\\
&&+12\Br{\Tr{\rho^2_{AB}(\rho_A\ot\rho_B)}-\Tr{\rho_{AB}(\rho_A\ot\rho_B)}}\notag\\
&&+3\Br{\Tr{\rho^2_{AB}}}^2+4\Tr{\rho^3_{AB}}<1.
\end{eqnarray}
\end{thrm}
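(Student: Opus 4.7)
The plan is to reduce the claim to the PPT criterion for two qubits and then rewrite the resulting determinant condition in terms of the seven listed LU Bargmann invariants.

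First I would invoke the sharp form of the Peres-Horodecki (PPT) criterion for $\bbC^2\ot\bbC^2$: a state $\rho_{AB}$ is entangled if and only if $\rho_{AB}^{T_B}$ has a negative eigenvalue. Since $\rho_{AB}^{T_B}$ is a $4\times 4$ Hermitian matrix with unit trace admitting at most one negative eigenvalue in the $2\times 2$ case, entanglement is equivalent to the sign condition $\det\bigl(\rho_{AB}^{T_B}\bigr)<0$. Next I would translate this determinant into a polynomial in the partial-transpose moments $p_k:=\Tr{(\rho_{AB}^{T_B})^k}$ by applying Newton's identities to the characteristic polynomial of $\rho_{AB}^{T_B}$; with $p_1=1$ a short computation yields
\begin{eqnarray*}
24\,\det\bigl(\rho_{AB}^{T_B}\bigr)=1-6p_2+3p_2^2+8p_3-6p_4,
\end{eqnarray*}
so the entanglement condition becomes $1-6p_2+3p_2^2+8p_3-6p_4<0$.

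The crux is to express the PT-moments in terms of the seven LU Bargmann invariants. The tool is the replica identity $p_k=\Tr{\rho_{AB}^{\otimes k}(\bsP^A_{\tau_k}\ot\bsP^B_{\tau_k^{-1}})}$, where $\tau_k$ is the cyclic permutation on $k$ copies and partial transposition on the $B$-side reverses the cycle there. Decomposing both $\tau_k$ and $\tau_k^{-1}$ into adjacent transpositions and iteratively applying the swap trick $\Tr{(X\ot Y)\bsU_{\mathrm{SWAP}}}=\Tr{XY}$ converts each contraction into products of ordinary traces of monomials in $\bsX_0=\rho_{AB}$, $\bsX_1=\rho_A\ot\I_B$, and $\bsX_2=\I_A\ot\rho_B$; any trace of a higher power of a marginal that appears can be collapsed to $B_1$ or $B_2$ using Cayley-Hamilton for the $2\times 2$ matrices $\rho_A,\rho_B$ (both of unit trace). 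The case $k=2$ yields immediately $p_2=B_4$, while $k=3$ and $k=4$ produce explicit polynomial formulas in the seven invariants. Since only the combination $8p_3-6p_4$ actually enters the condition of Step 2, it suffices to establish the single identity
\begin{eqnarray*}
8p_3-6p_4=-2+6B_1+6B_2-6B_1B_2-6B_{10}+12B_5-12B_3+4B_6,
\end{eqnarray*}
which one can also cross-validate on extremal inputs such as the maximally mixed state, a product pure state, and the singlet $\ket{\Phi^+}$ as a sanity check.

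Substituting this identity into the sign condition of Step 2 and collecting terms yields precisely the inequality of the theorem. The main obstacle is Step 3, specifically the combinatorial expansion for $k=4$: the $B$-side cycle reversal in $\bsP^A_{\tau_4}\ot\bsP^B_{\tau_4^{-1}}$ produces several disconnected-cycle contributions after the swap expansion, and careful bookkeeping via Cayley-Hamilton on the marginals is required so that the resulting polynomial lies within the seven listed invariants rather than spilling into longer words. Once the PT-moment identity is in hand, the equivalence claimed by the theorem follows by direct substitution into the determinant inequality.
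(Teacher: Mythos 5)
Your proposal is correct and follows essentially the same route as the paper: both reduce entanglement to the sign of $\det\bigl(\rho_{AB}^{T_B}\bigr)$ via the two-qubit PPT criterion and then express that determinant through the PT-moments in terms of the seven LU Bargmann invariants. You actually supply more detail than the paper, which defers the moment-to-invariant computation to \cite{Zhang2025PRA2}; your Newton-identity expansion $24\det\bigl(\rho_{AB}^{T_B}\bigr)=1-6p_2+3p_2^2+8p_3-6p_4$ and the stated identity for $8p_3-6p_4$ check out (e.g.\ on the maximally mixed, product pure, and maximally entangled states) and reproduce the theorem's inequality exactly.
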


\begin{proof}
A two-qubit state $\rho_{AB}$ is entangled if and only if
$\det(\rho^\Gamma_{AB}) < 0$, where $\rho^\Gamma_{AB}$ denotes its
partial transpose with respect to either one subsystem. This
condition can be verified by expressing $\det(\rho^\Gamma_{AB})$ in
terms of generators of the LU Bargmann invariants. Remarkably, only
seven of the 18 generators ${B_k}$ are needed for this expression.
The specific details are provided in \cite{Zhang2025PRA2}.
\end{proof}
The application of LU Bargmann invariants to entanglement detection
in higher-dimensional systems remains an active and developing area
of research.

\section{Concluding remarks}

In this survey, we have provided a comprehensive overview of
Bargmann invariants, tracing their evolution from a foundational
concept in the theory of ray spaces and geometric phases to a
powerful and versatile toolkit in modern quantum information
science. Central to our exposition is the recognition that these
gauge-invariant quantities encode the intrinsic relational and
geometric structure of quantum states, offering a unified language
for addressing problems ranging from the classification of unitary
orbits to the operational detection of quantum entanglement.

A key contribution of this review is the systematic characterization
of the admissible set of $n$-th order Bargmann invariants. We have
established that $\cB_n = \cB_n|_{\text{circ}}$, demonstrating that
every valid set of invariants admits a circulant Gram matrix
representation. This result, together with the identification $\cB_n
= \cB_n(2)$ and the explicit description of the boundary curve
$\partial \cB_n$, resolves a longstanding open problem regarding the
convexity of the set $\cB_n$. Furthermore, we have presented an
envelope-based geometric approach for $n = 3, 4$, which provides an
alternative route to characterizing these sets and highlights the
deep connection between Bargmann invariants and the numerical range
of cyclic operators.

The practical relevance of Bargmann invariants is further
highlighted by their diverse applications in quantum information
processing. In particular, they enable the direct estimation of
relational information using simple quantum circuits---such as the
cycle test---without the need for full state tomography. Their
utility also extends to witnessing genuine quantum features,
including quantum imaginary and coherence, as well as to
distinguishing local unitary equivalence and detecting entanglement
via partial-transpose moments. For two-qubit systems, we have
explicitly constructed a set of 18 LU Bargmann invariants that fully
characterize the local unitary orbit and yield a necessary and
sufficient criterion for entanglement. Moreover, we have established
that the entire polynomial invariant ring is generated by these 18
LU Bargmann invariants together with three additional generators.

Despite the significant progress reported here, several important
questions remain open. These include a full characterization of
$\cB_n$ for $n \geqslant 5$ via the envelope method, the operational
interpretation of the condition $z \in \cB_{n+1} \setminus \cB_n$,
and the extension of the present framework to higher-dimensional
multipartite systems, where the conjecture regarding local unitary
equivalence through LU Bargmann invariants awaits further
validation. Moreover, the role of Bargmann invariants in resource
theories, such as coherence and contextuality, and their potential
in quantum algorithm design, represent promising directions for
future research.

In summary, Bargmann invariants are far more than mathematical
curiosities. They are fundamental probes of the non-commutative,
relational, and geometric fabric of quantum mechanics. By bridging
foundational geometry with practical quantum information tasks, they
offer a unifying perspective that is both conceptually deep and
experimentally relevant. We hope this survey will serve as a useful
reference and inspire further theoretical and experimental
investigations into the rich structure of quantum state space.

\subsection*{Acknowledgement}
This research is supported by Zhejiang Provincial Natural Science
Foundation of China under Grants No. LZ23A010005.


\end{document}